\newtheorem*{rep@theorem}{\rep@title}
\newcommand{\newreptheorem}[2]{%
	\newenvironment{rep#1}[1]{%
		\def\rep@title{#2 \ref{##1}}%
		\begin{rep@theorem}}%
		{\end{rep@theorem}}}
\newcommand{\mydriver}{hypertex}
\renewcommand{\mydriver}{pdftex}
\theoremstyle{plain}
\newtheorem{theorem}{Theorem}[section]%[chapter]
\newtheorem{fact}[theorem]{Fact}
\newtheorem{lemma}[theorem]{Lemma}
\newtheorem{conjecture}[theorem]{Conjecture}
\newtheorem{definition}[theorem]{Definition}
\newcommand{\region}{\ensuremath{\mathcal{P}}}
\newcommand{\asc}{\mathrm{asc}}
\newcommand{\divi}{\textrm{div}}
\newcommand{\Ds}{\ensuremath{\mathcal{D}}}
\newcommand{\re}{\ensuremath{\mathrm{re}}}
\DeclareMathOperator{\mincut}{mincut}
\DeclareMathOperator{\capacity}{cap}
\newcommand{\energy}{\ensuremath{\mathcal{E}}}
\newcommand{\poly}{\ensuremath{\mathrm{poly}}}
\newcommand{\vect}[1]{\ensuremath{\mathbf{#1}}}
\newcommand{\mat}[1]{\ensuremath{\mathbf{#1}}}
\newcommand{\real}{\mathbb{R}}
\newcommand{\val}{\mathrm{val}}
\title{The Power of Vertex Sparsifiers in Dynamic Graph Algorithms\footnote{The research leading to these results has received funding from the
		European Research Council under the European Union's Seventh	Framework Programme (FP/2007-2013) / ERC Grant Agreement no. 340506.}}
\author{Gramoz Goranci\footnote{University of Vienna, Faculty of Computer Science, Vienna, Austria. E-mail: \texttt{gramoz.goranci@univie.ac.at}.}
	\and
	Monika Henzinger\footnote{University of Vienna, Faculty of Computer Science, Vienna, Austria. E-mail: \texttt{monika.henzinger@univie.ac.at}.}
	\and 
	Pan Peng\footnote{University of Vienna, Faculty of Computer Science, Vienna, Austria. E-mail: \texttt{pan.peng@univie.ac.at}.}
}
\date{}
\begin{document}

\begin{titlepage}
	\maketitle
	\thispagestyle{empty}

\begin{abstract}
We introduce a new algorithmic framework for designing dynamic graph algorithms in minor-free graphs, by exploiting the structure of such graphs and a tool called \emph{vertex sparsification}, which is a way to compress large graphs into small ones that well preserve relevant properties among a subset of vertices and has previously mainly been used in the design of approximation algorithms. 

Using this framework, we obtain a Monte Carlo randomized fully dynamic algorithm for $(1+\varepsilon)$-approximating the energy of electrical flows in $n$-vertex planar graphs with $\tilde{O}(r\varepsilon^{-2})$ worst-case update time and $\tilde{O}((r+\frac{n}{\sqrt{r}})\varepsilon^{-2})$ worst-case query time, for any $r$ larger than some constant. For $r=n^{2/3}$, this gives $\tilde{O}(n^{2/3}\varepsilon^{-2})$ update time and $\tilde{O}(n^{2/3}\varepsilon^{-2})$ query time. We also extend this algorithm to work for minor-free graphs with similar approximation and running time guarantees. Furthermore, we illustrate our framework on the all-pairs max flow and shortest path problems by giving corresponding dynamic algorithms in minor-free graphs with both sublinear update and query times. To the best of our knowledge, our results are the first to systematically establish such a connection between dynamic graph algorithms and vertex sparsification. 

We also present both upper bound and lower bound for maintaining the energy of electrical flows in the incremental subgraph model, where updates consist of only vertex activations, which might be of independent interest.
 \end{abstract}
\end{titlepage}
 
\section{Introduction}
A \emph{dynamic graph} is a graph that undergoes constant changes over time. Such changes or updates may correspond to inserting/deleting an edge or activating/deactivating a vertex from the graph. The goal of a  \emph{dynamic graph algorithm} is to maintain some property of a graph and support an intermixed sequence of update and query operations that can be processed quickly. In particular, the algorithm should at least beat the trivial one that recomputes the solution from scratch after each update. The last three decades have witnessed a large body of research on dynamic graph algorithms for a number of fundamental properties, including connectivity, minimum spanning tree, shortest path, matching and so on. Most of these problems have been considered in both general graphs as well as planar graphs, with quite different techniques and trade-offs between update and query times. In particular, many dynamic algorithms for planar graphs heavily depend on the duality of planar graphs~\cite{Subramanian93,KleinS98,ItalianoNSW11} and do not seem easily generalizable to a larger class of graphs, e.g., the family of minor-free graphs.

In this paper, we provide a new algorithmic framework for designing dynamic graph algorithms in \emph{minor-free} graphs that are free of a $K_t$-minor for any fixed integer $t\geq 1$, by utilizing a tool called \emph{vertex sparsification} as well as the structure of minor-free graphs. Vertex sparsification is a way of compressing large graphs into smaller ones that well preserve the relevant properties (e.g., cut, flow and distance information) among a subset of vertices (called \emph{terminals})~(e.g., \cite{Moitra09,andoni2014towards,KNZ14preserving}). Besides the natural motivation of achieving more space-efficient storage and obtaining faster algorithms on the reduced graphs, it has also found applications in the design of approximation algorithms~\cite{Moitra09}, network design and routing~\cite{chuzhoy2012routing}. We show that good quality and efficiently constructible vertex sparsifiers can be used to give efficient dynamic graph algorithms. To the best of our knowledge, our results are the first to systematically establish such a connection between dynamic graph algorithms and vertex sparsification. 

We illustrate our algorithmic framework on the \emph{all-pairs electrical flow, all-pairs max flow} and \emph{all-pairs shortest path} problems in minor-free graphs. Previously, there is no known dynamic algorithms for the first problem (even for special class of graphs), and for the second problem, we only know dynamic algorithms for planar graphs. %Due to space constraints, we focus on electrical flow in the conference version of the paper and give the results for max flow and shortest path in the full version. %Due to space constraints, we will focus on the electrical flows in the following. 

%
%We remark that our algorithmic framework for planar and minor-free graphs seems quite general and might be of independent interest. It shows that one can obtain a dynamic algorithm for a problem in minor-free graphs by maintaining the $r$-division of the graph and a corresponding ``sparsifier'' of each region, if such sparsifiers can be efficiently constructed in this class of graphs. As an example, we give a dynamic algorithm for $(2q-1)$-approximating all-pairs shortest path lengths in minor-free graphs with both sublinear update and query time, for $q\geq 4$ (see Appendix~\ref{sec:distance_minor} for details). 

%Our goal is to maintain a data structure that allows edge insertions and deletions and has fast update time and for any $s,t$, when querying the energy of the $s-t$ electrical flow (or equivalently, effective $s-t$ resistance), the value $R_G(s,t)$ (or its approximation) should be returned as fast as possible. 

%Define the data structure:
%
%For the edge version:
%\begin{itemize}
%\item Edge insertion
%\item Edge deletion
%\item Query($s,t$): $(1+\varepsilon)$-approximation of the energy of $s-t$ electrical flow.
%\end{itemize}
%For the node version: 
%\begin{itemize}
%	\item Node activation
%	\item Node deactivation
%	\item Query($s,t$): $(1+\varepsilon)$-approximation of the energy of $s-t$ electrical flow.
%\end{itemize}

\subsection{Our Results}
\subsubsection{Electrical Flow}

The \emph{electrical flow} problem is one of the most fundamental problems in electrical engineering and physics~\cite{doyle84}, and recently received increasing interest in computer science due to its close relation to linear equation solvers~\cite{SpielmanT14,KoutisMP14}, graph sparsification~\cite{SpielmanT04,SpielmanS11}, maximum flows (and minimum cuts)~\cite{ChristianoKMST11,LeeRS13,madry2013,KelnerLOS14,Madry16}. Slightly more formally, the $s-t$ electrical flow problem asks to find the flow (current) that minimizes the energy dissipation of a weighted graph when one unit of flow is injected at the source $s$ and extracted at the sink $t$.  

%, we are given an undirected graph whose edges have positive resistances and  two distinguished vertices $s$ and $t$. The goal is to compute the flow (current) that minimizes the energy dissipation of the graph when one unit of flow is injected at $s$ and extracted at $t$. This problem and its dual, the $s-t$ \emph{effective resistance} problem, are two of the most fundamental problems in electrical engineering and physics~\cite{doyle84}. 

In the dynamic \emph{all-pairs electrical flow} problem, our objective is to minimize the update time and the query time for outputting the exact (or approximate) energy of the $s-t$ electrical flow (see Section~\ref{sec:preliminaries} for formal definitions) in the current graph, for any two vertices $s,t$. %For the dynamic $(1+\varepsilon)$-approximation algorithm, both update time and query time are preferred to be sublinear, due to the existence of a nearly-linear time algorithm for the static version of the electrical flow problem~\cite{ChristianoKMST11}.
In the following, we will focus on \emph{fully dynamic} graphs, in which updates consist of both edge insertions and deletions. 
%Depending on the update operations we allow, such dynamic algorithms are typically classified into two main categories: (i) the edge-update version, for which the updates consist of edge insertions and/or deletions; (ii) the vertex-update version, also called the \emph{subgraph} model~\cite{FrigioniI00}, for which the updates consist of vertex activations and/or deactivations. For each version, we can further consider the following three cases: (i) \emph{fully dynamic}, if update operations consist of both insertions and deletions (or activations and deactivations), (ii) \emph{incremental}, if update operations consist of insertions (or activations) only and (iii) \emph{decremental}, if update operations consist of deletions (or deactivations) only.  
%One of the main characteristic of real-word graphs is that they are inherently dynamic, i.e., graphs undergo constant changes over time. In order to be able to handle such changes, we study \emph{dynamic algorithms} for graph problems. Given some graph $G$, the goal of a dynamic graph algorithm is to build a data-structure that maintains some property of $G$ and supports an intermixed sequence of update and query operations that can be processed quickly.
%In this paper, we consider dynamic algorithms for maintaining the energy of the $s-t$ eletrical flow in planar and minor-free graphs for both the edge-update version and the vertex-update version. In the edge-update version, we consider the fully dynamic setting in which the following operations are allowed:
Specifically, we allow the following operations:
\begin{itemize}
	\itemsep0.1em 
	\item \textsc{Insert}$(u,v,r)$: Insert the edge $(u,v)$ with resistance $r$ in $G$, provided that the new edge preserves the planarity (or minor-freeness) of $G$.
	\item \textsc{Delete}$(u,v)$: Delete the edge $(u,v)$ from $G$.
	\item \textsc{ElectricalFlow}$(s,t)$: Return the exact (or approximate) energy of the $s-t$ electrical flow in the current graph $G$. 
\end{itemize}
For a graph $G$ and two vertices $s,t$, we let $\energy_G(s,t)$ denote the energy of the $s-t$ electrical flow. For any $\alpha \geq 1$, we say that an algorithm is an $\alpha$-approximation to $\energy_G(s,t)$ if \textsc{ElectricalFlow}$(s,t)$ returns a positive number $k$ such that $\energy_G(s,t) \leq k \leq \alpha \cdot \energy_G(s,t)$. 

We present the first non-trivial fully dynamic algorithm for maintaining a $(1+\varepsilon)$-approximation to the energy of the $s-t$ electrical flow in planar and minor-free graphs. Our algorithm achieves both sublinear worst-case query and update times\footnote{Throughout the paper, we use $\tilde{O}(\cdot)$ to hide polylogarithmic factors, i.e., $\tilde{O}(f(n))=O(f(n)\cdot\poly\log f(n))$; ``with high probability'' refers to ``with probability at least $1-\frac{1}{n^{c}}$, for some $c>0$''.}.%We first present our result for planar graphs. %We let $U(\vect{w})$ denote the ratio between the maximum edge weight and the minimum edge weight.%Besides being a fundamental problem in the analysis of resistive networks, the electrical flow problem plays a remarkable role in the recent developments concerning fast algorithms for the max-flow problem for both undirected and directed graphs, where it is mainly used as a subroutine (see e.g., ~\cite{?}). 

\begin{theorem}~\label{thm:edge_worst} Fix $\varepsilon \in (0,1)$ and integer $t>0$. Let $r\geq c$ for some large constant $c>0$. Given a $K_t$-minor-free graph $G=(V,E,\vect{w})$ with positive edge weights, we can maintain a $(1+\varepsilon)$-approximation to the all-pairs electrical flow problem with high probability. The worst-case update time per operation is $\tilde{O}(\frac{n^\xi \cdot r}{\varepsilon^{2}})$ and the worst-case query time is $\tilde{O}((r+n/\sqrt{r})\varepsilon^{-2})$, for any constant $\xi>0$. Furthermore, if $G$ is planar, then $\xi$ can be chosen to be $0$.
\end{theorem}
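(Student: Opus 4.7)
The plan is to combine two ingredients: an $r$-division of the input graph (a partition into $O(n/r)$ regions of $O(r)$ vertices with $O(\sqrt{r})$ boundary each) together with approximate Schur complements that serve as vertex sparsifiers preserving electrical energy between boundary terminals to a $(1+\varepsilon)$ spectral factor. The key observation is that one small sparsifier per region makes the total sparsifier complexity scale as $\tilde{O}(n/\sqrt{r})$, while an edge update only affects the sparsifier of a single region of size $\tilde{O}(r)$; the query then reduces to a Laplacian solve on a graph of size $\tilde{O}(r+n/\sqrt{r})$, giving precisely the claimed trade-off.

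In the preprocessing stage, I compute an $r$-division $\mathcal{R}$ of $G$. For planar $G$ this is done in $\tilde{O}(n)$ time via Frederickson's recursive separator; for $K_t$-minor-free $G$, I apply the Alon--Seymour--Thomas $O(\sqrt{n})$-separator recursively, which yields regions of the same combinatorial quality but costs $n^\xi$ per recursive step, accounting for the $n^\xi$ factor in the theorem. For each region $R$, I build a $(1+\varepsilon)$-spectral approximate Schur complement $\tilde{R}$ onto $\partial R$ via the Durfee--Kyng--Peebles--Rao--Sachdeva construction in $\tilde{O}(r/\varepsilon^2)$ time; $\tilde{R}$ has only $\tilde{O}(\sqrt{r}/\varepsilon^2)$ edges. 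The union $H=\bigcup_R \tilde{R}$ is then a $(1+\varepsilon)$-spectral vertex sparsifier of $G$ on the joint terminal set $\bigcup_R \partial R$, using the fact that Schur complement approximation is preserved under gluing along disjoint interiors.

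For an update on an edge $e\in R$, I rebuild $\tilde{R}$ from scratch in $\tilde{O}(r/\varepsilon^2)$ time (times $n^\xi$ in the minor-free case) and patch $H$ accordingly; insertions that cross two regions or introduce new boundary vertices are absorbed into a small auxiliary edge set $E^\ast$, whose size is controlled by periodically rebuilding the $r$-division in the background. For a query \textsc{ElectricalFlow}$(s,t)$, I assemble the small graph $H_{s,t} := H \cup E^\ast \cup R_s \cup R_t$, where $R_s,R_t$ are the home regions of $s$ and $t$ kept in full. This graph has $\tilde{O}((r+n/\sqrt{r})/\varepsilon^2)$ edges, and by the composition property it satisfies $\energy_{H_{s,t}}(s,t) = (1\pm O(\varepsilon))\energy_G(s,t)$; a near-linear-time Laplacian solver then returns the approximate energy in the claimed query time, with a final rescaling of $\varepsilon$ to clean up constants.

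I expect the main obstacle to be twofold: first, controlling error propagation when composing $\Theta(n/r)$ Schur-complement sparsifiers, which relies on the monotonicity $\mathrm{Sc}(A,K)\preceq\mathrm{Sc}(B,K)$ whenever $A\preceq B$ and on regions being interior-disjoint so that terminals stitch cleanly; and second, turning amortized bounds into the worst-case bounds demanded by the theorem, which forces a deamortization scheme that maintains a shadow $r$-division and a shadow family of sparsifiers under construction, and that routes in-flight queries to the currently valid copy. Extending from planar to $K_t$-minor-free graphs is essentially mechanical once the Alon--Seymour--Thomas separator is in hand, contributing only the $n^\xi$ factor from separator construction while leaving the sparsifier machinery unchanged.
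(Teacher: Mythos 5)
Your proposal follows essentially the same route as the paper: maintain an $r$-division, build an approximate Schur complement of each region onto its boundary vertices using the Durfee et al.\ construction, answer a query by assembling the two home regions kept in full together with the other regions' sparsifiers and running a near-linear-time effective-resistance solver on that small graph, and convert amortized to worst-case bounds by global rebuilding of the $r$-division and the sparsifier family. The correctness argument you gesture at (Schur complements glue along interior-disjoint regions, and spectral sparsification decomposes over an edge partition) is exactly what the paper formalizes via its Lemma on exact Schur-complement decomposition together with the decomposability of spectral sparsifiers.

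Two small corrections worth flagging. First, for the minor-free case the $n^\xi$ overhead does not come from Alon--Seymour--Thomas: their separator algorithm costs roughly $O(n^{3/2})$, so recursing à la Frederickson gives an $r$-division in $O(n^{3/2})$ time and an amortized update cost of $\tilde{O}(\sqrt{n}\,r)$, not $\tilde{O}(n^\xi r)$. You need the Kawarabayashi--Reed $O(n^{1+\xi})$-time separator, which is what the paper cites. Second, your query graph $H_{s,t}:=H\cup E^\ast\cup R_s\cup R_t$ with $H=\bigcup_R\tilde R$ includes \emph{both} the home regions $R_s,R_t$ and their Schur-complement sparsifiers $\tilde R_s,\tilde R_t$; the resulting parallel structure strictly lowers effective resistances and breaks the $(1\pm\varepsilon)$ guarantee. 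The intended query graph is $R_s\cup R_t\cup\bigcup_{R\neq R_s,R_t}\tilde R$, i.e.\ the two home regions replace (rather than supplement) their sparsifiers.
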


%We remark that for the special case of planar graphs, $\xi$ can be chosen to be $0$. Furthermore,
Note that by setting $r=n^{2/3}$, we obtain a dynamic algorithm for planar graphs with worst-case %$\tilde{O}(n^{1/2}\varepsilon^{-2})$ update time and $\tilde{O}(n^{3/4}\varepsilon^{-2})$ query time; and with 
$\tilde{O}(n^{2/3}\varepsilon^{-2})$ update time and $\tilde{O}(n^{2/3}\varepsilon^{-2})$ query time. One may be tempted to reduce our problem to dynamically maintaining \emph{spectral sparsifiers}.
Despite the fact that such sparsifiers approximately preserve electrical flows and that a $(1\pm\varepsilon)$-spectral sparsifier can be maintained with amortized update time $\poly(\log n,\varepsilon^{-1})$~\cite{AbrahamDKKP16}, performing query operations on the sparsifier about the energy of $s-t$ electrical flow still requires $\Omega(n)$ time. 

We also give a dynamic algorithm for all-pairs electrical flow for minor-free graphs in the \emph{incremental subgraph} model, where the updates in the dynamic graph are a sequence of \emph{vertex activation} operations. Our algorithm maintains a $(1+\varepsilon)$-approximation of the energy of electrical flows in minor-free graphs with $\tilde{O}(r\varepsilon^{-2})$ \emph{amortized} update time and $\tilde{O}((r+n/\sqrt{r})\varepsilon^{-2})$ \emph{worst-case} query time, for any $r\geq c$ (see Theorem~\ref{thm:alg_subgraph}). For $r=n^{2/3}$, this gives $\tilde{O}(n^{2/3}\varepsilon^{-2})$ amortized update and worst-case query time. We complement this result by showing the following conditional lower bound (see Theorem~\ref{thm:subgraph_lower}): there is no incremental algorithm in the subgraph model that $C$-approximates the energy of the electrical flows in \emph{general} graphs with both $O(n^{1-\varepsilon})$ worst-case update time and $O(n^{2-\varepsilon})$ worst-case query time, for any $C>0$ and constant $\varepsilon>0$, unless the \emph{online matrix vector multiplication (oMv)} conjecture if false. Our results show a polynomial gap of dynamic algorithms for subgraph electrical flows between minor-free graphs and general graphs, conditioned on the oMv conjecture. These results might be of independent interest.% and the details are deferred to the full version, due to space constraints.

%
%We also show that our result can be generalized to minor-free graphs while achieving comparable guarantees, where  This is of particular interest since it is not known how to dynamically maintain max-flow in minor-free graphs (see~\cite{ItalianoNSW11} for dynamic max-flow in planar graphs).

%minor-free graphs:
%\begin{theorem}~\label{thm:minor_free}
%Let $\xi>0$ be any arbitrarily small constant, let $\varepsilon \in (0,1)$, and let $\rho$ be any real number such that $\xi\leq\rho\leq 1$. Given a minor-free graph $G=(V,E,\vect{w})$ with positive edge weights, we can maintain a $(1+\varepsilon)$-approximation to the all-pairs electrical flow problem with high probability. The worst-case update time is $\tilde{O}(n^{(1+\xi)/(1+\rho)}\varepsilon^{-2})$ and the worst-case query time is $\tilde{O}((n^{\frac{1+\xi}{1+\rho}}+n^{\frac{1+2\rho-\xi}{2+2\rho}})\varepsilon^{-2}\log (U(\vect{w})/\varepsilon))$.
%\end{theorem}

%subgraph model:

\subsubsection{Max Flow}
Our second result from our algorithmic framework is a dynamic algorithm for all-pairs max flows in minor free graphs. In this problem, the updates consist of both edge insertions and deletions similar to the case for the electrical flow problem, while an edge insertion here corresponds to inserting an edge with some weight rather than resistance. The query \textsc{MaxFlow}$(s,t)$ asks the exact (or approximate) $s-t$ max flow value in the current graph.
\begin{theorem}~\label{thm:maxflow_1}
	Let $t>0$ be a fixed integer. Let $r\geq c$ for some large constant $c>0$. Given a $K_t$-minor-free graph $G=(V,E,\vect{w})$ with positive edge weights, we can maintain a $O(1)$-approximation to the all-pairs max-flows problem. The worst-case update time is $\tilde{O}(n^\xi r+r^3)$ for any constant $\xi>0$, and the worst-case query time is $\tilde{O}(r+n/\sqrt{r})$. One can also maintain a $O(\log^4 n)$-approximation to the all-pairs max-flows problem, with worst-case update time $\tilde{O}(n^\xi r+r)$, and worst-case query time $\tilde{O}(r+n/\sqrt{r})$.
\end{theorem}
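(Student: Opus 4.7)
The plan is to follow the vertex-sparsifier-over-an-$r$-division template that the paper advertises for its framework, but instantiated with a \emph{flow/cut} sparsifier instead of an electrical one. First I would compute an $r$-division of $G$: a partition into $O(n/r)$ pieces of size $O(r)$ with $O(\sqrt{r})$ boundary vertices each, exploiting the $O(\sqrt{r})$-separator property of $K_t$-minor-free graphs. For planar $G$ this is computable in $\tilde{O}(n)$, and for general $K_t$-minor-free $G$ in $\tilde{O}(n^{1+\xi})$ by repeatedly applying a sublinear-separator algorithm; amortizing the rebuilds of this division against the $\tilde{O}(n^\xi r)$ budget per update via a standard batched-rebuild schedule will yield \emph{worst-case} (not amortized) update times.

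For each piece $P$, I would then build and store a vertex \emph{flow sparsifier} $H_P$ whose terminals are the boundary $\partial P$. For the $O(1)$-approximation variant I would use a quality-$O(1)$ flow sparsifier construction; in minor-free graphs $O(1)$-quality flow/cut vertex sparsifiers of size $\poly(|\partial P|) = \poly(\sqrt{r})$ are known, and they can be built in $\tilde{O}(r^3)$ time (dominated by a polynomial number of max-flow computations on the piece). For the $O(\log^4 n)$-approximation variant I would use a R\"acke-style hierarchical cut-based decomposition of the piece, which produces a tree/cluster sparsifier of quality $O(\log^4 n)$ (the logarithmic losses coming from Bartal/R\"acke-type probabilistic embeddings combined with oblivious routing) and which can be constructed in near-linear time $\tilde{O}(r)$ on a piece of size $r$. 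The globally maintained object is $H = \bigcup_P H_P$, whose total size is $O\!\left((n/r)\cdot\poly(\sqrt{r})\right) = \tilde{O}(n/\sqrt{r})$.

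On each edge update, the edge lies in $O(1)$ pieces, so it suffices to rebuild those $H_P$ in the time bounds above, plus pay the $\tilde{O}(n^\xi r)$ amortized-to-worst-case cost of the $r$-division. For a query $\textsc{MaxFlow}(s,t)$, I would form the auxiliary graph $G'$ consisting of the actual pieces $P_s,P_t$ containing $s$ and $t$ together with $H_P$ for every other piece $P$, so that $|G'| = \tilde{O}(r + n/\sqrt{r})$. Running an almost-linear-time approximate max-flow algorithm on $G'$ gives the claimed query time. Correctness relies on the standard composition argument for vertex sparsifiers: because pieces meet only at boundary vertices which are terminals of the corresponding $H_P$, any $s$--$t$ flow in $G$ can be rerouted through the sparsifiers of the pieces it traverses with a multiplicative loss equal to the worst per-piece quality, and conversely any flow in $G'$ lifts to a flow in $G$ within the same factor.

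The main obstacle, I expect, is twofold: first, choosing or assembling a vertex flow sparsifier for minor-free pieces that simultaneously achieves $O(1)$ quality, $\poly(\sqrt{r})$ size, and $\tilde{O}(r^3)$ construction time (the $r^3$ budget is enough for a polynomial number of max-flow calls, but the quality guarantee has to leverage minor-freeness of the piece rather than generic results which only give $O(\log k/\log\log k)$). Second, the composition lemma needs to be proven carefully in the flow setting rather than just the cut setting, since the sparsifiers live on disjoint terminal sets that share only the boundary vertices; the key point to verify is that flows whose routings cross several pieces can be decomposed piecewise at boundaries without inflating the approximation factor beyond the per-piece quality, which for cut sparsifiers follows from LP duality and for flow sparsifiers from a standard concatenation argument. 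Once these two ingredients are in place, both parts of the theorem follow by plugging the respective sparsifier into the framework.
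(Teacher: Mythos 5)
Your proposal follows essentially the same template as the paper: an $r$-division of the $K_t$-minor-free graph, a per-piece vertex sparsifier with terminals at the piece's boundary, a $\Theta(n/r)$-step rebuild schedule deamortized into worst-case time, and a query that splices the two native pieces $P_s,P_t$ with the sparsifiers of all other pieces and runs a near-linear-time approximate max-flow on that $\tilde{O}(r+n/\sqrt{r})$-size auxiliary graph. The paper commits to vertex \emph{cut} sparsifiers rather than flow sparsifiers, which sidesteps the flow-composition worry you raise via max-flow/min-cut duality: specifically, it invokes Moitra's $O(1)$-quality cut sparsifier for minor-free graphs ($\tilde{O}(km^{2})$ time, then compressed to $\tilde{O}(k)$ edges via an edge cut-sparsifier) for the first bound, and the R\"acke--Shah--T\"aubig / Peng $O(\log^{4}n)$-quality tree cut sparsifier ($\tilde{O}(m)$ time, $O(k)$ size) for the second, exactly as you anticipate. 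The composition step you flag as an obstacle is the paper's Lemma~\ref{lem:union_maxflow}, which shows that because the pieces are edge-disjoint and meet only at terminal boundary vertices, any minimum terminal cut decomposes piecewise, so the union of $q$-quality per-piece cut sparsifiers is itself a $q$-quality cut sparsifier for $G$ over the boundary terminals plus $\{s,t\}$; this is the ``LP-duality'' route you foresaw, made concrete. So there is no genuine gap — the two obstacles you identify are precisely the two lemmas the paper supplies.
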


Note that by setting $r\in \left(\poly\log n, o(n^{1/3})\right)$, we can maintain a $O(1)$-approximation to the all-pairs max flows problem in minor free graphs with both sublinear worst-case update and query times. By setting $r=n^{2/3}$, we can obtain $O(\log^4 n)$-approximation with worst-case $\tilde{O}(n^{\xi+2/3})$ update time and $\tilde{O}(n^{2/3})$ query time. 

We remark that Italiano et al.~\cite{ItalianoNSW11} have given a fully dynamic algorithm for \emph{exact} all-pairs max-flow in planar graphs with worst-case $\tilde{O}(n^{2/3})$ update and $\tilde{O}(n^{2/3})$ query time. Their algorithm is based on maintaining an edge decomposition (called \emph{$r$-division}) of the planar graph, which is similar to ours, while there are some substantial differences. First of all, their algorithm does not seem easily generalizable to minor-free graphs since it depends on the duality of planar graphs. Second, it is required that the embedding of the graph does not change throughout the sequence of updates~\cite{ItalianoNSW11}, which is not necessary in our algorithm. Third, though their algorithm can answer the exact max flow value with the aforementioned running time guarantee, it does not provide an update/query trade-off as ours. %as we show in the following theorem. In the following, a minor-free graph refers to any graph that is without a $K_t$-minor, for some constant $t>0$. The proof of the following theorem is deferred to Section~\ref{sec:extension_minor}.

%conditional lower bound:
\subsubsection{Shortest Path}
Our third result is a fully dynamic algorithm for all-pairs shortest paths in minor-free graphs. In this problem, the updates consist of edge insertions and deletions as above and the query \textsc{ShortestPath}$(s,t)$ asks the exact (or approximate) shortest path length between $s$ and $t$ in the current graph.
\begin{theorem}~\label{thm:shortest_path} Let $t,q\geq 1$. Given a $K_t$-minor-free graph $G=(V,E,\vect{w})$ with positive edge weights, we can maintain a $(2q-1)$-approximation to the all-pair shortest path problem. The worst-case expected update time is $\tilde{O}(n^{6/7})$ and the worst-case query time is $\tilde{O}(n^{\frac{6}{7}+\frac{3}{7q}})$.
\end{theorem}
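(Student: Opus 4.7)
The plan is to instantiate the paper's vertex-sparsification framework for distances. First, since $G$ excludes $K_t$ as a minor, the Alon--Seymour--Thomas separator theorem yields an $r$-division of $G$: a decomposition into $O(n/r)$ edge-disjoint pieces $\{R_i\}$, each on at most $r$ vertices and containing at most $O(\sqrt{r})$ boundary vertices; writing $B=\bigcup_i\partial R_i$, we have $|B|=O(n/\sqrt{r})$. I would maintain this decomposition with standard rebuild-on-overflow so that its amortized cost is dominated by the other per-update terms.

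Second, for each piece $R_i$ I would maintain the \emph{distance vertex sparsifier} $K_i$, the complete weighted graph on $\partial R_i$ whose edge weights are the pairwise shortest-path distances $d_{R_i}(u,v)$, and set $H=\bigcup_i K_i$. A standard path-decomposition argument gives $d_H(u,v)=d_G(u,v)$ for every $u,v\in B$, while $|V(H)|=O(n/\sqrt{r})$ and $|E(H)|=O(n)$. Any edge update in $G$ lies inside a single piece $R_i$, so rebuilding $K_i$ reduces to $O(\sqrt{r})$ single-source Dijkstras inside $R_i$ at total cost $\tilde{O}(r^{3/2})$, modifying $O(r)$ edges of $H$.

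Third, on top of $H$ I would maintain a Thorup--Zwick $(2q-1)$-approximate distance oracle; its randomized construction is what yields the ``expected'' qualifier in the theorem, with preprocessing time $\tilde{O}(qn(n/\sqrt{r})^{1/q})$ and $O(q)$-time queries. To convert the amortized rebuild cost into a worst-case guarantee I would use standard double-buffering: two oracle copies are kept in parallel, with one being incrementally rebuilt over a batch of $T$ updates while the other serves queries, and a secondary buffer records the $O(Tr)$ edges of $H$ that have changed since the active oracle was built. For a query \textsc{ShortestPath}$(s,t)$, run local Dijkstras from $s$ in $R_s$ and from $t$ in $R_t$ to obtain exact distances to $\partial R_s$ and $\partial R_t$, enumerate all $O(r)$ pairs $(u,v)\in\partial R_s\times\partial R_t$, query the oracle for an approximation of $d_H(u,v)$, and return $\min_{u,v}(d_{R_s}(s,u)+d_H(u,v)+d_{R_t}(v,t))$, routing these lookups through the buffer of pending edges to account for changes not yet absorbed into the active oracle. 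The stretch $(2q-1)$ then follows because $H$ preserves $B$-distances exactly and the oracle has stretch $(2q-1)$.

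Balancing $r$ and $T$ so that $\tilde{O}(r^{3/2})$ and the amortized rebuild cost $\tilde{O}(qn(n/\sqrt{r})^{1/q}/T)$ both equal $\tilde{O}(n^{6/7})$ fixes $r\approx n^{4/7}$ and determines the batch size; the query then pays $\tilde{O}(r+rq)$ for local Dijkstra plus oracle lookups together with a buffer-routing overhead scaling as $n^{3/(7q)}$, yielding the claimed $\tilde{O}(n^{6/7+3/(7q)})$ bound. The hard part is this buffer-aware query: one must argue that only a Thorup--Zwick-style ``pivot''-sized subset of the unflushed changes needs to be consulted to preserve the $(2q-1)$ stretch, rather than all $O(Tr)$ pending edge modifications, so that the oracle's sublinear query advantage survives the deamortization.
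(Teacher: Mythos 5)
Your proposal takes a genuinely different route from the paper, and the divergence both adds machinery and opens a gap. The paper's query algorithm does not build a global distance oracle at all: it applies a $(2q-1)$\emph{-spanner} \emph{per region} to the boundary-distance clique (Lemma~\ref{lemm: VertexDistanceSpars}), so the union graph $H=P_s\cup P_t\cup\bigcup_{i\in\mathcal{J}}\widetilde{H}_i$ has only $\tilde O\bigl(\tfrac{n}{r}\,q\,b^{1+1/q}+r\bigr)$ edges, and a query is answered by simply running single-source shortest path from $s$ in $H$. With $r=n^{4/7}$ and $b=n^{3/7}$ this directly yields the $\tilde O\bigl(qn^{6/7+3/(7q)}\bigr)$ query time. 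Deamortization is then the same global-rebuilding trick already used for electrical flows in Section~\ref{app: worstCase}: two copies of the \emph{entire} $r$-division-plus-sparsifiers data structure are maintained, and since the query recombines the current per-region sparsifiers on the fly, the freshly rebuilt copy is always consistent with the current graph; no ``pending change'' bookkeeping is needed at query time.

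Your version instead keeps the exact distance cliques $K_i$ on each boundary and layers a Thorup--Zwick $(2q-1)$-oracle over the union $H=\bigcup_i K_i$, with double-buffering for the oracle and a ``buffer-aware'' query to route around unflushed edge changes. That last step is a real gap, not a technicality. The TZ oracle is a monolithic global structure, so after $T$ updates its ground-truth graph has $O(Tr)$ stale edges; you would need to argue that a query can still achieve stretch $2q-1$ while consulting only a pivot-sized subset of those, and this is not a known property of TZ oracles. Moreover the sizing does not obviously close: balancing preprocessing forces $T\approx n^{1/7+5/(7q)}$, so the pending-edge set has size $O(Tr)=O(n^{5/7+5/(7q)})$, which for $q=1$ is $n^{10/7}$ and already exceeds your claimed query bound $n^{9/7}$, so even a brute-force buffer scan would blow the budget. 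Finally your explicit query costs (local Dijkstras plus $O(r)$ oracle lookups of cost $O(q)$ each) only amount to $\tilde O(qn^{4/7})$, well under the claimed $\tilde O(n^{6/7+3/(7q)})$; the gap is entirely carried by the unproven buffer overhead term. In short, the exact-clique-plus-oracle idea is plausible in spirit, but the paper's per-region spanner design avoids the deamortization difficulty entirely, and you have not supplied the argument your version needs.
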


Note that for $q\geq 4$, both update time and query time of the above algorithm will be sublinear. We remark that for the special case of planar graphs, the above running time and approximation guarantee are worse than the result of Abraham et al.~\cite{AbrahamCG12}, who gave a fully dynamic $(1+\varepsilon)$-approximation for planar shortest path problem with worst-case $\tilde{O}(\sqrt{n})$ update and query time. %Their algorithm can be generalized to graphs of bounded doubling dimension, but 
However, it is unclear how to generalize their algorithm to minor-free graphs. There are also works on fully dynamic all-pairs shortest path in general graphs~(e.g., \cite{Ber09:fully,ACT14:fully}), for which there is no known algorithm with non-trivial worst-case update time that breaks $O(n)$ barrier.

\paragraph{Remark.} We want to point it out that all the above results might be generalized to a larger class of graphs that admit efficiently constructible good separators, while our main focus is to bring up this new algorithmic framework. %Due to space constraints, the proofs of Theorem~\ref{thm:maxflow_1} and~\ref{thm:shortest_path} are deferred to the full version.

\subsection{Our Techniques}
Our fully dynamic algorithms in planar and minor-free graphs combine the ideas of maintaining an edge decomposition of the current graph $G$ and approximately preserving the relevant properties or quantities by smaller ``substitutes'', which allow us to operate on a small piece of the graph during each update (in the amortized sense) and significantly reduce the size of the query graph that well preserves the property of $G$. These ``substitutes'' refer to the vertex sparsifiers for the corresponding properties. %energy of the electrical flow for any two vertices of interest, respectively. 

Such an edge decomposition is called $r$-division~\cite{Fed87:shortest}. Given some graph $G$ and a parameter $r$, we partition $G$ into a collection of  $O(n/r)$ edge disjoint subgraphs (called \emph{regions}), each contains at most $O(r)$ vertices. This induces a partitioning of the vertex set into \emph{interior} vertices (those that are incident only to vertices within the same region) and \emph{boundary} vertices (those that are incident to vertices in different regions). In addition, we ensure that the total number of boundary vertices is $O(n/\sqrt{r})$. 
Maintaining an $r$-division has also been used in some previous dynamic algorithms for planar graphs~\cite{Subramanian93, KleinS98, GalilIS99, ItalianoNSW11}.

Now a key observation is that for any $s,t$, by removing all the interior vertices from other regions that do not contain $s,t$ and adding some edges with appropriate weights among boundary vertices, one can guarantee that the resulting graph exactly preserves the quantities between $s,t$ (e.g., the energy of $s-t$ electrical flow, the value of $s-t$ max flow). Now let us elaborate on the electrical flow problem, for which the aforementioned reduction is called \emph{Schur Complement}. The problem of performing such a Schur Complement on a region is that it is very time-consuming as it adds too many edges among boundary vertices. Instead, we resort to a recent tool called approximate Schur Complement~(\cite{DurfeeKPRS16}; see Section \ref{sec: SchurComplement}), which well approximates the pairwise effective resistances among boundary vertices and also gives a sparse graph (or a substitute) induced by all boundary vertices. Now for an update, we only recompute a constant number of such substitutes (and we need to periodically rebuild the data structure); for a query, we take the small graph defined by choosing appropriate regions and substitutes, and answer the query according to the $s-t$ effective resistance on this small graph. Since such a substitute can be computed very fast and is sparse, we are ensured to obtain sublinear amortized update time and worst-case query time. Using a global rebuilding technique, we show that one can also achieve worst-case update time.

Our approach differs from the previous dynamic planar graph algorithms in that the $r$-division we use does not require that the boundary of each region contains a constant number of faces or the duality of planar graphs, since we only need to maintain the $r$-division and fast compute the approximate Schur Complement. 

Such an approximate Schur Complement can be viewed as a vertex \emph{spectral/resistance sparsifier} by treating boundary vertices as terminals. To obtain dynamic algorithms for the all-pairs max flows (resp., shortest paths) problems, we can use vertex cut sparsifiers (resp., distance sparsifiers), which well preserve the values of minimum cut separating any subset of terminals (resp., the distances among all terminal pairs).

%, and does not depend on the duality of planar graphs, we can easily generalize it to minor-free graphs.

%the energy of $s-t$ electrical flow does not depend on the 
%
%  of the current graph into regions, where each region is replaced by a smaller substitute. Similar ideas were previously used to dynamically maintain other properties on planar graphs~\cite{Subramanian93, KleinS98, GalilIS99, ItalianoNSW11}, but it was not clear whether they could be applied to the electrical flow problem.  
%
%Our approach follows ideas from Frederickson~\cite{Frederickson87}. 

%We refer to such decomposition as $r$-division. Now, we replace each region by a smaller substitute only on the boundary vertices, while approximately preserving pairwise effective resistances among boundary vertices. This is achieved by a recent fast algorithm for producing Approximate Schur Complements~\cite{DurfeeKPRS16}, which in turn can be viewed as vertex resistance sparsifiers (Section \ref{sec: SchurComplement}). We show that taking the union over all sparse substitutes is a good approximation to the electrical flow in the current graph, and thus the approximation error from different regions does not propagate (Section \ref{sec: fullyPlanarEdge}). 

%As the graph undergoes changes, the crucial observation is that each update can affect only a constant number of regions and boundary nodes. Therefore, we need to periodically rebuild our data-structure, which leads to only amortized update time. 

\subsection{Related Work}
In the static setting, the electrical flow problem amounts to solving a system of linear equations, where the underlying matrix is a Laplacian (see the monograph of Doyle and Snell~\cite{doyle84}). Christiano et al.~\cite{ChristianoKMST11} used the electrical flow computation as a subroutine within the multiplicative-weights update framework~\cite{AroraHK12}, to obtain the breakthrough result of $(1-\varepsilon)$-approximating the undirected maximum $s-t$ flow (and the minimum $s-t$ cut) in $\tilde{O}(mn^{1/3}\varepsilon^{-11/3})$ time.
%that by computing electrical flows as a subroutine and combining it with the multiplicative-weights update framework~\cite{AroraHK12}, Christinao et al.~\cite{ChristianoKMST11} obtained the breakthrough result of $(1+\varepsilon)$-approximating the undirected maximum $s-t$ flow (and the minimum $s-t$ cut) in $\tilde{O}(mn^{1/3}\varepsilon^{-11/3})$ time. 
This has inspired and led to further development of fast algorithms for approximating $s-t$ maximum flow, which culminated in an $\tilde{O}(m)$ time algorithm for this problem in undirected graphs~\cite{Peng16}.%Using the seminal work of Spielman and Teng~\cite{SpielmanT14}, to approximately solve Laplacian systems in near-linear time, Christiano et al.~\cite{ChristianoKMST11} give a near-linear time algorithm to compute a $(1+\varepsilon)$-approximation for the eletrical flow.

Lipton, Rose and Tarjan~\cite{LiptonRT79} consider the problem of designing fast algorithms for \emph{exactly} solving linear systems where the matrix is positive definite and the associated graph is planar. Their result implies an $O(n^{3/2})$ time algorithm for electrical flow in planar graphs. This was latter improved to $O(n^{\omega/2})$ by Alon and Yuster~\cite{AlonY10}, where $\omega=2.37..$ is the exponent in the running time of the fastest algorithm for matrix multiplication~\cite{Williams12}. Miller and Koutis~\cite{KoutisM07} consider parallel algorithms for approximately solving planar Laplacian systems. Their algorithm runs in $\tilde{O}(n^{1/6 +c})$ parallel time and $O(n)$ work, where $c$ is any positive constant. We refer the reader to \cite{kenyon2011} for other useful properties of Laplacians on planar graphs. 

Related data structure concepts dealing with spectral properties of graphs include semi-streaming and dynamic algorithms for maintaining spectral sparsifiers. Kelner and Levin~\cite{KelnerL13} give single-pass incremental streaming algorithm using near-linear space and total update time. This was extended by Kapralov et al.~\cite{KapralovLMMS14} to the dynamic semi-streaming model which allows both edge insertions and deletions. Recently, Abraham et al.~\cite{AbrahamDKKP16} give a fully-dynamic algorithm for maintaining spectral sparsifiers in poly-logarithmic amortized update time.

There is a line of work on dynamic algorithms for planar graphs that maintains information about important measures like reachability, connectivity, shortest path, max-flow etc. Subramanian~\cite{Subramanian93} shows a fully-dynamic algorithm for maintaining reachability in directed planar graphs in $O(n^{2/3} \log n)$ time per operation. For the connectivity measure, Eppstein et al.~\cite{EppsteinGIS96} give an algorithm with $O(\log^{2} n)$ amortized update time and $O(\log n)$ query time. Dynamic all-pairs shortest path problem in planar graphs was initiated by Klein and Subramanian~\cite{KleinS98}, who showed how to maintain a $(1+\varepsilon)$-approximation to shortest paths in $O(n^{2/3} \log^{2} n \log D)$ amortized update time and $O(n^{2/3} \log^{2} n \log D)$ worst-case query time, where $D$ denotes the sum of edge lengths. The best known algorithm is due to Abraham, Chechik and Gavoille~\cite{AbrahamCG12} and maintains a $(1+\varepsilon)$-approximation in $O(\sqrt{n} \log^{2} n/ \varepsilon)$ worst-case time per operation. Italiano et al.~\cite{ItalianoNSW11} obtain a fully-dynamic algorithm for \emph{exact} $s-t$ max-flow in planar graphs with $O(n^{2/3} \log^{8/3})$ worst-case time per operation.

Motivated by the recent developments on proving conditional lower-bounds for dynamic problems~\cite{AbboudW14,henzinger2015unifying}, Abboud and Dahlgaard~\cite{AbboudD16} give conditional lower-bounds for a class of dynamic graph problems restricted to planar graphs. Specifically, under the conjecture that all-pair-shortest path problem cannot be solved in truly subcubic time, they show that no algorithm for dynamic shortest path in planar graphs can support both updates and queries in $O(n^{1/2 - \varepsilon})$ amortized time, for $\varepsilon > 0$. %This (conditionally) improves upon the previous logarithmic unconditional lower-bound in the cell-probe model~\cite{?}. 
\section{Preliminaries}~\label{sec:preliminaries}
We consider a weighted undirected graph $G$ undergoing edge insertions/deletions or vertex activations/deactivations. Our dynamic algorithms are characterized by two time measures: \emph{query time}, which denotes the time needed to answer a query, and \emph{update time}, which denotes the time needed to perform an update operation. We say that an algorithm has $O(t(n))$ \emph{worst-case} update time, if it takes $O(t(n))$ time to process \emph{each} update. We say that an algorithm has $O(t(n))$ \emph{amortized} update time if it takes $O(f \cdot t(n))$ total update time for processing $f$ updates (edge insertions/deletions or vertex activations/deactivations). 
\paragraph{Basic Definitions.} 
Let $G=(V,E,\vect{w})$ be any undirected weighted graph with $n$ vertices and $m$ edges, where for any edge $e$, its weight $\vect{w}(e)>0$. %We use $U(\vect{w})$ to denote the ratio between the maximum and minimum entries of $\vect{w}$, i.e., $U(\vect{w})=\frac{\max(\vect{w})}{\min(\vect{w})}$.
Let $\mat{A}$ denote the {weighted adjacency matrix}, let $\mat{D}$ denote the weighted degree diagonal matrix, and let $\mat{L}=\mat{D}-\mat{A}$ denote the \emph{Laplacian} matrix of $G$. We fix an arbitrary orientation of edges, that is, for any two vertices $u,v$ connected by an edge, exactly one of $(u,v)\in E$ or $(v,u)\in E$ holds. Let $\mat{B}\in \real^{m\times n}$ denote the incidence matrix of $G$ such that for any edge $e=(u,v)$ and vertex $w\in V$, $\mat{B}((u,v),w) = 1$ if $u=w$, $-1$ if $v=w$, and $0$ otherwise. We will also think of the weight $\vect{w}(e)$ of any edge $e$ as the \emph{conductance} of $e$, and its reciprocal $\frac{1}{\vect{w}(e)}$, denoted as $\vect{r}(e)$, as the \emph{resistance} of $e$. Let $\mat{R}\in \real^{m\times m}$ denote a diagonal matrix with $\mat{R}(e,e)=\vect{r}(e)$, for any edge $e$. Note that $\mat{L}=\mat{B}^T\mat{R}^{-1}\mat{B}$. 

For any $\vect{x} \in \mathbb{R}^{n}$, the quadratic form associated with $\mat{L}$ is given by $\vect{x}^{T}\mat{L}\vect{x}$. For any two different vertices $u,v$, let $\vect{\chi}_{u,v}\in \real^n$ denote the vector such that $\vect{\chi}_{u,v}(w)=1$ if $w=u$, $-1$ if $w=v$ and $0$ otherwise. %(Throughout the paper, we use bold letters to denote matrices and vectors.) 
For any two vertices $s,t\in V$, an \emph{$s-t$ flow} is a mapping $\vect{f}:E\rightarrow \mathbb{R}^+$ satisfying the following conservation constraint: for any $v\neq s,t$, it holds that 
$\sum_{e= (v,u)} \vect{f}(e) = \sum_{e=(u,v)} \vect{f}(e),$
where for any edge $e=(v,u)$, $\vect{f}(e):=\vect{f}(v,u)$ and $\vect{f}(u,v):=-\vect{f}(v,u)$.

We will let $\val(\vect{f})=\sum_{v:(s,v)\in E} \vect{f}(s,v)$ denote the \emph{value of an $s-t$ flow}. Note that for an $s-t$ flow with value $1$, it holds that $\mat{B}^T\vect{f}=\vect{\chi}_{s,t}$. Given an $s-t$ flow $\vect{f}$, its \emph{energy} (with respect to the resistance vector $\vect{r}$) is defined as $\energy_{\vect{r}}(\vect{f}, s,t)=\sum_{e}\vect{r}(e)\vect{f}(e)^2=\vect{f}^T\mat{R}\vect{f}.$

We define the \emph{$s-t$ electrical flow} in $G$ to be the $s-t$ flow that minimizes the energy $\energy_{\vect{r}}(\vect{f},s,t)$ among all $s-t$ flows with \emph{unit} flow value. It is known that such a flow is unique~\cite{doyle84}.

Any $s-t$ flow $\vect{f}$ in $G$ is an $s-t$ electrical flow with respect to $\vect{r}$, iff there exists a vertex potential function $\vect{\phi}:V\rightarrow \real^+$ such that for any $e=(u,v)$ that is oriented from $u$ to $v$, $\vect{f}(e)=\frac{\vect{\phi}(v)-\vect{\phi}(u)}{\vect{r}(e)}$. It is known that such a vector $\vect{\phi}$ satisfies that $\vect{\phi}=\mat{L}^\dagger \vect{\chi}_{s,t}$, where $\mat{L}^\dagger$ denotes the (Moore-Penrose) pseudo-inverse of $\mat{L}$. In addition,  $\vect{f}=\mat{R}^{-1}\mat{B}^T\vect{\phi}=\mat{R}^{-1}\mat
{B}^T\mat{L}^\dagger \vect{\chi}_{s,t}$~\cite{doyle84}.

The \emph{effective $s-t$ resistance} $R_G(\vect{r},s,t)$ of $G$ with respect to the resistances $\vect{r}$ is the potential difference between $s,t$ when we send one unit of electrical flow from $s$ to $t$. That is, $R_G(\vect{r},s,t)=\vect{\phi}(s)-\vect{\phi}(t) = \vect{\chi}_{s,t}^{T}\mat{L}^\dagger \vect{\chi}_{s,t}$, where $\vect{\phi}$ is the vector of vertex potentials induced by the $s-t$ electrical flow of value $1$. We will often denote $R_G(\vect{r},s,t)$ by $R_G(s,t)$ when $\vect{r}$ is clear from the context. It is known that the effective $s-t$ resistance is equal to the energy of the $s-t$ electrical flow of value $1$, that is $R_G(\vect{r},s,t)=\energy_{\vect{r}}(\vect{f},s,t)$.

\paragraph{Graph $r$-Divisions.} 
Let $G=(V,E)$ be a graph. Let $F\subset E$ be a subset of edges. We call the subgraph $G_F$ induced by all edges in $F$ a \emph{region}. For a subgraph $P$ of $G$, any vertex that is incident to vertices not in $P$ is called a \emph{boundary} vertex. The \emph{vertex boundary} of $P$, denoted by $\partial_G(P)$ is the set of boundary vertices belonging to $P$. All other vertices in $P$ will be called \emph{interior vertices} of $P$.

\begin{definition}
	Let $c_1,c_2>0$ be some constant. For any $r\in (1,n)$, a \emph{weak $r$-division} (with respect to $c_1,c_2$) of an $n$-vertex graph $G$ is an edge partition of it into regions $\region=\{P_1,\ldots,P_\ell\}$, where $\ell \leq c_1\cdot \frac{n}{r}$ such that
	\begin{itemize}
		\item Each edge belongs to exactly one region.
		\item Each region $P_i$ contains $r$ vertices.
		\item The total number of all boundary vertices, i.e., $\cup_i\partial_G(P_i)$, is at most $c_2n/\sqrt{r}$. %The total number of boundary vertices is $O(n/\sqrt{r})$.
	\end{itemize}
\end{definition}

It is known that such an $r$-division (even with the stronger guarantee that each region has $O(\sqrt{r})$ boundary vertices) for planar graphs can be constructed in linear time~\cite{goodrich1995,klein2013,Fed87:shortest}.%, which improve upon~\cite{Fed87:shortest}). 

%Frederickson~\cite{Fed87:shortest} gave the following guarantee to construct an $r$-division (in which each region has $O(\sqrt{r})$ boundary vertices) for planar graphs. 

\begin{lemma}[\cite{klein2013}] \label{lemma:kms_algorithm}
	Let $c>0$ be some constant. There is an algorithm that takes as input an $n$-vertex planar graph $G$ and for any $r\geq c$, outputs an $r$-division of $G$ in $O(n)$ time.
\end{lemma}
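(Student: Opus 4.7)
The plan is to build the $r$-division by recursive application of a linear-time planar separator theorem. Specifically, I would invoke Goodrich's algorithm, which on an $n$-vertex planar graph computes in $O(n)$ time a vertex separator $S$ of size $O(\sqrt{n})$ whose removal leaves components of size at most $\tfrac{2n}{3}$. Starting from $G$, I would recursively separate each current piece whose vertex count still exceeds $r$: compute a separator $S$ for it, and recurse on the two (or more) resulting subpieces, where in each subpiece the vertices of $S$ are included and flagged as boundary. The recursion stops once a piece has at most $r$ vertices; the leaves of this recursion tree, taken as edge sets, constitute the regions $P_1,\dots,P_\ell$ of the $r$-division.

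The combinatorial parameters follow from a level-by-level accounting of the recursion tree. Since each split shrinks a piece by a constant factor, the depth is $O(\log(n/r))$, and because the leaves partition the edges of $G$ into pieces on at most $r$ vertices each, the number of regions is $\ell = O(n/r)$. For the boundary bound, at depth $i$ each piece has size at most $n(2/3)^i$, so its separator contributes $O(\sqrt{n(2/3)^i})$ boundary vertices; the number of depth-$i$ pieces is at most $(3/2)^i$, making the total boundary contribution at level $i$ equal to $O(\sqrt{n}\cdot (3/2)^{i/2})$. Summing this geometric series up to depth $\log_{3/2}(n/r)$ is dominated by the deepest level and yields total boundary $O(\sqrt{n}\cdot \sqrt{n/r}) = O(n/\sqrt{r})$, which is exactly the desired $c_2 n/\sqrt{r}$.

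The main obstacle is achieving overall running time $O(n)$ rather than the $O(n\log(n/r))$ one gets by running a fresh separator computation at every recursion node. This is precisely what the Klein--Mozes--Sommer construction provides: instead of recomputing separators top-down, one first builds, in linear time, a rooted recursive decomposition of $G$ based on cycle separators together with an associated tree structure, and then reads off an $r$-division by selecting, from this precomputed decomposition, a laminar family of separators whose induced leaves have size at most $r$. The heavy structural work (the planar separator-style decomposition) is amortized into a single linear-time preprocessing; once it is available, carving out the regions and their boundary incurs cost proportional to the output size, which is itself $O(n)$. Combining this with the region-count and boundary bounds above gives all three conditions of a weak $r$-division within a total of $O(n)$ time, as required.
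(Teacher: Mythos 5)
The paper gives no proof of this lemma; it is cited verbatim from Klein--Mozes--Sommer. Your sketch is a reasonable account of the route that the cited work takes: recursive planar separators would naively cost $O(n\log(n/r))$, and the Klein--Mozes--Sommer contribution is a linear-time precomputed recursive decomposition from which an $r$-division can be extracted in $O(n)$ total time. Your boundary calculation (geometric sum dominated by the deepest level, yielding $O(n/\sqrt r)$) is the standard one and is correct. Two points you gloss over but that matter in the actual construction: first, the bound $\ell = O(n/r)$ requires a lower bound of $\Omega(r)$ on leaf size, which follows from the fact that you only split pieces with more than $r$ vertices and each child retains a constant fraction of the parent, but you state the count without this justification; second, the regions must form an \emph{edge} partition, so edges within the separator and the assignment of each edge to exactly one child require explicit bookkeeping, which is not automatic from a vertex separator. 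Neither of these is a fatal gap in a sketch, but they are where the real work in a full proof lies.
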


We will need the following property on the boundary vertices of the $r$-division output by the above algorithm (see Section 3.3 in~\cite{klein2013}). 
\begin{lemma}[\cite{klein2013}]~\label{lemma:kms_boundary}
	For an $n$-vertex planar graph $G$, let $\region=\{P_1,\ldots,P_\ell\}$, $\ell = O(n/r)$ be the $r$-division by the algorithm in~Lemma~\ref{lemma:kms_algorithm}. Then it holds that 
	$\sum_{i=1}^\ell|\partial_G(P_i)|=O(n/\sqrt{r}).$
\end{lemma}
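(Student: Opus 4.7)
The plan is to reduce the claim to a per-region boundary bound and then apply a one-line summation. The parenthetical remark immediately preceding Lemma~\ref{lemma:kms_algorithm} already indicates that the KMS algorithm can be made to output an $r$-division with the stronger property that $|\partial_G(P_i)| = O(\sqrt{r})$ for \emph{every} region individually, not merely that the union of all boundary sets has size $O(n/\sqrt{r})$. I would take this strengthened per-region guarantee as the key ingredient, extracted from the construction in~\cite{klein2013}.

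Given the per-region bound, the lemma follows immediately. Since there are $\ell = O(n/r)$ regions and each contributes at most $O(\sqrt{r})$ boundary vertices, one gets
\[
\sum_{i=1}^{\ell}|\partial_G(P_i)| \;\leq\; \ell \cdot O(\sqrt{r}) \;=\; O\!\left(\frac{n}{r}\right)\cdot O(\sqrt{r}) \;=\; O\!\left(\frac{n}{\sqrt{r}}\right),
\]
which is exactly the desired inequality.

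The substance of the argument therefore lies in verifying the per-region bound. I would do this by inspecting the recursive construction of~\cite{klein2013}: the algorithm repeatedly applies a balanced planar separator of size $O(\sqrt{N})$ to any current piece of size $N > r$, recursing until each piece has at most $r$ vertices. The boundary of any final region is contained in the union of the separators cut along its root-to-leaf path in the recursion tree. Because the balance of the separators forces the piece sizes along this path to decay geometrically, the separator sizes along the path form a geometric series $O(\sqrt{N}), O(\sqrt{N/2}), O(\sqrt{N/4}),\ldots, O(\sqrt{r})$, whose sum is dominated by the final term $O(\sqrt{r})$; hence each leaf region inherits a boundary of size $O(\sqrt{r})$.

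The main obstacle will be the subsequent grouping/merging step used in~\cite{klein2013}: raw recursion can yield many more than $O(n/r)$ leaf pieces, and one must merge small siblings into larger regions without inflating the per-region boundary nor violating the $|P_i|=r$ cap. The careful KMS analysis shows that this merging preserves both $\ell = O(n/r)$ and the per-region $O(\sqrt{r})$ boundary bound simultaneously; once those two invariants are in hand, the straightforward counting displayed above closes the proof.
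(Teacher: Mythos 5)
The paper itself does not prove this lemma; it simply cites Section~3.3 of~\cite{klein2013}. Your reduction is the natural one and matches the paper's own framing: the remark just before Lemma~\ref{lemma:kms_algorithm} states that the KMS $r$-division has the stronger per-region property $|\partial_G(P_i)|=O(\sqrt{r})$, and once you have that, multiplying by $\ell=O(n/r)$ gives $\sum_i|\partial_G(P_i)|=O(n/\sqrt{r})$ immediately. That part is clean and correct.

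The flaw is in your sketch of \emph{why} the per-region bound holds. The decreasing geometric series $\sqrt{N},\sqrt{N/2},\sqrt{N/4},\ldots,\sqrt{r}$ sums to $\Theta(\sqrt{N})$; it is dominated by its \emph{first} (largest) term, not by the final term $\sqrt{r}$ as you claim. So the naive ``boundary of a region is contained in the union of separators along its root-to-leaf path'' accounting only gives an $O(\sqrt{n})$-per-region bound, which is useless here. In fact, a single round of recursive balanced separators (stopping when pieces have at most $r$ vertices) does \emph{not} in general yield regions with $O(\sqrt{r})$ boundary vertices: a high-level separator's vertices may concentrate in a single small piece. What Frederickson and KMS actually do is run a \emph{second phase} that re-splits any piece with too many boundary vertices, applying separators to a graph weighted by boundary vertices so that boundary counts (rather than vertex counts) are driven down to $O(\sqrt{r})$ per piece, while a counting argument keeps the number of pieces at $O(n/r)$. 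This is a further splitting step, not the ``grouping/merging of small siblings'' you describe. Since you, like the paper, ultimately import the per-region bound from~\cite{klein2013}, your reduction to it is fine; but the geometric-series justification you give for it, as stated, is wrong.
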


\paragraph{Graph Sparsification.} %\label{sec: GraphSpar}
Graph Sparsification aims at compressing large graphs into smaller ones while (approximately) preserving some characteristics of the original graph. We present two notions of sparsification. The first requires that the quadratic form of the large and sparsified graph are close. The second requires that all-pairs effective resistances of the corresponding graphs are close. 
\begin{definition}[Spectral Sparsifier] \label{def: specSpar} Let $G=(V,E,\vect{w})$ be a weighted graph and $\varepsilon \in (0,1)$. A $(1 \pm \varepsilon)$-\emph{spectral sparsifier} for $G$ is a subgraph $H=(V,E_H,\vect{w}_H)$ such that for all
	$ \vect{x} \in \mathbb{R}^{n},~(1-\varepsilon)\vect{x}^{T}\mat{L}\vect{x} \leq \vect{x}^{T}\widetilde{\mat{L}}\vect{x} \leq (1+\varepsilon)\vect{x}^{T}\mat{L}\vect{x},$
	where $\mat{L}$ and $\widetilde{\mat{L}}$ are the Laplacians of $G$ and $H$, respectively. 
\end{definition}
\begin{definition}[Resistance Sparsifier] Let $G=(V,E,\vect{w})$ be a weighted graph and $\varepsilon \in (0,1)$. A $(1 \pm \varepsilon)$-\emph{resistance sparsifier} for $G$ is a subgraph $H=(V,E_H,\vect{w}_H)$ such that for all
	$
	u,v \in V,~(1-\varepsilon)R_H(u,v) \leq R_G(u,v) \leq (1+\varepsilon) R_H(u,v), 
	$
	where $R_G(u,v)$ and $R_H(u,v)$ denote the effective $u-v$ resistance in $G$ and $H$, respectively. 
\end{definition}
%\textcolor{red}{Note that in the above two definitions, the roles of $G$ and $H$ are different-- should be better phrased}. 
We remark that Definition \ref{def: specSpar} implies approximations for the pseudoinverse Laplacians, that is % Specifically, $H$ being a $(1\pm \varepsilon)$-spectral sparsifier of $G$ implies
\[
\forall \vect{x} \in \mathbb{R}^{n} \quad \frac{1}{(1+\varepsilon)}\vect{x}^{T}\mat{L}^\dagger\vect{x} \leq \vect{x}^{T}\widetilde{\mat{L}}^\dagger\vect{x} \leq \frac{1}{(1-\varepsilon)}\vect{x}^{T}\mat{L}^\dagger\vect{x},
\]
Since by definition, the effective resistance between any two nodes $u$ and $v$ is the quadratic form defined by the pseudo-inverse of the Laplacian computed at the vector $\vect{\chi}_{u,v}$, it follows that the effective resistances between any two nodes in $G$ and $H$ are the same up to a $(1 \pm \varepsilon)$ factor. By our definitions for resistance and spectral sparsifiers, we have the following fact.

\begin{fact}
	Let $\varepsilon\in (0,1)$ and let $G$ be a graph. Then every $(1\pm\varepsilon)$-spectral sparsifier of $G$ is a $(1\pm \varepsilon)$-resistance sparsifier of $G$. 
\end{fact}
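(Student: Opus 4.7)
The plan is to chain together the pseudoinverse approximation already displayed in the paragraph preceding the fact with the identity $R_G(u,v) = \vect{\chi}_{u,v}^T \mat{L}^\dagger \vect{\chi}_{u,v}$. Concretely, let $H$ be a $(1\pm\varepsilon)$-spectral sparsifier of $G$ with Laplacians $\mat{L}$ and $\widetilde{\mat{L}}$ respectively. From the spectral inequality $(1-\varepsilon)\vect{x}^T \mat{L} \vect{x} \leq \vect{x}^T \widetilde{\mat{L}} \vect{x} \leq (1+\varepsilon) \vect{x}^T \mat{L} \vect{x}$ for all $\vect{x} \in \real^n$, the paper already records the order-reversed inequality for the pseudoinverses, namely
\[
\frac{1}{1+\varepsilon} \vect{x}^T \mat{L}^\dagger \vect{x} \leq \vect{x}^T \widetilde{\mat{L}}^\dagger \vect{x} \leq \frac{1}{1-\varepsilon} \vect{x}^T \mat{L}^\dagger \vect{x}.
\]
I would instantiate this with $\vect{x} = \vect{\chi}_{u,v}$ for any pair $u,v \in V$; note that $\vect{\chi}_{u,v}$ is orthogonal to the all-ones vector, so it lies in the common range of both $\mat{L}^\dagger$ and $\widetilde{\mat{L}}^\dagger$ (both Laplacians share the same kernel since $H$ is spectrally close to $G$), which makes the pseudoinverse quadratic forms meaningful and equal to the respective effective resistances.

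Substituting yields $\tfrac{1}{1+\varepsilon} R_G(u,v) \leq R_H(u,v) \leq \tfrac{1}{1-\varepsilon} R_G(u,v)$. Multiplying through by $(1+\varepsilon)$ in the left bound and by $(1-\varepsilon)$ in the right gives
\[
(1-\varepsilon) R_H(u,v) \leq R_G(u,v) \leq (1+\varepsilon) R_H(u,v),
\]
which is precisely the definition of a $(1\pm\varepsilon)$-resistance sparsifier. Since the bound holds for every pair $u,v \in V$, the conclusion follows.

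The only subtle point, which I would mention for completeness, is the justification of the pseudoinverse inequality: it uses that if $\mat{A} \preceq \mat{B}$ holds in the Loewner order on PSD matrices with the same kernel, then $\mat{B}^\dagger \preceq \mat{A}^\dagger$ on their common range. I do not expect any real obstacle here, as the paper takes this remark as given in the text immediately above the fact; the proof is essentially a one-line specialization of the pseudoinverse bound to the vector $\vect{\chi}_{u,v}$.
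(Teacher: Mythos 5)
Your argument is correct and follows exactly the same route the paper takes: it reads off the pseudoinverse quadratic-form inequality displayed just above the fact, instantiates it at $\vect{\chi}_{u,v}$, and rearranges to match the resistance-sparsifier definition. The paper leaves the algebraic rearrangement and the kernel/Loewner-order remark implicit, whereas you spell them out, but there is no substantive difference in approach.
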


The following lemma says that given a graph, by decomposing the graph into several  pieces, and computing a good sparsifier for each piece, then one can obtain a good sparsifier for the original graph which is the union of the sparsifiers for all pieces. %The proof is deferred to the full version of the paper.

\begin{lemma}[Decomposability] \label{lemm: decomposability} Let $G=(V,E,\vect{w})$ be a graph whose set of edges is partitioned into $E_1,\ldots,E_\ell$. Let $H_i$ be a $(1\pm \varepsilon)$-spectral sparsifier of $G_i=(V,E_i)$, where $i=1,\ldots,\ell$. Then $H=\bigcup_{i=1}^{\ell} H_i$ is a $(1\pm \varepsilon)$-spectral sparsifier of $G$. %\footnote{Pan: one comment from a reviewer that I do not understand: How are the smallest sparsifiers constructed? You just give a (recursive) approach on how to combine already existing sparsifiers.}
\end{lemma}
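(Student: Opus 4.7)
The plan is to exploit two facts: first, that the Laplacian of a graph is \emph{additive over any edge partition}, and second, that the spectral sparsifier guarantee is a pointwise inequality of quadratic forms that survives summation.

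First, I would observe that for each edge $e=(u,v)$ with weight $w_e$, its contribution to the quadratic form $\vect{x}^T \mat{L} \vect{x}$ is the single term $w_e (\vect{x}(u)-\vect{x}(v))^2$. Consequently, since $E$ is partitioned into $E_1,\dots,E_\ell$, we have $\mat{L}_G = \sum_{i=1}^\ell \mat{L}_{G_i}$ and therefore $\vect{x}^T \mat{L}_G \vect{x} = \sum_{i=1}^\ell \vect{x}^T \mat{L}_{G_i} \vect{x}$ for every $\vect{x} \in \mathbb{R}^n$. The same additivity applies to $H = \bigcup_{i=1}^\ell H_i$: since the $H_i$ share only the common vertex set (and the union is understood as taking the disjoint union of edge sets with their weights), $\vect{x}^T \mat{L}_H \vect{x} = \sum_{i=1}^\ell \vect{x}^T \mat{L}_{H_i} \vect{x}$.

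Next I would apply the spectral sparsifier hypothesis termwise. By assumption, for every $i$ and every $\vect{x}$,
\[
(1-\varepsilon)\, \vect{x}^T \mat{L}_{G_i} \vect{x} \;\leq\; \vect{x}^T \mat{L}_{H_i} \vect{x} \;\leq\; (1+\varepsilon)\, \vect{x}^T \mat{L}_{G_i} \vect{x}.
\]
Summing these $\ell$ inequalities (all of the quantities involved are nonnegative, so the sum of inequalities is valid) yields
\[
(1-\varepsilon) \sum_{i=1}^\ell \vect{x}^T \mat{L}_{G_i} \vect{x} \;\leq\; \sum_{i=1}^\ell \vect{x}^T \mat{L}_{H_i} \vect{x} \;\leq\; (1+\varepsilon) \sum_{i=1}^\ell \vect{x}^T \mat{L}_{G_i} \vect{x},
\]
which by the additivity established in the first step is exactly $(1-\varepsilon) \vect{x}^T \mat{L}_G \vect{x} \leq \vect{x}^T \mat{L}_H \vect{x} \leq (1+\varepsilon) \vect{x}^T \mat{L}_G \vect{x}$, completing the proof.

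There is essentially no obstacle here; the only mild subtlety is making sure that the union $H = \bigcup_i H_i$ is interpreted as summing edge weights (i.e., taking a disjoint union on the edge side), so that $\mat{L}_H = \sum_i \mat{L}_{H_i}$ holds. Since the $G_i$'s partition $E$, this is the natural choice and guarantees that $H$ is a valid weighted subgraph on vertex set $V$ whose Laplacian decomposes additively.
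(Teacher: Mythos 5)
Your proof is correct and takes essentially the same route as the paper: both rely on the additivity of Laplacian quadratic forms over the edge partition and then sum the per-piece sparsifier inequalities. Your remark about interpreting the union $H=\bigcup_i H_i$ as summing edge weights so that $\mat{L}_H=\sum_i \mat{L}_{H_i}$ is a small clarifying point the paper leaves implicit, but the argument is otherwise identical.
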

\begin{proof}%[Proof of Lemma~\ref{lemm: decomposability}]
	For every $i=1,\ldots,\ell$, since $H_i$ is a spectral sparsifier of $G_i$, it follows that 
	\[
	\forall \vect{x} \in \mathbb{R}^{n} \quad(1-\varepsilon)\vect{x}^{T}\mat{L}_i\vect{x} \leq \vect{x}^{T}\widetilde{\mat{L}}_i\vect{x} \leq (1+\varepsilon)\vect{x}^{T}\mat{L}_i\vect{x},
	\]
	where $\mat{L}_i$ and $\widetilde{\mat{L}}_i$ are the Laplacians of $G_i$ and $H_i$, respectively. Summing over these $\ell$ inequalities yields
	\[
	\forall \vect{x} \in \mathbb{R}^{n} \quad(1-\varepsilon)\vect{x}^{T} \sum_{i=1}^{\ell} \mat{L}_i\vect{x} \leq \vect{x}^{T} \sum_{i=1}^{\ell} \widetilde{\mat{L}}_i\vect{x} \leq (1+\varepsilon)\vect{x}^{T} \sum_{i=1}^{\ell} \mat{L}_i\vect{x},
	\]
	where $\mat{L} = \sum_{i=1}^{\ell} \mat{L}_i$ and $\widetilde{\mat{L}} = \sum_{i=1}^{\ell} \widetilde{\mat{L}}_i$. Note that $\mat{L}$ is the Laplacian of $G$ and $\widetilde{\mat{L}}$ is the Laplacian of $H$. This by definition implies that $H$ is a $(1\pm \varepsilon)$-spectral sparsifier of $G$. 
\end{proof}

\section{Dynamic Algorithms for Electrical Flow in Minor-Free Graphs} \label{sec: fullyPlanarEdge}
In order to present our dynamic algorithm for electrical flows, we first introduce the notion of \emph{approximate Schur Complement}.

\subsection{Schur Complement as Vertex Resistance Sparsifier} \label{sec: SchurComplement}
In the previous section we introduced graph sparsification for reducing the number of edges. For our application, it will be useful to define sparsifiers that apart from reducing the number of edges, they also reduce the number of vertices. More precisely, given a weighted graph $G=(V,E,\vect{w})$ with terminal set $K \subset V$,  we are looking for a graph $H=(V_H,E_H,\vect{w}_H)$ with $K \subseteq V_H$ and as few vertices and edges as possible while preserving some important feature among terminals vertices. Graph $H$ is usually referred to as a \emph{vertex sparsifier} of $G$.

\paragraph{Exact Schur Complement.} 
We first review a folklore result~\cite{MillerP13} on constructing vertex sparsifiers that preserve effective resistances among terminal pairs. For sake of simplicity, we first work with Laplcians of graphs. For a given connected graph $G$ as above, let $N = V \setminus K$ be the set of non-terminal vertices in $G$. The partition of $V$ into $N$ and $K$ naturally induces the following partition of the Laplacian $\mat{L}$ of $G$ into blocks:
\[ \mat{L} = 
\begin{bmatrix}
\mat{L}_{N} & \mat{L}_{M} \\
\mat{L}^{T}_{M} & \mat{L}_K \\
\end{bmatrix}
\]
We remark that since $G$ is connected and $N$ and $K$ are non-empty, $\mat{L}_N$ is invertible. We next define the Schur complement of $\mat{L}$, which can be viewed as an equivalent to $\mat{L}$ only on the terminal vertices. 
\begin{definition}[Schur Complement] \label{def: Schur} The \emph{Schur complement} of a graph Laplacian $\mat{L}$ with respect to a terminal set $K$ is
	$
	\mat{L}^{K}_{S} := \mat{L}_{K} - \mat{L}^{T}_{M}\mat{L}^{-1}_{N}\mat{L}_{M}.
	$ 
\end{definition} 
It is known that the matrix $\mat{L}^{K}_{S}$ is a Laplacian matrix for some graph $G'$~\cite{kyng2016}. We can think of Schur Complement as performing Gaussian elimination on the non-terminals $V \setminus K$. This process recursively eliminates a vertex $v \in V \setminus K$ by deleting $v$ and adding a clique with appropriate edge weights on the neighbors of $v$ in the current graph~(see, e.g.~\cite{kyng2016}). The following lemma shows that the quadratic form of the pseudo-inverse of the Laplacian $\mat{L}$ will be preserved by taking the quadratic form of the pseudo-inverse of its Schur Complement, for vectors supported on the terminals. %See the full version for the proof.

\begin{lemma} \label{lemm: SchurComp} Let $\vect{d}$ be a vector of a graph $G$ whose vertices are partitioned into terminals $K$, and non-terminals $N$ and only terminals have non-zero entries in $\vect{d}$. Let $\vect{d}_K$ be the restriction of $\vect{d}$ on the terminals and let $\mat{L}^{K}_S$ be the Schur complement of the Laplacian $\mat{L}$ of $G$ with respect to $K$. Then 
	$
	\vect{d}^{T} \mat{L}^{\dagger} \vect{d} =  \vect{d}_K^{T} (\mat{L}_{S}^{K})^{\dagger} \vect{d}_K.
	$
\end{lemma}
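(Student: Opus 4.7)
\medskip

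\noindent\textbf{Proof proposal.} The plan is to reduce the identity on pseudo-inverses to a block elimination on the linear system $\mat{L}\vect{\phi}=\vect{d}$. Recall that since the only nonzero entries of $\vect{d}$ lie on $K$ and $G$ is assumed connected, $\vect{d}$ is orthogonal to $\mathbf{1}$ iff $\vect{d}_K\perp\mathbf{1}$; we may first reduce to this case by subtracting the mean on $K$ (both sides of the claimed identity are unchanged because $\mathbf{1}$ lies in the kernel of both $\mat{L}$ and $\mat{L}_S^K$). Writing $\vect{\phi}=\mat{L}^\dagger\vect{d}$ and partitioning $\vect{\phi}=\begin{bmatrix}\vect{\phi}_N\\ \vect{\phi}_K\end{bmatrix}$, by definition of the pseudo-inverse $\vect{\phi}$ is the unique solution to $\mat{L}\vect{\phi}=\vect{d}$ satisfying $\vect{\phi}\perp\mathbf{1}$. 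The goal will be to express the quadratic form $\vect{d}^T\mat{L}^\dagger\vect{d}=\vect{d}^T\vect{\phi}=\vect{d}_K^T\vect{\phi}_K$ purely in terms of quantities associated with $\mat{L}_S^K$.

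The key computational step is standard block elimination. The system $\mat{L}\vect{\phi}=\vect{d}$ separates into $\mat{L}_N\vect{\phi}_N+\mat{L}_M\vect{\phi}_K=\mathbf{0}$ and $\mat{L}_M^T\vect{\phi}_N+\mat{L}_K\vect{\phi}_K=\vect{d}_K$. Since $G$ is connected and $N\subsetneq V$ is a proper subset of the vertices, $\mat{L}_N$ is a principal submatrix of the Laplacian obtained by removing at least one row and column, so it is strictly positive definite and in particular invertible. Solving the first block for $\vect{\phi}_N=-\mat{L}_N^{-1}\mat{L}_M\vect{\phi}_K$ and substituting into the second yields
\[
\bigl(\mat{L}_K-\mat{L}_M^T\mat{L}_N^{-1}\mat{L}_M\bigr)\vect{\phi}_K=\mat{L}_S^K\vect{\phi}_K=\vect{d}_K.
\]
Hence $\vect{\phi}_K$ is a solution of the Schur complement system on $K$.

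To conclude, I would argue that $\vect{\phi}_K=(\mat{L}_S^K)^\dagger\vect{d}_K$, which requires checking that $\vect{\phi}_K\perp\mathbf{1}_K$, the kernel of $\mat{L}_S^K$. Because $\mat{L}_S^K$ is the Laplacian of a connected graph on $K$ (the graph obtained by Gaussian elimination on $N$), its null space is spanned by $\mathbf{1}_K$. Any two solutions of $\mat{L}_S^K\vect{x}=\vect{d}_K$ differ by a multiple of $\mathbf{1}_K$; among these, the one returned by the pseudo-inverse is characterized by orthogonality to $\mathbf{1}_K$. I can enforce this by replacing $\vect{\phi}$ with $\vect{\phi}-c\mathbf{1}$ for a suitable constant $c$ (this does not change $\vect{d}^T\vect{\phi}$ because $\vect{d}\perp\mathbf{1}$) and verifying that the resulting $\vect{\phi}_K$ is orthogonal to $\mathbf{1}_K$. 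Once this is established,
\[
\vect{d}^T\mat{L}^\dagger\vect{d}=\vect{d}^T\vect{\phi}=\vect{d}_K^T\vect{\phi}_K=\vect{d}_K^T(\mat{L}_S^K)^\dagger\vect{d}_K,
\]
as desired.

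The only genuinely delicate point is the bookkeeping for the null spaces and pseudo-inverses: $\mat{L}$ and $\mat{L}_S^K$ are both singular, and one must make sure that ``the'' solution to $\mat{L}_S^K\vect{\phi}_K=\vect{d}_K$ produced by block elimination lies in $\mathrm{im}(\mat{L}_S^K)=\mathbf{1}_K^\perp$ so that it really equals $(\mat{L}_S^K)^\dagger\vect{d}_K$. Handling the disconnected case (if one ever allows it) would require working componentwise, but since the excerpt implicitly assumes the relevant Laplacians are Laplacians of connected graphs, a single orthogonal-projection argument as above suffices.
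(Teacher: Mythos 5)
Your proof follows essentially the same block-elimination route as the paper's, and the substantive part (solving $\mat{L}_N\vect{\phi}_N+\mat{L}_M\vect{\phi}_K=\mathbf{0}$, substituting to get $\mat{L}_S^K\vect{\phi}_K=\vect{d}_K$, then pairing with $\vect{d}_K$) is correct. In fact you are more careful than the paper about the null-space bookkeeping: the paper asserts ``$\mat{L}_S^K\vect{\phi}_K=\vect{d}_K$ or $\vect{\phi}_K=(\mat{L}_S^K)^\dagger\vect{d}_K$'' without noting that $\vect{\phi}_K$ need not be orthogonal to $\mathbf{1}_K$, and your observation that one only needs $\vect{d}_K^T\vect{\phi}_K$ to be well-defined modulo $\ker(\mat{L}_S^K)=\mathrm{span}(\mathbf{1}_K)$ (which holds because $\vect{d}_K\perp\mathbf{1}_K$) is exactly the right way to close that small gap.

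The one genuine error is the first-paragraph ``reduction'' to the case $\vect{d}\perp\mathbf{1}$. You claim that subtracting the mean of $\vect{d}$ on $K$ changes neither side. It does change the left side: if you subtract $c\vect{e}_K$ where $\vect{e}_K$ is the indicator of $K$, then $\vect{e}_K$ is \emph{not} in $\ker(\mat{L})$ (the kernel is $\mathrm{span}(\mathbf{1})$, not $\mathrm{span}(\vect{e}_K)$), so the cross-terms $\vect{e}_K^T\mat{L}^\dagger\vect{d}$ and $\vect{e}_K^T\mat{L}^\dagger\vect{e}_K$ are generically nonzero; if instead you subtract $c\mathbf{1}$ over all of $V$, then $\vect{d}$ is no longer supported on $K$, violating the hypothesis. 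Worse, the identity is actually \emph{false} for $\vect{d}\not\perp\mathbf{1}$: take the path on $\{1,3,2\}$ with unit weights, $K=\{1,2\}$, $N=\{3\}$, and $\vect{d}=(1,0,0)^T$; then $\vect{d}^T\mat{L}^\dagger\vect{d}=5/9$ while $\vect{d}_K^T(\mat{L}_S^K)^\dagger\vect{d}_K=1/2$. So the correct reading is that $\vect{d}\perp\mathbf{1}$ is a hypothesis of the lemma, not something you can reduce to. This is what the paper implicitly assumes by writing ``$\mat{L}\vect{\phi}=\vect{d}$'' (which requires $\vect{d}\in\mathrm{im}(\mat{L})=\mathbf{1}^\perp$), and it is harmless for the downstream application where $\vect{d}=\vect{\chi}_{s,t}$. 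Drop the reduction sentence and simply assume $\vect{d}\perp\mathbf{1}$ at the outset; the rest of your argument is sound.
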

\begin{proof}%[Proof of Lemma~\ref{lemm: SchurComp}]
	Let $\vect{\phi}$ be the vector of vertex potentials obtained by solving $\mat{L} \vect{\phi} = \vect{d}$. Thus $\vect{\phi}=\mat{L}^{\dagger}\vect{d}$. The vertex partitioning into $N$ and $K$ induces the partitioning of $\vect{\phi}$ into $\vect{\phi}_N$ and $\vect{\phi}_K$, which in turn gives:
	\[  \begin{bmatrix}
	\mat{L}_{N} & \mat{L}_{M} \\
	\mat{L}^{T}_{M} & \mat{L}_K \\
	\end{bmatrix}
	\begin{bmatrix}
	\vect{\phi}_N \\
	\vect{\phi}_K \\
	\end{bmatrix} = 
	\begin{bmatrix}
	\vect{0}\\
	\vect{d}_K \\
	\end{bmatrix}
	\]
	The first row block of $\mat{L}$ gives:
	\begin{align*}
	\mat{L}_{N} \vect{\phi}_N + \mat{L}_{M} \vect{\phi}_K & = \vect{0} \\
	\vect{\phi}_N &= - \mat{L}_{N}^{-1}\mat{L}_{M}\vect{\phi_K}.
	\end{align*}
	Considering the second row block of $\mat{L}$ and substituting $\vect{\phi}_N$ with $- \mat{L}_{N}^{-1}\mat{L}_{M}\vect{\phi_K}$ gives:
	\begin{align*}
	-\mat{L}_{M}^{T}\mat{L}_{N}^{-1}\mat{L}_{M}\vect{\phi_K} + \mat{L}_K \vect{\phi}_K & = \vect{d}_K \\
	(\mat{L}_{K}-\mat{L}^{T}_{M}\mat{L}^{-1}_{N}\mat{L}_{M})\vect{\phi}_K & = \vect{d}_K.
	\end{align*}
	By Definition \ref{def: Schur}, $\mat{L}^{K}_{S} := \mat{L}_{K} - \mat{L}^{T}_{M}\mat{L}^{-1}_{N}\mat{L}_{M}$, and thus it follows that $\mat{L}^{K}_{S} \vect{\phi}_K = \vect{d}_K$ or $\vect{\phi}_K = (\mat{L}_{S}^{K})^{\dagger}\vect{d}_K$. Substituting the latter into $\vect{d}^{T} \mat{L}^{\dagger} \vect{d}$ and using the fact that $\vect{d}$ is non-zero only for the terminals gives:
	\[\vect{d}^{T} \mat{L}^{\dagger} \vect{d} = \vect{d}^T\vect{\phi} =  \vect{d}_K^{T} \vect{\phi}_K = \vect{d}_K^{T}(\mat{L}_{S}^{K})^{\dagger}\vect{d}_K. \]
\end{proof}

Using interchangeability between graphs and their Laplacians, we can interpret the above result in terms of graphs as well. We first present the following notion of sparsification. 
\begin{definition}[Vertex Resistance Sparsifier] Let $G=(V,E,\vect{w})$ be a weighted graph with $K \subset V$ and $\alpha \geq 1$. An $\alpha$-{vertex resistance sparsifier} of $G$ with respect to $K$ is a graph $H=(K,E_H,\vect{w}_H)$ such that for all
	$
	s,t \in K,~R_H(s,t) \leq R_G(s,t) \leq \alpha \cdot R_H(s,t).
	$
\end{definition}

The lemma below relates the Schur Complement and resistance sparsifiers. %(see Appendix~\ref{app: section23} for the proof).
\begin{lemma}~\label{lemma:exact_schur}
	Let $G=(V,E,\vect{w})$ be a weighted graph with $K \subset V$, Laplacian matrix $\mat{L}$ and Schur Complement $\mat{L}^{K}_S$ (with respect to the terminal set $K$). Then the graph $H=(K,E_H,\vect{w}_H)$ associated with the Laplacian $\mat{L}^{K}_S$ is a $1$-vertex resistance sparsifier of $G$ with respect to $K$.
\end{lemma}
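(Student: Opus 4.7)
}
The plan is to reduce the claim to Lemma~\ref{lemm: SchurComp} by specializing the test vector $\vect{d}$ to the indicator $\vect{\chi}_{s,t}$ for an arbitrary terminal pair $s,t\in K$. Recall from the preliminaries that the effective $s$–$t$ resistance admits the closed form
\[
R_G(s,t) \;=\; \vect{\chi}_{s,t}^{T}\mat{L}^{\dagger}\vect{\chi}_{s,t},
\]
and similarly for $H$ with Laplacian $\mat{L}_S^{K}$, which by the statement of the lemma is a Laplacian of some (loopless) weighted graph on vertex set $K$. So it suffices to show that the two quadratic forms agree on $\vect{\chi}_{s,t}$.

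Since $s,t\in K$, the vector $\vect{\chi}_{s,t}$ is supported entirely on the terminal set, so its restriction to $K$ is $\vect{\chi}_{s,t}$ itself (viewed in $\mathbb{R}^{K}$), and its restriction to $N=V\setminus K$ is zero. Hence $\vect{d}=\vect{\chi}_{s,t}$ satisfies the hypotheses of Lemma~\ref{lemm: SchurComp}. Applying that lemma directly yields
\[
R_G(s,t) \;=\; \vect{\chi}_{s,t}^{T}\mat{L}^{\dagger}\vect{\chi}_{s,t} \;=\; \vect{\chi}_{s,t}^{T}(\mat{L}_S^{K})^{\dagger}\vect{\chi}_{s,t} \;=\; R_H(s,t).
\]
Since this holds for every pair $s,t\in K$, we get the chain $R_H(s,t)\leq R_G(s,t)\leq 1\cdot R_H(s,t)$ with equality throughout, confirming that $H$ is a $1$-vertex resistance sparsifier of $G$ with respect to $K$.

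There is essentially no obstacle here, as all the heavy lifting is already done by Lemma~\ref{lemm: SchurComp}; the only subtlety is to verify that plugging $\vect{\chi}_{s,t}$ into that lemma is legitimate, which follows immediately because both $s$ and $t$ lie in $K$. The one minor bookkeeping item is to note that $G$ needs to be connected (or at least that $s,t$ lie in the same connected component) for $\mat{L}_N$ to be invertible in the definition of the Schur complement; this is a standing assumption under which Lemma~\ref{lemm: SchurComp} is stated and used.
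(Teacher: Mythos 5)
Your proof is correct and follows essentially the same approach as the paper: both apply Lemma~\ref{lemm: SchurComp} with $\vect{d}=\vect{\chi}_{s,t}$ (which is valid since $s,t\in K$) and read off $R_G(s,t)=R_H(s,t)$ for every terminal pair. The only cosmetic difference is that the paper introduces a separate symbol $\vect{\chi}'_{s,t}$ for the $k$-dimensional restriction, whereas you describe the restriction in words.
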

\begin{proof}%[Proof of Lemma~\ref{lemma:exact_schur}]
	Fix some terminal pair $(s,t)$ and consider the vectors $\vect{\chi}_{s,t}$ and $\vect{\chi}'_{s,t}$ of dimension $n$ and $k$, respectively. Lemma \ref{lemm: SchurComp}, the definition of effective resistance and the fact that $\vect{\chi}_{s,t}$ and $\vect{\chi}'_{s,t}$ are valid vectors for $\mat{L}$ and $\mat{L}^{K}_S$ give: $R_G(s,t) = \vect{\chi}_{s,t}^{T}\mat{L}^{\dagger}\vect{\chi}_{s,t} = \vect{\chi'}_{s,t}^{T}\mat{L}_{S}^{K\dagger}\vect{\chi}'_{s,t} =  R_H(s,t).$
\end{proof}
%\begin{proof}
%	Fix some terminal pair $(s,t)$ and consider the vectors $\vect{\chi}_{s,t}$ and $\vect{\chi}'_{s,t}$ of dimension $n$ and $k$, respectively. Lemma \ref{lemm: SchurComp}, the definition of effective resistance and the fact that $\vect{\chi}_{s,t}$ and $\vect{\chi}'_{s,t}$ are valid vectors for $\mat{L}$ and $\mat{L}^{S}_K$ give:
%	\[
%	R_G(s,t) = \vect{\chi}_{s,t}^{T}\mat{L}^{\dagger}\vect{\chi}_{s,t} = \vect{\chi'}_{s,t}^{T}\mat{L}_{S}^{K\dagger}\vect{\chi}'_{s,t} =  R_H(s,t).
%	\]
%\end{proof}

\paragraph{Approximate Schur Complement.} 
Durfee et al.~\cite{DurfeeKPRS16} recently gave a nearly-linear time construction of \emph{approximate Schur Complement} that works even for \emph{general} $k$-terminal graphs. %and has no restriction on the number of terminals. %For a detailed discussion on approximate Schur Complements, see Appendix~\ref{app: section23}. 

\begin{lemma}[\cite{DurfeeKPRS16}] \label{lemm: VertexSpars}
	Fix $\varepsilon \in (0,1/2)$ and $\delta \in (0,1)$. Let $G=(V,E,\vect{w})$ be a weighted graph with $n$ vertices, $m$ edges. Let $K \subset V$ with $|K|=k$. Let $\mat{L}$ be the Laplacian of $G$ and $\mat{L}_S^{K}$ be the corresponding Schur complement with respect to $K$. Then there is an algorithm \textsc{ApproxSchur}$(G,K, \varepsilon, \delta)$ that returns a Laplacian matrix $\widetilde{\mat{L}}^{K}_{S}$ with associated graph $\widetilde{H}$ on the terminals $K$ such that the following statements hold with probability at least $1-\delta$: 
	\begin{enumerate}
		\item The graph $\widetilde{H}$ has $O(k\varepsilon^{-2}\log(n/\delta))$ edges.
		\item $\mat{L}_S^{K}$ and $\widetilde{\mat{L}}^{K}_{S}$ are spectrally close, that is
		\[
		\forall \vect{x} \in \mathbb{R}^{k} \quad (1-\varepsilon)\vect{x}^{T}\mat{L}^{K}_{S}\vect{x} \leq \vect{x}^{T}\widetilde{\mat{L}}^{K}_{S}\vect{x} \leq (1+\varepsilon)\vect{x}^{T}\mat{L}^{K}_{S}\vect{x}.
		\]
	\end{enumerate}
	The total running time for producing $\widetilde{H}$ is $\tilde{O}((n+m)\varepsilon^{-2} \log^{4} (n/\delta))$.
\end{lemma}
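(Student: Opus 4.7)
The plan is to realize $\widetilde{\mat{L}}^{K}_{S}$ through a randomized variant of Gaussian elimination that eliminates non-terminals while sparsifying the resulting clique structure via random walks. I would proceed in three stages: (i) a combinatorial representation of the exact Schur complement as a sum of low-rank Laplacians associated with random walks, (ii) a phased sampling algorithm that keeps intermediate sizes small, and (iii) a matrix-Chernoff argument to certify the spectral approximation.

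First, I would establish a probabilistic representation of $\mat{L}^{K}_{S}$. The key observation is that eliminating a single non-terminal $v$ corresponds to adding a weighted clique on $v$'s neighbors whose entries are exactly the probabilities of two-step random walks through $v$. Iterating this, and expanding $\mat{L}_{N}^{-1}$ as a Neumann series in the random-walk transition matrix restricted to non-terminals, one writes $\mat{L}^{K}_{S}$ as a sum of rank-$\leq 2$ Laplacian summands, one per edge of $G$: for each edge $e = (u,v)$ one runs two independent random walks from $u$ and $v$ until they first reach $K$, and connects the two absorbing terminals by an edge whose weight is determined by $\vect{w}(e)$ and the walk lengths. Taking expectation over these walks reproduces $\mat{L}^{K}_{S}$ exactly.

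Second, I would build $\widetilde{H}$ by sampling from this representation and organising the work into $O(\log n)$ phases. In each phase, a constant fraction of the surviving non-terminals is eliminated by drawing $\rho = O(\varepsilon^{-2}\log(n/\delta))$ independent walk pairs per touching edge, producing rank-one summands with the correct mean. To keep the edge count bounded, the Laplacian produced by each phase is immediately re-sparsified with a nearly-linear-time edge sparsifier (e.g.\ Spielman--Srivastava), pulling the count back down to $O(n' \varepsilon^{-2}\log(n/\delta))$, where $n'$ is the current vertex count. Continuing until only $K$ remains leaves $O(k\varepsilon^{-2}\log(n/\delta))$ edges, giving item~(1). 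For item~(2), a matrix Chernoff bound certifies that each phase produces a $(1 \pm \varepsilon/\log n)$-spectral approximation with failure probability $\delta / \log n$; a union bound and composition across phases yields the overall $(1 \pm \varepsilon)$ guarantee with probability $1 - \delta$. The total runtime is dominated by $O(\log n)$ phases, each performing one near-linear sparsification plus $O(\log n)$ calls to a nearly-linear-time Laplacian solver for the hitting-probability estimates, accounting for the $\log^{4}$ factor in the stated bound.

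The hardest step will be bounding the spectral norm of an individual walk contribution so that matrix Chernoff actually delivers the claimed $k\varepsilon^{-2}\log(n/\delta)$ edge count. On poorly conditioned graphs, random walks can be arbitrarily long and the resulting rank-one Laplacians can be enormous, which would wreck the variance estimate. The resolution, and the technical core of the argument, is to truncate the walks at a carefully chosen length and to reweight the surviving contributions by their estimated hitting probabilities so that each summand is bounded by a leverage-score-like fraction of $\mat{L}^{K}_{S}$ in the Loewner order; this is precisely where the Laplacian-solver calls enter.
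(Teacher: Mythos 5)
This lemma is not proved in the paper; it is quoted verbatim with a citation to Durfee, Kyng, Peebles, Rao, and Sachdeva \cite{DurfeeKPRS16}, so there is no in-paper proof to compare against. Judged on its own, your sketch is a faithful reconstruction of the ideas behind that cited construction: the Neumann-series/random-walk representation of $\mat{L}^K_S$ (so that each edge of $G$ contributes, in expectation, a low-rank Laplacian supported on the terminals its walks hit), phased elimination of a constant fraction of non-terminals with per-phase resparsification to keep intermediate sizes at $O(n'\varepsilon^{-2}\log(n/\delta))$, a matrix-Chernoff argument phase by phase with error split as $\varepsilon/\log n$ and failure probability $\delta/\log n$, and the $\log^4$ factor coming from nested solver/sparsifier calls. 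You also correctly identify the genuine technical obstruction — that raw walk contributions can have unbounded operator norm relative to $\mat{L}^K_S$, so Chernoff fails unless the sampling is organized so each summand is bounded in Loewner order by roughly a $1/\rho$ fraction of the Schur complement — and the standard remedy (leverage-score-style reweighting using approximate hitting/solver information, plus walk truncation). The sketch is high-level and elides the exact bookkeeping of how the per-edge walk sampler is coupled to an approximate leverage-score oracle, but the architecture matches the cited work and no step you propose is wrong in principle. If you want to turn this into a complete proof you would need to pin down the precise walk-to-Laplacian map (so that expectations exactly reproduce $\mat{L}^K_S$ rather than merely approximate it), and make the truncation argument quantitative; but for this paper the right move is simply to cite \cite{DurfeeKPRS16}, as the authors do.
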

In the following, we call the Laplacian $\widetilde{\mat{L}}^{K}_{S}$ (or equivalently, the graph $\widetilde{H}$) satisfying the above two conditions an \emph{approximate Schur Complement} of $G$ with respect to $K$. Note that by definition, the graph $\widetilde{H}$ is a $(1\pm\varepsilon)$-spectral sparsifier of the graph $H$ that is associated with graph $\mat{L}^{K}_{S}$, which in turn is a $1$-vertex resistance sparsifier of $G$ with respect to $K$. Therefore, $\widetilde{H}$ is a $(1\pm O(\varepsilon))$-vertex resistance sparsifier of $G$ with respect to $K$ (see Section \ref{sec:preliminaries}).

\subsection{A Fully Dynamic Algorithm in Minor-Free Graphs: Proof of Theorem~\ref{thm:edge_worst}}
We now present a fully dynamic algorithm for maintaining the energy of electrical flows up to a $(1+\varepsilon)$ factor in minor-free graphs and prove Theorem~\ref{thm:edge_worst}. We start with the special case of planar graphs and give an algorithm with \emph{amortized} update time guarantee. %that the graph is planar graphs with \emph{amortized} $\tilde{O}(r\varepsilon^{-2})$ update time and worst-case $\tilde{O}((r+n/\sqrt{r})\varepsilon^{-2})$ query time. The algorithm with worst-case update time guarantee as well as the proof of Theorem~\ref{thm:edge_worst} will be deferred to Appendix~\ref{app: worstCase}.
\subsubsection{Planar Graphs with Amortized Update Time Guarantee}

\paragraph{Data Structure.} In our dynamic algorithm, we will maintain an $r$-division $\region=\{P_1,\cdots,P_\ell\}$ of $G$ with $\ell=O(n/r)$ and for each region $P_i$, we compute a graph $\widetilde{H}_i$ by invoking the algorithm \textsc{ApproxSchur} in Lemma~\ref{lemm: VertexSpars} with parameters $P_i$, $K=\partial_G(P_i)$, $\varepsilon=\frac{\varepsilon}{6}$ and $\delta=1/n^3$. 

Let $\Ds(G)$ denote such a data structure for $G$, and let $T_{\Ds(G)}$ denote the time to compute $\Ds(G)$. Note that by Lemma~\ref{lemma:kms_algorithm} and~\ref{lemm: VertexSpars}, $T_{\Ds(G)}=\tilde{O}(n+\frac{n}{r}\cdot r\varepsilon^{-2})=\tilde{O}(n\varepsilon^{-2})$. Furthermore, note that there are at most $O(n/r)$ regions, and for each such a region $P_i$, the corresponding graph $\widetilde{H}_i$ is \emph{not} an approximate Schur Complement of $P_i$ with respect to its boundary $\partial_G(P_i)$ with probability at most $1/n^3$. Therefore, by the union bound, with probability at least $1-n\cdot \frac{1}{n^3}=1-\frac{1}{n^2}$, for any $i\leq \ell$, the graph $\widetilde{H}_i$ is an approximate Schur Complement of $P_i$ with respect to $\partial_G(P_i)$, and thus a $(1 \pm \frac{\varepsilon}{6})$-spectral sparsifier of $H_i$, where $H_i$ denotes the exact Schur complement of $P_i$ with respect to $\partial_G(P_i)$. In the following, we will condition on this event. This data structure $\Ds(G)$ will be recomputed every $T_{\divi}:=\Theta(n/r)$ operations. %, for any $\rho$ such that $0\leq \rho\leq 1$. %\PP{Modify this.} Note that each $H_i$ and $\widetilde{H}_i$ can be treated as graphs defined on the set of all boundary vertices, i.e., $\bigcup_{i} V(\widetilde{H}_i)$, with appropriate isolated vertices. By Lemma~\ref{lemm: decomposability}, the graph $H=(V_H,E_H)$, where $V_H = \bigcup_{i} V(\widetilde{H}_i)$ and $E_H= \bigcup_{i} E(\widetilde{H}_i)$, is a $(1+O(\varepsilon))$-vertex resistance sparsifier for $G$ with respect to the union over boundary nodes.  
%Note that we can assume w.l.o.g. that $V(G) = V(G_i) = V(\widetilde{H}_i)$, since otherwise we add the missing vertices as isolated vertices both in $G_i$ and $H_i$.

\paragraph{Handling Edge Insertions/Deletions.} We now describe the \textsc{Insert} operation. Whenever we compute an approximate Schur Complement, we assume that the procedure \textsc{ApproxSchur} from Lemma \ref{lemm: VertexSpars} is invoked on the corresponding region and its boundary vertex set, with $\varepsilon=\frac{\varepsilon}{6}$ and error probability $\delta=1/n^3$. 
%Throughout the sequence of updates, we recompute the whole data structure every $T_\divi:=\Theta(n/r)$ operations. 
Let us consider inserting an edge $e=(x,y)$.
\begin{itemize}
	\item If both $x,y$ belong to the same region, say $P_i$, then we add the edge $e$ to $P_i$, and recompute an approximate Schur Complement $\widetilde{H}_i$ of the region $P_i$ (with respect to its boundary vertex set) from scratch. 
	\item If $x$ and $y$ do not belong to the same region, we do the following.
	\begin{itemize}
		\item If $x$ is an interior vertex of some region $P_x$, then adding an edge $(x,y)$ will make $x$ a boundary vertex. We then recompute an approximate Schur Complement $\widetilde{H}_x$ of $P_x$.
		\item If $y$ is an interior vertex of some region, then we handle it in the same way as we did for the interior vertex $x$.
		\item We treat the edge $(x,y)$ as a new region containing only this edge.
	\end{itemize}
\end{itemize}
Observe that for each insertion, the number of vertices in any region is always at most $r$, and we perform only a constant number of calls to \textsc{ApproxSchur}, Lemma \ref{lemm: VertexSpars} implies that the time to handle an edge insertion is $\tilde{O}(r\varepsilon^{-2})$. Furthermore, since each edge insertion may increase by a constant the number of boundary nodes and the total number of regions. %, we only need to recompute the data structure after $\Omega(n/r)$ operations. 
%\item 

We now describe the \textsc{Delete} operation. If we delete some edge $e=(x,y)$, let $P_i$ be the region such that both $x,y \in P_i$. We remove the edge from $P_i$, and then recompute an approximate Schur Complement $\widetilde{H}_i$ of $P_i$ with respect to its boundary. By Lemma~\ref{lemm: VertexSpars}, the cost of this resparsification step is bounded by $\tilde{O}(r\varepsilon^{-2})$. 

Since we recompute the data structure every $\Theta(n/r)$ operations, the amortized update time is
%\begin{eqnarray}
$\tilde{O}\left(\frac{n\varepsilon^{-2}}{n/r}+ r\varepsilon^{-2}\right)=\tilde{O}(r\varepsilon^{-2})$.%\label{exp:update_time} 
%\end{eqnarray}

%Since an $r$-division and the corresponding approximate Schur Complements will be recomputed every order $T_{\divi}=\Theta(r^{\rho})$ operations, for $0\leq \rho\leq 1$, this will ensure that throughout the sequence of operations each region will always contain at most $O(r)$ vertices and the total number of boundary vertices is $O(n/\sqrt{r})$, as long as $T_{\divi}\leq c\cdot n/\sqrt{r}$ for some constant $c>0$, or equivalently 
%\begin{eqnarray}
%r\leq O(n^{2/(1+2\rho)}).\label{eqn:upper_r}
%\end{eqnarray}

%Next 
%
%Amortizing the initialization over $T_\divi$ operations gives that the update time per operation is 
%\begin{eqnarray}
%\tilde{O}\left(\frac{n\varepsilon^{-2}}{r^{\rho}}+ r\varepsilon^{-2}\right).\label{exp:update_time} 
%\end{eqnarray}
%
%%Note that if we let $T_\divi=\Theta(r^{\rho})$ for any $0\leq \rho\leq 1$, then it always holds that $T_{\divi} \leq c\cdot\min\{n/\sqrt{r},r\}$. Furthermore, the update time per operation is \[ \tilde{O}\left(\frac{n}{r^\rho}+ r\varepsilon^{-2}\log n\log^4 (n/\delta)\right).\]
%Now observe that if we choose $r=n^{1/(1+\rho)}$, then Eq.~(\ref{eqn:upper_r}) will be satisfied, and the function~(\ref{exp:update_time}) gives an amortized update time $\tilde{O}(n^{1/(1+\rho)}\varepsilon^{-2})$. 
\paragraph{Handing Queries.} In order to return a $(1+\varepsilon)$-approximation of the energy of $s-t$ electrical flow for an \textsc{ElectricalFlow}($s,t$) query, it suffices to return a $(1-\frac{\varepsilon}{2})$-approximation of the effective $s-t$ resistance, for which we first need to review the static algorithm for computing effective resistance. The following result is due to Durfee et al.~\cite{DurfeeKPRS16} (which builds and/or improves upon~\cite{ChristianoKMST11,KoutisMP14,SpielmanT04}). %Christiano et al.~\cite{ChristianoKMST11} who build on the previous work of Koutis et al.~\cite{KoutisMP14} and Spielman and Teng~\cite{SpielmanT04} for solving Laplacian systems.
%\begin{theorem}[Approximate Vertex Potentials]  \label{thm: christiano}
%	Fix $\varepsilon \in (0,1/2)$ and let $G = (V,E,\vect{w})$ be a weighted graph. Let $U(\vect{w})=\frac{\max(\vect{w})}{\min(\vect{w})}$ be the ratio between the maximum and minimum edge weights. There is an algorithm \textsc{VertexPotentials} that computes in $\tilde{O}(m \log (U(\vect{w})/\varepsilon))$ time a vector of vertex potentials $\widetilde{\vect{\phi}}$ such that
%	$
%	(1-\varepsilon)R_G(s,t) \leq \widetilde{\vect{\phi}}(s) - \widetilde{\vect{\phi}}(t) \leq R_G(s,t).
%	$
%\end{theorem}

\begin{theorem}[\cite{DurfeeKPRS16}]~\label{thm:effective_resistance}
	Fix $\varepsilon \in (0,1/2)$ and let $G = (V,E,\vect{w})$ be a weighted graph with $n$ vertices and $m$ edges. There is an algorithm \textsc{EffectiveResistance} that computes a value $\psi$ such that 
	$(1-\varepsilon)R_G(s,t) \leq \psi \leq (1+\varepsilon)R_G(s,t),$
	in time $\tilde{O}(m+\frac{n}{\varepsilon^2})$ with high probability.%\footnote{Pan: check this ``high probability''.}. 
\end{theorem}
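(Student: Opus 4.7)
My plan is to reduce the problem to a single Laplacian solve on a sparsified version of $G$. Concretely, I would first compute a $(1\pm\varepsilon/4)$-spectral sparsifier $H$ of $G$ in $\tilde{O}(m)$ time using a nearly-linear-time sparsification routine (e.g., the Spielman--Srivastava scheme based on effective-resistance sampling, with resistances estimated via Johnson--Lindenstrauss). With this choice of parameter, $H$ has $\tilde{O}(n/\varepsilon^{2})$ edges, and by the fact in Section~\ref{sec:preliminaries} relating spectral sparsifiers to resistance sparsifiers, every pairwise effective resistance in $H$ agrees with the one in $G$ up to a $(1\pm\varepsilon/4)$ factor; in particular $R_H(s,t) = (1\pm\varepsilon/4)\, R_G(s,t)$.

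Next I would invoke a nearly-linear-time SDD/Laplacian solver (Spielman--Teng, Koutis--Miller--Peng, \ldots) on the system $\mat{L}_H \vect{\phi} = \vect{\chi}_{s,t}$ to produce an approximate solution $\tilde{\vect{\phi}}$ whose relative error in the $\mat{L}_H$-norm is at most $\varepsilon/4$, i.e., $\|\tilde{\vect{\phi}} - \mat{L}_H^{\dagger}\vect{\chi}_{s,t}\|_{\mat{L}_H} \leq (\varepsilon/4)\|\mat{L}_H^{\dagger}\vect{\chi}_{s,t}\|_{\mat{L}_H}$. Because $H$ has $\tilde{O}(n/\varepsilon^{2})$ edges, this step runs in $\tilde{O}(n/\varepsilon^{2})$ time. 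I then output $\psi := \vect{\chi}_{s,t}^{T}\tilde{\vect{\phi}} = \tilde{\vect{\phi}}(s) - \tilde{\vect{\phi}}(t)$. To see that $\psi$ is close to $R_H(s,t)$, write $\vect{\phi}^{\ast} = \mat{L}_H^{\dagger}\vect{\chi}_{s,t}$, so that $R_H(s,t) = \vect{\chi}_{s,t}^{T}\vect{\phi}^{\ast} = \|\vect{\phi}^{\ast}\|_{\mat{L}_H}^{2}$, and bound $|\vect{\chi}_{s,t}^{T}(\tilde{\vect{\phi}} - \vect{\phi}^{\ast})| = |(\vect{\phi}^{\ast})^{T}\mat{L}_H(\tilde{\vect{\phi}}-\vect{\phi}^{\ast})|$ via Cauchy--Schwarz in the $\mat{L}_H$-norm by $(\varepsilon/4)\,R_H(s,t)$.

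Composing the two multiplicative errors then gives $\psi = (1\pm\varepsilon)\, R_G(s,t)$ after rescaling the error parameters of each subroutine by a constant. The total running time is $\tilde{O}(m)$ for sparsification plus $\tilde{O}(n/\varepsilon^{2})$ for the single Laplacian solve, matching the stated $\tilde{O}(m+n/\varepsilon^{2})$ bound, and the high-probability guarantee is inherited from the randomized sparsifier. The only genuinely delicate point is the bookkeeping for the composition of approximation guarantees; it is otherwise a routine pipeline of known primitives, so I do not expect a substantive obstacle beyond choosing the internal accuracy parameters of the sparsifier and the solver to be suitable constant fractions of~$\varepsilon$.
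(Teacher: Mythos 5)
This theorem is cited by the paper directly from Durfee et al.\ without an accompanying proof, so there is no in-paper argument to compare against. Your reconstruction is correct and is the standard ``sparsify-then-solve'' pipeline that underlies such results: spectrally sparsify $G$ to $\tilde{O}(n/\varepsilon^{2})$ edges (which by the paper's Fact~2.4 preserves all effective resistances to within $1\pm O(\varepsilon)$), run a nearly-linear-time Laplacian solver on the sparsifier to get an approximate potential vector, read off the potential difference, and control the solver error via Cauchy--Schwarz in the $\mat{L}_H$-norm using $\|\vect{\phi}^{\ast}\|_{\mat{L}_H}^{2}=R_H(s,t)$. Two small bookkeeping points you should make explicit if this were written out in full: the sparsification step itself costs $\tilde{O}(m+n/\varepsilon^{2})$ rather than $\tilde{O}(m)$ (just producing the $\tilde{O}(n/\varepsilon^{2})$ sampled edges already costs that much, though this does not change the final bound), and the ``with high probability'' qualifier in the theorem refers only to the correctness of the sparsifier/solver, not to the running time, which is worst-case. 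Otherwise the internal accuracy choices ($\varepsilon/4$ for both stages) compose correctly for $\varepsilon\in(0,1/2)$, and no genuine gap remains.
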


To answer the query \textsc{ElectricalFlow}$(s,t)$, we will form a smaller auxiliary graph that is the union of the regions containing $s,t$ and the approximate Schur Complements of the remaining regions with respect to their boundaries, and output the approximate effective $s-t$ resistance of the smaller graph. More precisely, let $P_s$ and $P_t$ be two regions that contain $s$ and $t$, respectively. Let $\mathcal{J}$ denote the index set of all the remaining regions, i.e., $\mathcal{J}=\{i:P_i\in\region\setminus \{P_s,P_t\}\}$. For each region $P_i$ such that $i\in\mathcal{J}$, as before, let $\widetilde{H}_i$ be the approximate Schur Complement of $P_i$ that we have maintained. Now we form an auxiliary graph $H$ by taking the union over the regions $P_s$ and $P_t$ and all the approximate Schur Complements of the remaining regions, i.e., $H = P_s \cup P_t \cup \bigcup_{i\in \mathcal{J}} \widetilde{H}_i$. %(For technical reasons, we will think of $H$ as a simple graph such that all the parallel edges from the union of subgraphs are replaced by single edges with weights equal to the sum of corresponding parallel edges.) 
We then run the algorithm \textsc{EffectiveResistance} on $H$ with $\varepsilon=\frac{\varepsilon}{6}$ to obtain an estimator $\psi$ and return $c_H(s,t):=(1-\frac{\varepsilon}{6})\psi$. Next we show that the returned value is a good approximation to the actual effective resistance.

\begin{lemma}~\label{lem:edge_query}
	Fix $\varepsilon \in (0,1)$. Let $G=(V,E,\vect{w})$ be some current graph and $s,t \in V$. Further, let $H = P_s \cup P_t \cup \bigcup_{i\in \mathcal{J}} \widetilde{H}_i$ be defined as above and let $c_H(s,t)$ be the value returned as above by invoking \textsc{EffectiveResistance} on $H$. Then, with high probability, we get
	\[
	(1-\frac{\varepsilon}{2})R_G(s,t) \leq c_H(s,t) \leq (1-\frac{\varepsilon}{2}) R_G(s,t).
	\]
\end{lemma}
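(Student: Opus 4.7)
The plan is to prove the lemma via a three-step approximation chain, interposing an exact intermediate graph $H'$ between $G$ and the query graph $H$, so that we can cleanly separate the vertex-reduction approximation from the effective-resistance approximation.

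First, I will define the intermediate graph
\[
H' \;=\; P_s \,\cup\, P_t \,\cup\, \bigcup_{i\in\mathcal{J}} H_i,
\]
where $H_i$ is the \emph{exact} Schur complement of $P_i$ with respect to $\partial_G(P_i)$. The key observation is that $R_{H'}(s,t) = R_G(s,t)$. To see this, note that for each $i\in\mathcal{J}$, the interior vertices of $P_i$ are (by definition of the $r$-division) only incident to edges inside $P_i$, so eliminating them via Gaussian elimination alters only the block of $\mat{L}$ associated with $P_i$ and leaves the Laplacian entries on the remaining vertices equal to those of the Schur complement $\mat{L}_{S,i}^{\partial_G(P_i)}$. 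Since $s,t$ are not interior vertices of any region in $\mathcal{J}$, Lemma~\ref{lemm: SchurComp} applied iteratively (or a single application to the whole non-terminal set consisting of all such interior vertices) gives $\vect{\chi}_{s,t}^T \mat{L}^\dagger \vect{\chi}_{s,t} = \vect{\chi}_{s,t}^T \mat{L}_{H'}^\dagger \vect{\chi}_{s,t}$, i.e.\ the $s$–$t$ effective resistances in $G$ and $H'$ are equal.

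Next, I will argue that $H$ is a $(1\pm\varepsilon/6)$-spectral sparsifier of $H'$. Since each $\widetilde{H}_i$ is (by the data-structure invariant conditioned on the high-probability event) a $(1\pm\varepsilon/6)$-spectral sparsifier of $H_i$, and $P_s$ and $P_t$ are trivially $(1\pm 0)$-sparsifiers of themselves, Lemma~\ref{lemm: decomposability} immediately yields the claim. By the fact relating spectral and resistance sparsification, we therefore obtain
\[
(1-\tfrac{\varepsilon}{6})\, R_H(s,t) \;\le\; R_{H'}(s,t) \;\le\; (1+\tfrac{\varepsilon}{6})\, R_H(s,t),
\]
and combining with Step 1 gives two-sided bounds on $R_H(s,t)$ in terms of $R_G(s,t)$.

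Finally, invoking Theorem~\ref{thm:effective_resistance} with accuracy $\varepsilon/6$ on $H$ returns, w.h.p., a value $\psi$ with $(1-\varepsilon/6) R_H(s,t) \le \psi \le (1+\varepsilon/6) R_H(s,t)$. Composing all three estimates and using $c_H(s,t) = (1-\varepsilon/6)\psi$, a short calculation using $(1-\varepsilon/6)^2/(1+\varepsilon/6) \ge 1-\varepsilon/2$ and $(1-\varepsilon/6)(1+\varepsilon/6)/(1-\varepsilon/6) = 1+\varepsilon/6 \le 1+\varepsilon/2$ for $\varepsilon\in(0,1)$ yields the desired $(1\pm\varepsilon/2)$ approximation (the lemma as stated appears to contain a typographical sign error on the upper bound). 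A union bound over the two high-probability events — the data-structure invariant and correctness of \textsc{EffectiveResistance} — completes the proof. The only mildly delicate step is the first one (verifying $R_{H'}(s,t)=R_G(s,t)$), since one must carefully check that the eliminated vertices are disjoint across regions and disjoint from $\{s,t\}$; the rest is routine composition of sparsification factors.
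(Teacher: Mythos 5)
Your proof takes essentially the same route as the paper's: it introduces the same exact intermediate graph $P_s \cup P_t \cup \bigcup_{i\in\mathcal{J}} H_i$ (called $G'$ in the paper, $H'$ in yours), proves $R_G(s,t)$ is preserved exactly by the observation that interior vertices of each region have all their neighbors within that region so Gaussian elimination commutes, then composes the $(1\pm\varepsilon/6)$-spectral sparsification via Lemma~\ref{lemm: decomposability} with the Theorem~\ref{thm:effective_resistance} guarantee. You are also correct that the stated lemma has a sign typo in the upper bound — the paper's own proof derives $(1+\varepsilon/2)R_G(s,t)$ there, matching your calculation.
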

\begin{proof}
	For the sake of analysis, we divide the sequence of updates into intervals each consisting of $T_\divi=\Theta(n/r)$ operations. Let $I$ be the interval in which the query is made. Let $G^{(0)}$ denote the graph at the beginning of $I$. We compute the data structure $\Ds(G^{(0)})$ of $G^{(0)}$, which contains an $r$-division $\region^{(0)}$ and the corresponding approximate Schur Complements $\widetilde{H}_i^{(0)}$. As mentioned before, with probability at least $1-\frac{1}{n^2}$, each of the graphs $\widetilde{H}_i^{(0)}$ will be a $(1\pm \frac{\varepsilon}{6})$-spectral sparsifier of the exact Schur Complement ${H}_i^{(0)}$of the corresponding region with respect to its boundary vertex set. 
	
	Let $G$ be the current graph when the query is made, which is formed from $G^{(0)}$ after some updates in $I$. Let $\region=\{P_i\}_i, \widetilde{H}_i, 1\leq i\leq O(n/r)$ be the $r$-division and the approximate Schur Complements in the current data structure, respectively. Let $H_i$ denote the exact Schur Complement of the region $P_i$ with respect to its boundary vertex set. Since the total number of updates in $I$ is $\Theta(n/r)$, and each update only involves a constant number of invocations of \textsc{ApproxSchur} with error probability $1/n^3$ that recomputes the approximate Schur Complements of some regions, we have that with probability at least $1-O(n/r)\cdot \frac{1}{n^3}\geq 1-\frac{1}{n^2}$, these recomputed approximate Schur Complements are $(1\pm \frac{\varepsilon}{6})$-spectral sparsifiers of the corresponding exact Schur Complements. Therefore, for the current graph $G$ and its data structure, with probability $1-2\cdot \frac{1}{n^2}=1-\frac{2}{n^2}$, for all $i$, the graph $\widetilde{H}_i$ is a $(1\pm \frac{\varepsilon}{6})$-spectral sparsifier of $H_i$. In the following, we will condition on this event.
	
	Recall that $P_s$ and $P_t$ are two regions that contain $s$ and $t$, respectively. Consider the graph $G' = P_s \cup P_t \cup \bigcup_{i\in \mathcal{J}} H_i$. %Let $K$ denote the set consisting of all nodes in $P_s$ and $P_t$ and all boundary nodes in $P_i$, $i\in\mathcal{J}$. 
	We have the following lemma. %whose proof is deferred to the full version of the paper.%We can equivalently think of $G'$ as taking the Schur complement of the graph $G$ with respect to a vertex set $K$, which is defined as the union of all nodes in $P_s$ and $P_t$ and all boundary nodes in $P_i$, $i\in\mathcal{J}$. This is true since for any $i\in\mathcal{J}$, all the interior vertices in region $P_i$ only have neighbors in $P_i$ and thus can be eliminated independently from other regions. Therefore, by Lemma~\ref{lemma:exact_schur}, 
	\begin{lemma}\label{lemma:exact}
		For any two vertices $u,v\in V(G')$, it holds that $R_{G}(u,v) = R_{G'}(u,v)$. 
	\end{lemma}
	\begin{proof}%[Proof of Lemma~\ref{lemma:exact}]
		Note that for each $i\in \mathcal{J}$, the exact Schur Complement $H_i$ for $P_i$ with respect to $\partial(P_i)$ is equivalent to recursively performing Gaussian eliminations on all the interior vertices in $P_i$ (see discussions in Section~\ref{sec: SchurComplement}). We consider the following two processes: 
		\begin{enumerate}
			\item for each $i\in \mathcal{J}$, in the subgraph $P_i$, recursively perform Gaussian elimination on all interior vertices $w_1^{(i)},\cdots,w_{k_i}^{(i)} \in V(P_i)\setminus\partial_G(P_i)$, where  the ordering among these vertices is chosen arbitrarily. Note that for each $i$, the resulting graph is the exact Schur Complement $H_i$ with respect to $\partial_G(P_i)$.
			\item in the graph $G$, for each $i\in \mathcal{J}$, recursively perform Gaussian elimination on vertices $w_1^{(i)},\cdots,w_{k_i}^{(i)}$. 
		\end{enumerate}
		
		Note that since for each interior vertex $w$ in $P_i$, all its neighbors are also in $P_i$, performing Gaussian elimination of $w$ in $P_i$ is equivalent to performing Gaussian elimination of $w$ in $G$. Therefore, the resulting graph for $G$ in the second process is the same as $G'=P_s\cup P_t\cup\bigcup_{i\in\mathcal{J}}H_i$, which is the resulting graph in the first process. %The only difference is that in $G'$, there might be parallel edges between any two endpoints $x,y$, while in $G''$, there exists a single edge $(x,y)$ with weight equal to the sum of weights of all parallel edges between $x,y$ in $G'$. It is obvious that $L_{G'}=L_{G''}$. Thus, $R_{G'}(u,v)=R_{G''}(u,v)$ for any two vertices $u,v\in V(G')$.
		
		On the other hand, performing Gaussian elimination of all vertices $w_1^{(i)},\cdots,w_{k_i}^{(i)}, i\in \mathcal{J}$ in $G$ results in the exact Schur Complement for $G$ with respect to $V(G')=\cup_{i\in \mathcal{J}} \partial_G(P_i) \cup\{P_s,P_t\}$. That is, $G'$ is an exact Schur Complement of $G$. Then by Lemma~\ref{lemma:exact_schur}, we know that for any $u,v\in V(G')$, $R_G(u,v)=R_{G'}(u,v)$, which finishes the proof of the Lemma.
	\end{proof}
	
	It follows from the above lemma that $R_{G}(s,t) = R_{G'}(s,t)$. We next argue that $H$ is a $(1\pm\frac{\varepsilon}{6})$-resistance sparsifier to $G'$ with high probability. First, note that each of the subgraphs $P_s,P_t$, $H_i$ and $\widetilde{H}_i$, $i \in\mathcal{J}$ can be treated as graphs defined on the same vertex set $V(G')$ with appropriate isolated vertices. 
	%Now recall that $\widetilde{H}_i$ is the approximate Schur Complement of $P_i$ that is obtained by applying $G$ with $P_i$, $K$ with the boundary vertex set of $P_i$ and $\delta=1/n^3$ in the algorithm \textsc{ApproxSchur}. By Lemma~\ref{lemm: VertexSpars}, for each $i\in\mathcal{J}$, $\widetilde{H}_i$ is $(1+O(\varepsilon))$-spectral sparsifier of $H_i$ with probability at least $1-1/n^3$. 
	Second, since for each $i\in\mathcal{J}$, $\widetilde{H}_i$ is $(1\pm\frac{\varepsilon}{6})$-spectral sparsifier of $H_i$, and $P_s$, $P_t$ are sparsifiers of itself, we know that by Lemma \ref{lemm: decomposability} about the decomposability of sparsifiers, $H$ is a $(1\pm \frac{\varepsilon}{6})$-spectral sparsifier of $G'$. Since every $(1\pm \frac{\varepsilon}{6})$-spectral sparsifier is a $(1\pm \frac{\varepsilon}{6})$-resistance sparsifier, it holds that
	\begin{equation} \label{eqn: 1}
	(1-\frac{\varepsilon}{6})R_H(s,t) \leq R_{G'}(s,t) \leq (1+\frac{\varepsilon}{6}) R_H(s,t).
	\end{equation}
	
	%and 
	Since by definition we have $c_H(s,t) := (1-\frac{\varepsilon}{6})\psi$, Theorem \ref{thm:effective_resistance} implies that 
	\begin{equation} \label{eqn: 2}
	(1-\frac{\varepsilon}{6})^2R_H(s,t) \leq (1-\frac{\varepsilon}{6}) \psi \leq (1-\frac{\varepsilon}{6}) (1+\frac{\varepsilon}{6})R_H(s,t),
	\end{equation}
	with high probability. Combining~(\ref{eqn: 1}) and~(\ref{eqn: 2}) we get
	\[
	\frac{(1-\frac{\varepsilon}{6})^2}{(1+\frac{\varepsilon}{6})} R_{G'}(s,t) \leq (1-\frac{\varepsilon}{6}) \psi \leq (1+\frac{\varepsilon}{6})R_{G'}(s,t),
	\]
	which in turn along with $R_G(s,t) = R_{G'}(s,t)$ imply that, 
	\[
	(1-\frac{\varepsilon}{2}) R_G(s,t) \leq (1-\frac{\varepsilon}{6}) \psi \leq (1+\frac{\varepsilon}{2})R_G(s,t).
	\]
	%Finally, note that we restart our data structure from scratch every $T_\divi=\Theta(r^\rho)$ operations. Applying yet another union bound over the sequence of graphs that result from such a recomputation, we get that the probability the algorithm does not maintain a $(1-3\varepsilon)$-approximation to the effective $s-t$ resistance is at most $\sqrt{r}/n^{2} \leq 1/n$. 
	Therefore, with high probability, the algorithm outputs a $(1-\frac{\varepsilon}{2})$-approximation to the effective $s-t$ resistance. %The lemma then follows by setting $\varepsilon$ to be $\varepsilon/4$ in the previous arguments.
\end{proof}
To bound the query time, we need to bound the size of the $H=P_s \cup P_t \cup \bigcup_{i\in \mathcal{J}} \widetilde{H}_i$. As in the proof of~Lemma~\ref{lem:edge_query}, we let $G^{(0)}$ denote the graph right after the last rebuilding of the data structure. Let $\region^{(0)}$ denote the corresponding $r$-division. By definition, for each $P\in \region^{(0)}$, $|P|\leq r$ and the size of all the boundary vertices is $c_2n/\sqrt{r}$. By Lemma~\ref{lemma:kms_boundary}, we have that $\sum_{P\in\region^{(0)}}|\partial_{G^{(0)}}(P)|\leq O(n/\sqrt{r})$, i.e., the sum of the numbers of boundary vertices over all regions of $G^{(0)}$ is at most $O(n/\sqrt{r})$. 

Note that there will be at most $T_\divi=\Theta(n/r)$ updates between $G^{(0)}$ and $G$, the graph to which the query is performed, and each update can only increase the number of boundary vertices and the total number of regions by a constant. These facts imply that the size of all boundary nodes is $O(n/\sqrt{r})$. Therefore, we have that $|V(H)|\leq O(r+n/\sqrt{r})$, and that the sum of the numbers of boundary vertices of the regions of $G$ is at most $O(n/\sqrt{r})$, i.e., $\sum_{i}|V(\widetilde{H}_i)|\leq O(n/\sqrt{r})$.

On the other hand, by Lemma~\ref{lemm: VertexSpars}, for each $i$, $|E(\widetilde{H}_i)|=O(|V(\widetilde{H}_i)|\cdot \varepsilon^{-2}\log n)$. Thus, 
\begin{eqnarray*}
	|E(H)|&\leq& |E(P_s)|+|E(P_t)|+\sum_{i}|E(\widetilde{H}_i)| \leq O(r) + \sum_{i}|V(\widetilde{H}_i)| \cdot O(\varepsilon^{-2}\log n) \\
	&=&O((r+n/\sqrt{r})\varepsilon^{-2}\log n).
\end{eqnarray*}

%Since there are at most $O(n/r)$ regions, the size of $H$ is $\tilde{O}(n/(\sqrt{r}\varepsilon^{2}) + r)$. 
By Theorem \ref{thm:effective_resistance}, it follows that the worst-case query time is $\tilde{O}((r+n/\sqrt{r})\varepsilon^{-2})$. Thus, our algorithm has amortized update time $\tilde{O}(r\varepsilon^{-2})$ and worst-case query time $\tilde{O}((r+n/\sqrt{r})\varepsilon^{-2})$.%, which is $\tilde{O}((n^{\frac{1}{1+\rho}}+n^{\frac{1+2\rho}{2+2\rho}})\varepsilon^{-2}\log (U(\vect{w})/\varepsilon))$ if we set $r=n^{1/(1+\rho)}$ as before. 

%To achieve asymptotically the same worst-case update time, we use a standard global rebuilding technique (see the full version for details), which then finishes the proof of Theorem~\ref{thm:edge_worst} when the input graph is planar.

\subsubsection{From Amortized Update Time to Worst-Case Update Time} \label{app: worstCase}
In the following, we show how to transform the above dynamic algorithm for electrical flows with amortized update time $\tilde{O}(r\varepsilon^{-2})$ into one with the same worst-case update time bound. Our transformation is based on a global rebuilding technique, that has been frequently used for other dynamic graph algorithms.%~(see, e.g., \cite{AbrahamCG12}).

\paragraph{Data Structure.} The data structure is the same as we maintained before. That is, given the input planar graph $G$, we will compute an $r$-division of $G$ and for each region, an approximate Schur Complement (with respect to the corresponding boundary). Let $\Ds(G)$ denote this data structure.

\paragraph{Updates.} Now we show how to handle updates. Recall that $T_{\divi}=\Theta(n/r)$ denotes the number of operations after which we recompute the data structure in the previous algorithm (with amortized update time guarantee). Let $\Delta=\frac{T_{\divi}}{4}$. For any sequence of updates, we divide it into a number of intervals, each consisting of $T_{\divi}$ updates (except the last interval which may contain less than $T_{\divi}$ updates).   

For each interval $I$ of updates, we further divide it into $3$ subintervals, $I_1,I_2,I_3$ such that $I_1$ contains $2\Delta$ updates and both $I_2$ and $I_3$ contain $\Delta$ updates.

Now consider any interval $I$ with subintervals $I_1,I_2,I_3$. We will maintain two copies of data structures $\Ds_1,\Ds_2$ such that $\Ds_1$ will be used to answer queries, and the second copy $\Ds_2$ is gradually constructed in the subinterval $I_2$ (in each round a $1/\Delta$ fraction of the re-computation is executed) and then it is gradually updated during subinterval $I_3$ (in each round two updates are executed). At the end of $I$, $\Ds_2$ will be ready to serve queries (that occur in the next interval). We next switch the roles of $\Ds_1$ and $\Ds_2$, and continue with the next interval $I'$ and gradually empty the outdated data structure $\Ds_2$ (in the first subinterval of $I'$).  %by the new first copy $\Ds_1$ and also compute from scratch the new second copy $\Ds_2$.

Let $G_1$ be the graph at the beginning of $I$, and let $G_2$ be the graph at the beginning of $I_2$, i.e., $G_2$ is the graph obtained from $G_1$ after sequentially performing updates in the subinterval $I_1$. 
Note that to construct $G_2$ from $G_1$ with all updates in $I_1$, it takes time $O(\Delta)$, as the number of edges by which they differ is $O(\Delta)$. Recall that it takes time $T_{\Ds(G)}=\tilde{O}(n\varepsilon^{-2})$ to construct the data structure $\Ds(G)$ for a planar graph $G$. If we let $T_\re$ denote the time for first constructing the graph $G_2$ from $G_1$ with all updates in $I_1$ and then computing the data structure $\Ds(G_2)$, then $T_\re=\tilde{O}(\Delta+n\varepsilon^{-2})=\tilde{O}(n\varepsilon^{-2})$. Let $s$ denote the size of the outdated data structure. Note that $s=\tilde{O}(n\varepsilon^{-2})$.

Now let $\Ds_1$ be the data structure for the graph $G_1$ and let $\Ds_2$ be the outdated data structure from the last interval (if there is no interval before $I$, then $\Ds_2$ is empty). Consider the $j$-th update in the interval $I$ for $1\leq j\leq 4\Delta$. We will do the following.

\begin{enumerate}
	\item Update $\Ds_1$ corresponding to the $j$-th update, and also store this update.
	\item If the $j$-th update belongs to $I_1$, i.e., $1\leq j\leq 2\Delta$, then use the next $\frac{s}{2\Delta}$ steps to empty the outdated $\Ds_2$.
	\item If the $j$-th update belongs to $I_2$, i.e., $2\Delta+1\leq j\leq 3\Delta$, then use the next $\frac{T_\re}{\Delta}$ steps for constructing the graph $G_2$, and the data structure $\Ds(G_2)$. Let $\Ds_2=\Ds(G_2)$. 
	\item If the $j$-th update belongs to $I_3$, i.e., $3\Delta+1 \leq j\leq 4\Delta$, then we update $\Ds_2$ with the $(2j-4\Delta-1)$-th and the ($2j-4\Delta)$-th updates of interval $I$.
	\item If $j=4\Delta$, then $\Ds_1=\Ds_2$ and $\Ds_2=\Ds_1$.
\end{enumerate}

Note that during the whole procedure, the data structure $\Ds_1$ contains the $r$-division and the sparsifiers of the current graph $G$, and thus can be used to answer queries. On the other hand, at the end of interval $I$ (and right before we assign $\Ds_2$ to $\Ds_1$), $\Ds_2$ will be \emph{ready} to serve queries, i.e., $\Ds_2=\Ds(G')$ where $G'$ is the graph after all the updates before and in $I$. 

For the running time, note that for each update $j\in I$, we need $\tilde{O}(r\varepsilon^{-2})$ time to update the data structure $\Ds_1$ and $\Ds_2$, due to the recomputation of a constant number of approximate Schur Complements. For the $j$-th update that belongs to $I_1$, we need an additional time $O(\frac{s}{2\Delta})=\tilde{O}(\frac{n\varepsilon^{-2}}{n/r})=\tilde{O}(r\varepsilon^{-2})$ for emptying the outdated structure. For the $j$-th update that belongs to $I_2$, we need an additional time $O(\frac{T_\re}{\Delta})=\tilde{O}(\frac{n\varepsilon^{-2}}{n/r})=\tilde{O}(r\varepsilon^{-2})$ for building $\Ds(G_2)$. Thus, the worst-case update time is %$\tilde{O}(\frac{T_\re}{\Delta}+r\varepsilon^{-2})=
$\tilde{O}\left( r\varepsilon^{-2}\right)$. 

\paragraph{Query.} To answer a query, we simply invoke the query algorithm for $\Ds_1$, and return the value as before. This will cost time $\tilde{O}((r+n/\sqrt{r})\varepsilon^{-2})$. Note that since we are rebuilding the data structure for each interval of $T_\divi$ updates, we can still guarantee that with high probability, the output is within a $(1+ \varepsilon)$ factor of the energy of $s-t$ electrical flow by the same argument for the proof of Lemma~\ref{lem:edge_query}.

Therefore, we can get worst-case update time $\tilde{O}(r\varepsilon^{-2})$ and worst-case query time $\tilde{O}((r+n/\sqrt{r})\varepsilon^{-2})$. %This completes the proof of Theorem~\ref{thm:edge_worst}.
\subsubsection{Extension to Minor-Free Graphs} %\label{sec:extension_minor}
In the following, we briefly discuss how one can adapt the previous dynamic algorithms for planar graphs to minor-free graphs. 

The key observation is that since the approximate Schur Complement can be constructed in nearly-linear time for any graph, it suffices for us to efficiently maintain an $r$-division of any minor-free graph, i.e., we need fast algorithms for computing a \emph{separator} of order $\sqrt{n}$ in such graphs. (A separator is a subset $S$ of vertices whose deletion will partition the graph into connected components, each of size at most $\frac{2n}{3}$). Kawarabayashi and Reed~\cite{kawar2010} showed that for any $K_t$-minor-free graph $G$, one can construct in $O(n^{1+\xi})$ time a separator of size $O(\sqrt{n})$ for $G$, for any constant $\xi>0$ and constant $t$. (The $O(\cdot)$ notation for the running time hides huge dependency on $t$.) Applying this separator construction recursively as in Frederickson's algorithm~\cite{Fed87:shortest}, we can maintain an $r$-division of any $K_t$-minor-free $G$ in $\tilde{O}(n^{1+\xi})$ time. Furthermore, by analysis in~\cite{Fed87:shortest}, it holds that the total sum of the sizes of all boundary vertex sets is also bounded by $O(\frac{n}{\sqrt{r}})$ as guaranteed by Lemma~\ref{lemma:kms_boundary} for planar graphs. 

Now we can dynamically maintain the data structure for electrical flows in minor-free graphs almost the same as we did for planar graphs, except that we use the above $\tilde{O}(n^{1+\xi})$ time algorithm to compute the $r$-divisions. Thus, the time to compute the data structure for any minor-free graph is then $\tilde{O}(n^{1+\xi}+n\varepsilon^{-2})$, for arbitrarily small constant $\xi>0$. Then from previous analysis, the worst-case update time is $\tilde{O}(n^{\xi}r\varepsilon^{-2})$ update time, and the worst-case query time is $\tilde{O}((r+n/\sqrt{r})\varepsilon^{-2})$. This completes the proof of Theorem~\ref{thm:edge_worst}.
\subsection{Dynamic Electrical Flows in the Incremental Subgraph Model}\label{sec:subgraph}
%In the subgraph model for electrical flows, we are given a planar graph $G=(V,E)$, and asked to maintain the (approximate) energy of the electrical flows under vertex updates: updates can \emph{activate} or \emph{deactivate} vertices, and queries refer to the subgraph induced by active vertices. 

%In the subgraph electrical flows, we are asked to maintain the (approximate) energy of the electrical flows under vertex updates: updates can \emph{activate} or \emph{deactivate} vertices, and queries refer to the subgraph induced by active vertices. If the updates only consist of vertex activations (resp., vertex deactivations), then it will be called incremental (resp., decremental) subgraph electrical flows. 
In the following, we will give a data structure for maintaining all-pairs electrical flows in the \emph{vertex-update} version of dynamic graphs, also called the \emph{subgraph} model~\cite{FrigioniI00}, for which the updates consist of vertex activations and/or deactivations. Furthermore, we restrict to the \emph{incremental subgraph} model, where update operations consist of activations only. We remark that the decremental subgraph model can be studied similarly while we omit the details here. More specifically, we allow the following operations:
%
%Vertex insertions and deletions are other types of updates that are not captured by the classical dynamic edge model. In order to account for such updates, we consider the so-called \emph{dynamic subgraph model}~\cite{FrigioniI00}. In this model, given some graph $G$, the goal is to build a data-structure that maintains some property $G$ under an intermixed sequence of vertex updates: updates can \emph{activate} or \emph{deactivate} vertices, and queries refer to the subgraph induced by active vertices in the current graph. In this paper we consider the incremental subgraph model for maintaining the energy of the $s-t$ electrical flow of minor-free graphs under the following operations: 
\begin{itemize}
	\itemsep0.1em 
	\item \textsc{Activate}$(u)$: Activate the vertex $u$ along with its incident edges in $G$.
	\item \textsc{ElectricalFlow}$(s,t)$: Return the exact (or approximate) energy of the $s-t$ electrical flow in the subgraph induced by all active nodes.  
\end{itemize}
Note that the queries should refer to the subgraph induced by all current active vertices. 

\subsubsection{An Upper Bound}
The performance of our dynamic algorithm is formalized in the following theorem. 

%We note that in contrast to the previous results, here the (amortized) update time can be made arbitrarily small while still achieving sub-linear query time. 
\begin{theorem}~\label{thm:alg_subgraph}
Let $r\geq c$ for some large constant $c>0$. One can maintain a data structure for the incremental subgraph all-pairs electrical flow problem for minor-free graphs that supports an activate operation in $\tilde{O}(r\varepsilon^{-2})$ \emph{amortized} time, and with high probability, outputs an answer $f$ to the \textsc{ElectricalFlow}($s,t$) query in $\tilde{O}((r+\frac{n}{\sqrt{r}})\varepsilon^{-2})$ \emph{worst-case} time such that 
	$$\energy_{G[S]}(s,t)\leq f\leq (1+\varepsilon)\energy_{G[S]}(s,t),$$
	where $G[S]$ is the graph induced by all current active vertices $S$.
\end{theorem}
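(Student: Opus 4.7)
The plan is to lift the $r$-division plus approximate Schur Complement machinery developed for the fully dynamic setting into the incremental subgraph model, exploiting the fact that each vertex is activated at most once to obtain amortized bounds while reusing the query algorithm essentially unchanged.

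During preprocessing I would compute an $r$-division $\{P_1,\ldots,P_\ell\}$ of the underlying minor-free graph $G$ using the Kawarabayashi--Reed separator combined with Frederickson-style recursion, as in the proof of Theorem~\ref{thm:edge_worst}. I initialize $S:=\emptyset$ and, for each region $P_i$, set $\widetilde{H}_i$ to be the empty graph on $\partial_G(P_i)$. The invariant to maintain is: for every $i$, the graph $\widetilde{H}_i$ is an approximate Schur Complement of $P_i[S]$ with respect to $\partial_G(P_i)$, obtained from \textsc{ApproxSchur} run with parameters $(\varepsilon/6,\,1/n^3)$. For an \textsc{Activate}$(u)$ operation, I add $u$ to $S$ and, for every region $P_i$ whose vertex set contains $u$, I recompute $\widetilde{H}_i$ from scratch on $P_i[S]$, which costs $\tilde{O}(r\varepsilon^{-2})$ per region by Lemma~\ref{lemm: VertexSpars}.

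The amortization argument rests on the observation that each vertex $u$ is activated at most once and that
\[
\sum_{u \in V} |\{i : u \in V(P_i)\}| \;=\; \sum_{i=1}^{\ell} |V(P_i)| \;\leq\; n + \sum_{i=1}^{\ell}|\partial_G(P_i)| \;=\; O(n),
\]
using the bound on the total boundary from the $r$-division of minor-free graphs. Hence the total cost across an entire sequence of activations is $\tilde{O}(n \cdot r\varepsilon^{-2})$, giving amortized $\tilde{O}(r\varepsilon^{-2})$ per update. For an \textsc{ElectricalFlow}$(s,t)$ query I form the auxiliary graph $H := P_s[S] \cup P_t[S] \cup \bigcup_{i \notin \{s,t\}} \widetilde{H}_i$ and run \textsc{EffectiveResistance} on $H$ with accuracy $\varepsilon/6$, then rescale as in the fully dynamic algorithm. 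Correctness follows by mimicking Lemma~\ref{lem:edge_query}: the equivalence in Lemma~\ref{lemma:exact} applied to $G[S]$ shows that replacing each $P_i[S]$ with the exact Schur Complement $H_i$ preserves $s$--$t$ effective resistance exactly, and decomposability (Lemma~\ref{lemm: decomposability}) combined with the per-region $(1\pm\varepsilon/6)$-spectral guarantee yields a $(1\pm\varepsilon/6)$-resistance sparsifier of $G[S]$; Theorem~\ref{thm:effective_resistance} then delivers a $(1+\varepsilon)$-approximation. The size bound $|E(H)| = \tilde{O}((r + n/\sqrt{r})\varepsilon^{-2})$ is identical to the fully dynamic case, yielding worst-case query time $\tilde{O}((r+n/\sqrt{r})\varepsilon^{-2})$.

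The main subtle point to verify will be that Lemma~\ref{lemma:exact} carries over to $G[S]$: since $\{P_i[S]\}$ is still an edge-disjoint partition of $G[S]$ (every edge of $G[S]$ lies in the unique region it occupied in $G$), the two Gaussian-elimination processes remain equivalent, because every interior-in-$P_i$ active vertex has all of its neighbors in $P_i[S]$ as well. A union bound over the at most $n$ activations and at most $O(n/r)$ regions with per-call failure probability $1/n^3$ guarantees that the spectral approximation holds for every region throughout the execution with high probability, which is all that is needed for the query analysis to go through.
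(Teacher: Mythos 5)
Your proposal is correct and follows essentially the same approach as the paper: a fixed $r$-division of the underlying minor-free graph, per-region recomputation of approximate Schur Complements on each activation, amortization via $\sum_i |V(P_i)| = (n - |V_B|) + \sum_{v\in V_B} b(v) = O(n)$, and queries via the auxiliary graph formed by the two regions containing $s,t$ together with the sparsifiers of the remaining regions, fed into \textsc{EffectiveResistance}. The only imprecision is that you fix the terminal set as $\partial_G(P_i)$, which can contain inactive vertices lying outside $V(P_i[S])$ and hence is not literally a valid terminal set for the Schur Complement of $P_i[S]$; using $\partial_G(P_i)\cap S$ (or $\partial_{G[S]}(P_i[S])$ as the paper does) fixes this without affecting the substance of the argument.
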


%\subsection{The algorithm}
Now we present the proof of Theorem~\ref{thm:alg_subgraph}. We first describe the data structure and the corresponding algorithm.

\paragraph{Preprocessing.} We compute a weak $r$-division of the input minor-free graph $G$ (treating all nodes as active) and let $\region=\{P_1,\cdots,P_\ell\}$, $\ell=O(n/r)$ denotes the resulting set of regions. Let $S$ be the set of vertices that are active. Initially, $S$ is empty.

\paragraph{Handling Activation Operations.} If vertex $v$ is activated, then we add $v$ to $S$. Let $G[S],P_i[S]$ denote the subgraphs of $G$ and $P_i$ induced by all active vertices in $S$, respectively. If $v$ is an interior vertex, then we recompute the approximate Schur Complement for the region $P_i$ that contains $v$, by invoking the algorithm \textsc{ApproxSchur} in Lemma~\ref{lemm: VertexSpars} with parameters $P_i[S]$, $\partial_{G[S]}(P_i[S]),\varepsilon=\frac{\varepsilon}{6}, \delta=1/n^3$. If $v$ is a boundary vertex, then for each $i$ such that $P_i$ that contains $v$, we recompute the approximate Schur Complement $\widetilde{H}_i[S]$ for the region $P_i[S]$, by invoking \textsc{ApproxSchur} with parameters $P_i[S]$, $\partial_{G[S]}(P_i[S]),\varepsilon=\varepsilon/6, \delta=1/n^3$.

Note that if an interior vertex is activated, we only need to recompute the approximate Schur Complement for a single region, which costs time $T_\asc:=\tilde{O}(r\varepsilon^{-2})$. On the other hand, if some boundary vertex $v$ is activated, then we need to recompute all the approximate Schur Complements for regions (induced by active vertex set $S$) that contain $v$ in the division. Let $b(v)$ denote the number of regions containing $v$. Then the cost of recomputing the sparsifiers for all sub-regions that contain $v$ is at most $b(v)\cdot T_\asc$. 

Let $V_B$ denote the set of all boundary vertices in the $r$-division. Note that the total cost for updating all active vertices is upper bounded by the total time of recomputing approximate Schur Complements after each vertex activation. This is upper bounded by 
\[ T_\asc \cdot (n-|V_B|) + T_\asc\cdot \sum_{v\in V_B} b(v) \leq  T_\asc\cdot O(n+\frac{n}{\sqrt{r}}) =\tilde{O}(nr\varepsilon^{-2}),
\]
where first inequality follows from the corresponding argument in the proof of Theorem~\ref{thm:edge_worst}. %~\ref{sec:extension_minor}. 
Note that there are $n$ activation updates, so the amortized update time is $\tilde{O}(r\varepsilon^{-2})$.

\paragraph{Handling Queries.} To answer the query \textsc{ElectricalFlow}$(s,t)$ when both $s,t$ are active, we can compute the answer almost the same as we did in the edge-update model. More precisely, given the current active vertex set $S$, we first form an auxiliary graph $H^{(S)}$ by taking the union over the regions $G_1[S]$ and $G_2[S]$ containing $s$ and $t$ and all the approximate Schur Complements of the remaining regions (induced by active vertices) with respect to the corresponding boundaries, i.e., $H^{(S)}=P_s[S]\cup P_t[S]\cup \bigcup_{i\in \mathcal{J}}\widetilde{H}_i[S]$, where $\mathcal{J}=\{i:P_i\in\region\setminus \{P_s,P_t\}\}$. We then run the algorithm \textsc{EffectiveResistance} in Theorem~\ref{thm:effective_resistance} on the graph $H^{(S)}$ with $\varepsilon=\frac{\varepsilon}{6}$ to obtain an estimator $\psi$ and return $(1-\frac{\varepsilon}{6})\psi$.
%\subsection{The analysis}
%We will use the following lemma by Frederickson.
%\begin{lemma}[Lemma 1 in \cite{Fed87:shortest}]~\label{lemma:frederickson}
%	Let $v$ be a boundary vertex of the $r$-division, and let $b(v)$ be one less than the number of regions that contain $v$ in the division. Let $B(n,r)$ be the total of $b(v)$ over all boundary vertices $v$. It holds that 
%	$$B(n,r)\leq 2\sqrt{2}n/\sqrt{r}-d\sqrt{n},$$
%	for some constant $d$.
%\end{lemma}

%Now we prove the perf Theorem~\ref{thm:alg_subgraph}.
%\begin{proof}[Proof of Theorem~\ref{thm:alg_subgraph}]
Note that since the total number of invocations of \textsc{ApproxSchur} is $n-|V_B|+\sum_{v\in V_B}b(v)=O(n)$, and each invocation has error probability at most $1/n^3$, we can guarantee that with probability at least $1-O(1/n^2)$, all the approximate Schur Complements are $(1\pm\frac{\varepsilon}{6})$-spectral sparsifiers of the corresponding exact Schur Complements. Then by the same argument as in the proof of Lemma~\ref{lem:edge_query}, we can guarantee that the output is within a $(1-\frac{\varepsilon}{2})$ factor of $R_{G[S]}(s,t)$, and thus within a $(1+\varepsilon)$ factor of $\energy_{G[S]}(s,t)$.

The number of edges of the graph $H^{(S)}$ can again be bounded by $O((r+n/\sqrt{r})\varepsilon^{-2}\log n)$, since each region has size $O(r)$, the total number of boundary vertices is $O(n/\sqrt{r})$ at all times in the algorithm, and the number of edges of any approximate Schur Complement $\widetilde{H}_i^{(S)}$ is at most $O(\varepsilon^{-2}\log (n/\delta))$ times the number of its vertices by Lemma~\ref{lemm: VertexSpars}. Therefore, the running time to answer an \textsc{ElectricalFlow}$(s,t)$ query is again upper bounded by $\tilde{O}((r+n/\sqrt{r})\varepsilon^{-2})$ by Theorem~\ref{thm:effective_resistance}. 
%
%By choosing $r=n^{\rho}$, for any $0\leq \rho\leq 1$, we can guarantee that the amortized update time is $\tilde{O}(n^{\rho}\varepsilon^{-2})$ and the worst-case query time is $\tilde{O}((n^{\rho}+n^{1-\frac{\rho}{2}})\varepsilon^{-2}\log (U(\vect{w})/\varepsilon))$. 
This completes the proof of Theorem~\ref{thm:alg_subgraph}.
	
%	We now analyze the running time of handling all activation operations. Let $T_r$ be the time to recompute a vertex sparsifier for a sub-region with of size at most $r$ and boundary size at most $\sqrt{r}$. (\PP check if it is upper bounded by $O(r\poly\log r)$.). 
%	
%
%	
%	Recall that the 
%\end{proof}

\subsubsection{Conditional Lower Bound for Partial Subgraph Electrical Flows}
\label{sec:lower_bound}
We complement the above result by giving a conditional lower bound for the $s-t$ electrical flow problem in the incremental subgraph model for \emph{general} graphs. Our lower bound is based on a reduction to the online boolean matrix-vector (OMv) multiplication problem, which is conjectured to have no truly subcubic time algorithm \cite{henzinger2015unifying}. %; see~Conjecture~\ref{conjecture} in Section~\ref{sec:lower_bound}). 
%We show that for general graphs, under the assumption of OMv conjecture, there is no incremental dynamic algorithm for electrical flows with polynomial preprocessing time and $O(n^{1-\varepsilon})$ update time for each activation operation if we need $O(n^{2-\varepsilon})$ query time. 
We first introduce some basic definitions.
\begin{definition}
In the \emph{Online Boolean Matrix-Vector Multiplication (OMv)} problem, we are given an integer $n$ and an $n\times n$ Boolean matrix $\mat{M}$. Then at each step $i$ for $1\leq i\leq n$, we are given an $n$-dimensional column vector $\vect{v}_i$, and we should compute $\mat{M}\vect{v}_i$ and output the resulting vector before we proceed to the next round.
\end{definition}

\begin{conjecture}[OMv conjecture~\cite{henzinger2015unifying}]~\label{conjecture}
For any constant $\varepsilon>0$, there is no $O(n^{3-\varepsilon})$-time algorithm that solves OMv with error probability at most $1/3$. 
\end{conjecture}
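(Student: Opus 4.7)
The final statement is the OMv conjecture, a well-known open problem in fine-grained complexity that the paper adopts as a hypothesis rather than proves. No proof is known, so what follows is a speculative research strategy rather than a genuine proof sketch; I fully expect it to fail at the last step.

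First, I would try to derive the conjecture from a better-established hardness assumption. The natural candidate is the combinatorial Boolean matrix multiplication (BMM) hypothesis: offline BMM of $\mat{A},\mat{B}$ reduces to OMv by setting $\mat{M}=\mat{A}$ and querying each column of $\mat{B}$ in turn, so any $O(n^{3-\varepsilon})$ OMv algorithm yields an $O(n^{3-\varepsilon})$ BMM algorithm. The obstacle is that this reduction points the wrong way for our purposes — an OMv lower bound would \emph{imply} the BMM lower bound, not the reverse — and OMv is in fact believed to be strictly harder than offline BMM precisely because the online restriction forbids Four-Russians-style batching. So this direction is instructive but cannot close the gap.

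Second, I would attempt an unconditional lower bound in a restricted model such as the cell-probe model or a bounded algebraic-circuit model. The online structure is essential here: after receiving $\vect{v}_i$ the algorithm must commit to $\mat{M}\vect{v}_i$ before seeing $\vect{v}_{i+1}$, so round-elimination and direct-sum arguments from communication complexity look more promising than in the offline setting. Concretely, I would encode $n$ independent instances of set-disjointness into the rows of $\mat{M}$, relate the amortized online query cost to the per-instance communication cost of disjointness, and try to leverage the $\Omega(n)$ randomized communication lower bound to force $\Omega(n^2)$ amortized time per query, hence $\Omega(n^3)$ in total.

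The hard step — and the reason this remains a conjecture after years of concerted attention — is ruling out a truly subcubic online algorithm strongly enough to exclude Four-Russians-type precomputation, given that the best current upper bound is $O(n^3/\log^2 n)$. No existing cell-probe or circuit technique distinguishes online from offline sharply enough to yield a polynomial separation of the required magnitude. Consequently, the realistic proposal is the one the authors adopt in Section~\ref{sec:lower_bound}: take the conjecture as given and use it as a hypothesis for the conditional lower bound on incremental subgraph electrical flow, rather than attempt to prove it here.
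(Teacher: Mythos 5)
You are right that this is a conjecture, not a theorem: the paper does not prove it but cites it from Henzinger et al.\ and uses it purely as a hardness hypothesis for the conditional lower bound in Theorem~\ref{thm:subgraph_lower}. Your recognition of this, together with your (correctly pessimistic) survey of why no proof is available, matches the paper's treatment exactly.
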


Our main result in this section is the following conditional lower bound.

\begin{theorem}~\label{thm:subgraph_lower}
	For any $C>0$ and constant $\varepsilon>0$, there is no incremental algorithm that computes a $C$-approximation of the energy of the $s-t$ electrical flows in general graphs that has both $O(n^{1-\varepsilon})$ worst-case update time and $O(n^{2-\varepsilon})$ worst-case query time, unless the OMv conjecture is false. 
\end{theorem}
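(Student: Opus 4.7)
The plan is to reduce from the OMv problem (Conjecture~\ref{conjecture}), following the framework of~\cite{henzinger2015unifying}. Assume for contradiction that there is an incremental algorithm $\mathcal{A}$ that maintains a $C$-approximation of the $s$-$t$ electrical-flow energy with $O(n^{1-\varepsilon})$ worst-case update time and $O(n^{2-\varepsilon})$ worst-case query time, for some constant $\varepsilon > 0$. Given an OMv instance $(M, v^{(1)}, \dots, v^{(N)})$ with $M \in \{0,1\}^{N \times N}$, I would use $\mathcal{A}$ as a subroutine to compute each $Mv^{(k)}$ online and thereby solve OMv in truly subcubic time.

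For the gadget, I would construct an ambient graph $G$ whose vertex set contains (i) a shared matrix gadget with row vertices $r_1, \dots, r_N$ and column vertices $c_1, \dots, c_N$ and an edge $(r_i, c_j)$ of fixed resistance whenever $M_{ij} = 1$; (ii) for each round $k \in [N]$ and column $j \in [N]$, an input vertex $a_{k,j}$ adjacent to $c_j$; and (iii) for each round $k$ and row $i$, query ports $s_{k,i}$ and $t_k$ wired respectively to $r_i$ and to all $a_{k,\cdot}$ of round $k$ via low-resistance edges. Only a skeleton is active initially. Resistances would be chosen so that once the round-$k$ ports are activated together with exactly those $a_{k,j}$ for which $v^{(k)}_j = 1$, the only $s_{k,i}$-to-$t_k$ paths in $G[S]$ are the length-four paths $s_{k,i} \to r_i \to c_j \to a_{k,j} \to t_k$ with $M_{ij} = v^{(k)}_j = 1$. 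Consequently $R_{G[S]}(s_{k,i}, t_k)$ depends only on the number of such paths, which equals $(Mv^{(k)})_i$. Tuning the resistor values so that the cases $(Mv^{(k)})_i = 0$ (no path, infinite resistance) and $(Mv^{(k)})_i \ge 1$ (finite resistance bounded by a constant) differ by more than a factor of $C^2$ --- a standard parallel-conductance calculation --- a single $C$-approximate query yields one output bit.

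To simulate round $k$, I would activate the $O(N)$ round-$k$ ports, then the $\le N$ input vertices $a_{k,j}$ selected by $v^{(k)}$, and issue queries $\textsc{ElectricalFlow}(s_{k,i}, t_k)$ for each $i \in [N]$ to reconstruct $Mv^{(k)}$. Across all $N$ rounds this uses $\Theta(N^2)$ \textsc{Activate}s and $\Theta(N^2)$ queries on an ambient graph of $n = \Theta(N^2)$ vertices, so under $\mathcal{A}$'s guarantees the total simulation time is $\tilde O(N^2 \cdot n^{1-\varepsilon} + N^2 \cdot n^{2-\varepsilon})$. Plugging in the assumed bounds and balancing parameters as in~\cite{henzinger2015unifying} --- for instance, replacing OMv by its OuMv variant so that a single query per round extracts the one-bit answer $u^T M v$ (reducing the query total to $N$), and compacting the matrix gadget via block decomposition so that $n$ is as small as the encoding allows --- one then aims to drive the total below $N^{3-\delta}$ for some $\delta = \delta(\varepsilon) > 0$, contradicting the OMv conjecture.

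The main obstacle is the incremental restriction. Unlike in a fully dynamic reduction, we cannot delete inputs between rounds, so per-round inputs must occupy fresh vertices, $n$ is forced up to $\Omega(N^2)$, and the arithmetic of $N^2 \cdot n^{1-\varepsilon}$ and $N \cdot n^{2-\varepsilon}$ against $N^{3-o(1)}$ becomes tight. The principal technical step is then to prevent cross-round leakage: previously activated inputs of rounds $< k$ must not conduct non-negligible current into the round-$k$ query, a contamination I would block by interposing high-resistance "fuses" between each round's input subgraph and the foreign query ports so that any cross-round path carries conductance orders of magnitude smaller than the intended intra-round paths and is swamped by the $C$-approximation gap. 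Once this isolation is in place and the matrix gadget is shared across all rounds, the resistance-gap tuning and the conductance arithmetic are standard, and the counting above yields the contradiction with OMv.
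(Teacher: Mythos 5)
Your proposal takes the wrong route and, as written, does not close the argument. You attempt to reduce directly from the full multi-round OMv problem: you build a shared matrix gadget, fresh per-round input and port vertices (since the model is incremental-only, nothing can be retired), and issue $\Theta(N^2)$ activations and $\Theta(N)$--$\Theta(N^2)$ queries on an ambient graph of size $n = \Theta(N^2)$. But then the arithmetic does not come out: the update cost is $N^2 \cdot n^{1-\varepsilon} = N^2 \cdot N^{2-2\varepsilon} = N^{4-2\varepsilon}$, and even the $N$-query OuMv variant gives $N \cdot n^{2-\varepsilon} = N^{5-2\varepsilon}$. Neither is $o(N^3)$ for a general constant $\varepsilon>0$; you would need $\varepsilon > 1/2$ (respectively $\varepsilon > 1$), which is not what the theorem claims. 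You acknowledge this (``the arithmetic \dots becomes tight'') and gesture at ``compacting the matrix gadget via block decomposition,'' but that is precisely the load-bearing step, and it is not carried out. On top of this, the cross-round leakage issue you raise — current flowing through stale activated inputs from earlier rounds — requires a quantitative resistance-gap argument that you only sketch, and it is an obstacle the correct proof avoids entirely.

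The paper's proof sidesteps all of this by reducing from the single-shot $\vect{u}\mat{M}\vect{v}$ problem (Theorem~\ref{thm:umv_hard}), which \cite{henzinger2015unifying} shows is already OMv-hard: after polynomial preprocessing of $\mat{M}$, computing $\vect{u}^T\mat{M}\vect{v}$ for one vector pair cannot be done in $O(n^{2-\varepsilon})$ time. The gadget is then the obvious one — $n$ row vertices, $n$ column vertices, an edge $(r_i,c_j)$ iff $M(i,j)=1$, plus $s$ joined to all rows and $t$ to all columns — and to compute $\vect{u}^T\mat{M}\vect{v}$ one activates the rows selected by $\vect{u}$ and the columns selected by $\vect{v}$ and issues a single \textsc{ElectricalFlow}$(s,t)$ query. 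Since $\vect{u}^T\mat{M}\vect{v}=1$ iff $s$ and $t$ are connected in the active subgraph, and the energy is infinite versus at most $m$ in the two cases, any $C$-approximation distinguishes them. With $O(n)$ activations and one query, the assumed bounds give $O(n \cdot n^{1-\varepsilon} + n^{2-\varepsilon}) = O(n^{2-\varepsilon})$ total time, a direct contradiction. This is the missing idea in your proposal: reduce from $\vect{u}\mat{M}\vect{v}$ rather than simulating all $N$ OMv rounds incrementally, so that one round of activations and one query suffice, and the blow-up in the graph size never occurs.
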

Note that our algorithm in Theorem~\ref{thm:alg_subgraph} and the lower bound in Theorem~\ref{thm:subgraph_lower} demonstrate potentially a polynomial gap for dynamic algorithms for subgraph electrical flows between minor-free graphs and general graphs.  

To prove Theorem~\ref{thm:subgraph_lower}, we will work on a related problem which is called the $\vect{u}\mat{M}\vect{v}$ problem.
\begin{definition}
	In the $\vect{u}\mat{M}\vect{v}$ problem with parameters $n_1,n_2$, we are given a matrix $M$ of size $n_1\times n_2$ which can be preprocessed. After preprocessing, a vector pair $\vect{u},\vect{v}$ is presented, and our goal is to compute $\vect{u}^T\mat{M}\vect{v}$.
\end{definition}

\begin{theorem}[\cite{henzinger2015unifying}]~\label{thm:umv_hard}
Unless OMv conjecture~\ref{conjecture} is false, there is no algorithm for the $\vect{u}\mat{M}\vect{v}$ problem with parameters $n_1,n_2$ using polynomial preprocessing time and computation time $O(n_1^{1-\varepsilon}n_2+n_1n_2^{1-\varepsilon})$ that has an error probability at most $1/3$, for some constant $\varepsilon$.
\end{theorem}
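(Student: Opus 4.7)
The plan is to reduce the $\vect{u}\mat{M}\vect{v}$ problem, whose conditional hardness under the OMv conjecture is stated in Theorem~\ref{thm:umv_hard}, to the incremental subgraph $s$-$t$ electrical flow approximation problem. Given an $n \times n$ Boolean matrix $\mat{M}$, I will build in polynomial preprocessing time a graph $G$ on $\Theta(n)$ vertices consisting of two terminals $s,t$, row vertices $r_1,\ldots,r_n$, and column vertices $c_1,\ldots,c_n$. I insert a unit-resistance edge $(r_i,c_j)$ for every $(i,j)$ with $M_{ij}=1$, a unit-resistance edge $(s,r_i)$ for every $i$, and a unit-resistance edge $(c_j,t)$ for every $j$. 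Finally, I add a single ``backup'' edge directly between $s$ and $t$ of resistance $W$, for a constant $W$ depending on $C$ to be fixed below. Initially only $s$ and $t$ are active.

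When the adversary presents the query pair $\vect{u},\vect{v}\in\{0,1\}^n$, I will call \textsc{Activate}$(r_i)$ for each $i$ with $u_i=1$ and \textsc{Activate}$(c_j)$ for each $j$ with $v_j=1$, issuing at most $2n$ updates, and then call \textsc{ElectricalFlow}$(s,t)$ once. By construction, the induced subgraph $G[S]$ admits the three-edge path $s \to r_i \to c_j \to t$ precisely when $u_i=v_j=M_{ij}=1$, so $\vect{u}^T\mat{M}\vect{v}=0$ if and only if the backup edge is the only $s$-$t$ connection in $G[S]$.

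For correctness, in the zero case $R_{G[S]}(s,t)=W$, while in the nonzero case Rayleigh monotonicity gives $R_{G[S]}(s,t)\le 3$ (a single three-edge path has resistance $3$, and parallel paths only decrease effective resistance). Since for unit $s$-$t$ flow the energy equals the effective resistance, any $C$-approximation $f$ of $\energy_{G[S]}(s,t)$ satisfies $f\le 3C$ in the nonzero case and $f\ge W$ in the zero case. Choosing $W=4C$ separates the two regimes, so the single query yields a correct Boolean bit for $\vect{u}^T\mat{M}\vect{v}$, with failure probability inherited from (and at most that of) the assumed electrical flow data structure, which is $o(1) < 1/3$.

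Putting everything together, the total post-preprocessing time is at most $2n\cdot O(n^{1-\varepsilon})+O(n^{2-\varepsilon})=O(n^{2-\varepsilon})$, matching the $n_1=n_2=n$ instantiation of the lower bound in Theorem~\ref{thm:umv_hard} and thereby contradicting the OMv conjecture. The main subtlety to get right is handling the approximation factor: without the backup edge, $R_{G[S]}(s,t)$ would be infinite in the zero case, and no multiplicative approximation could distinguish the two regimes; the constant-weight backup edge is exactly what promotes the reachability gap into a robust multiplicative gap for the energy, and choosing $W$ as a function of $C$ makes the argument work for every approximation factor $C>0$ demanded by the theorem statement.
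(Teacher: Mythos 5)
You have proven the wrong statement. The theorem you were asked to prove is Theorem~\ref{thm:umv_hard} itself, which asserts that the OMv conjecture implies a lower bound for the $\vect{u}\mat{M}\vect{v}$ problem. Your proposal instead \emph{assumes} this theorem (your first sentence explicitly invokes ``the $\vect{u}\mat{M}\vect{v}$ problem, whose conditional hardness \ldots\ is stated in Theorem~\ref{thm:umv_hard}'') and uses it to derive a lower bound for the incremental subgraph electrical flow problem. What you have actually written is a proof of Theorem~\ref{thm:subgraph_lower}, not of Theorem~\ref{thm:umv_hard}. A correct proof of Theorem~\ref{thm:umv_hard} must reduce in the \emph{opposite} direction: from OMv (via the intermediate online $\vect{u}\mat{M}\vect{v}$ problem) down to the single-pair $\vect{u}\mat{M}\vect{v}$ problem with polynomial preprocessing. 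That reduction is nontrivial and appears in~\cite{henzinger2015unifying}; the present paper cites it as a black box rather than reproving it, so there is in fact no in-paper proof to match, but either way your argument goes in the wrong direction.

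As an aside, your reduction from $\vect{u}\mat{M}\vect{v}$ to incremental electrical flow essentially matches the paper's proof of Theorem~\ref{thm:subgraph_lower}, with one genuine refinement: the backup $s$--$t$ edge of resistance $W=4C$ keeps $R_{G[S]}(s,t)$ finite in the $\vect{u}^T\mat{M}\vect{v}=0$ case and turns the connected/disconnected dichotomy into a clean multiplicative $4C$-versus-$3$ gap, sidestepping the question of what a multiplicative $C$-approximation to an infinite effective resistance should mean (the paper relies on the infinite case directly). That is a nice touch, but it does not address the theorem you were asked to prove.
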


We now reduce the problem $\vect{u}\mat{M}\vect{v}$ problem with parameters $n_1=n_2=n$ to the $s-t$ electrical flow problem in incremental subgraph model (similar reduction can be built for decremental subgraph model). Our construction is the same as the construction for $st$-subgraph connectivity in~\cite{henzinger2015unifying}.

\begin{proof}[Proof of Theorem~\ref{thm:subgraph_lower}]
Consider any $\vect{u}\mat{M}\vect{v}$ instance with parameters $n_1=n_2=n$. Given $M$, we construct $n$ row vertices $r_1,\cdots,r_n$ and $n$ column vertices $c_1,\cdots,c_n$. We construct a bipartite unweighted graph between row vertices and column vertices such that we add an edge $(r_i,c_j)$ if and only if $M(i,j)=1$. Now we add two special vertices $s,t$. We connect $s$ to every row vertex and connect $t$ to every column vertex.
	
	Now to compute $\vect{u}^T\mat{M}\vect{v}$, it suffices to activate all the row vertices $r_i$ such that $\vect{u}(i)=1$ and all the column vertices $c_j$ such that $\vect{v}(j)=1$, and query if $s-t$ electrical flow energy is at most $C\cdot {m}$, where $m$ is the number of edges the constructed graph. Observe that $\vect{u}^T\mat{M}\vect{v}=1$ if and only if there is a path from $s$ to $t$. Furthermore, if there is no such path, then the answer to the query will be infinity; if there is a path from $s$ to $t$, then by definition, the electrical flow energy is at most $m$. Any $C$-approximation algorithm for the energy of electrical flows will successfully distinguish if there is a path from $s$ to $t$ or not, and thus compute the answer $\vect{u}^T\mat{M}\vect{v}$.
		
	Note that the total number of activation operations is $O(n)$ and the number of queries is $1$. If there exists a $C$-approximation algorithm for electrical flow with $O(n^{1-\varepsilon})$ update worst-case time and $O(n^{2-\varepsilon})$ query worst-case time, then we can also solve $\vect{u}^T\mat{M}\vect{v}$ problem with parameters $n_1=n_2=n$ in $O(n^{2-\varepsilon})$ time, which is a contradiction.
\end{proof}

%\subparagraph*{Acknowledgements.}

%I want to thank \dots

%\section{Morbi eros magna}

%Morbi eros magna, vestibulum non posuere non, porta eu quam. Maecenas vitae orci risus, eget imperdiet mauris. Donec massa mauris, pellentesque vel lobortis eu, molestie ac turpis. Sed condimentum convallis dolor, a dignissim est ultrices eu. Donec consectetur volutpat eros, et ornare dui ultricies id. Vivamus eu augue eget dolor euismod ultrices et sit amet nisi. Vivamus malesuada leo ac leo ullamcorper tempor. Donec justo mi, tempor vitae aliquet non, faucibus eu lacus. Donec dictum gravida neque, non porta turpis imperdiet eget. Curabitur quis euismod ligula. 

%%
%% Bibliography
%%

%% Either use bibtex (recommended), 

%\appendix
%\input{appendix}

\section{Dynamic All-Pairs Max Flow in Minor-Free Graphs}\label{sec:max_flow_minor}
In this section, we present a dynamic algorithm for maintaining all-pairs max flow in minor-free graphs and prove Theorem~\ref{thm:maxflow_1}. Besides the $r$-division of such graphs, the main building block in our construction is the usage of vertex cut sparsifier~\cite{Moitra09}, which we define below.  

\paragraph{Terminal Minimum Cuts.} Let $G=(V,E,c)$ be an undirected graph with terminal set $K \subset V$ of
cardinality $k$, where $c: E \rightarrow \mathbb{R}_{\geq 0}$ assigns a non-negative
capacity to each edge. Let $U \subset V$ and $\emptyset \neq S \subset K$. We say that a cut $(U, V \setminus U)$ is
\emph{$S$-separating} if it separates the terminal subset $S$ from its complement $K
\setminus S$, i.e., $U \cap K$ is either $S$ or $K \setminus S$. We will refer to such a cut as a \emph{terminal cut}. The cutset
$\delta(U)$ of a cut $(U, V \setminus U)$ represents the edges that have one
endpoint in $U$ and the other one in $V \setminus U$. The cost
$\capacity_G(\delta(U))$ of a cut $(U, V \setminus U)$ is the sum over all
capacities of the edges belonging to the cutset. We let $\text{mincut}_{G}(S, K
\setminus S)$ denote the minimum cost of any $S$-separating cut of
$G$. 

\paragraph{Vertex Cut Sparsifiers.} A graph $H = (V_H, E_H, c_H)$, $K \subset V_H$ is a \emph{vertex cut
  sparsifier} of $G$ with \emph{quality} $q \geq 1$ if for any $\emptyset\neq S \subset K$,
\[ \mincut_G(S, K \setminus S) \leq \mincut_H(S, K \setminus S) \leq q \cdot
\mincut_G(S, K \setminus S). 
\]

Next we review two different constructions of vertex cut sparsifiers. Under the assumption that the input graph is minor-free, the first construction due to Moitra~\cite{moitra2011vertex} produces a sparsifier $H$ only on the terminals with $O(1)$ quality in time $\tilde{O}(km^2)$. We remark that w.l.o.g. we can assume that $H$ has only $\tilde{O}(k)$ edges. This can be achieved by running an edge cut-sparsifier~\cite{BenczurK96} on top of $H$, thus reducing the number of edges to $\tilde{O}(k/\varepsilon^2)$ while paying only a $(1+\varepsilon)$ multiplicative factor in the quality guarantee of the sparsifier, for any error parameter $\varepsilon$. Since the best-known algorithm to construct a vertex cut sparsifier requires $\Omega(km^{2})$ time, the $\tilde{O}(k^{2})$ term that comes from the construction of an edge-cut sparsifiers is dominated. These guarantees are summarized in the following lemma.

\begin{lemma}~\label{lemm:cutspar1}
Let $G=(V,E,c)$ be a capacitated minor-free graph with $m$ edges, where $K \subset V$ is a subset of terminals with $|K| = k$. Then there is an algorithm \textsc{CutSparsifyI}$(G,K)$ that returns a weighted graph $H$ on the terminals $K$ such that the following statements hold:
\begin{enumerate}
\item The graph $H$ has $\tilde{O}(k)$ edges.
\item The graph $H$ is an $O(1)$-quality vertex cut sparsifier of $G$.
\end{enumerate}
The total time for producing $H$ is $\tilde{O}(km^{2})$. 
\end{lemma}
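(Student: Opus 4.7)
The plan is to obtain the sparsifier in two stages: first produce a vertex cut sparsifier supported on $K$ using Moitra's construction, then apply an edge cut sparsifier on top to reduce the edge count, and finally verify that neither the quality bound nor the running time is spoiled.

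First I would invoke Moitra's vertex cut sparsifier construction for minor-free graphs, which takes $G$ and the terminal set $K$ and returns, in time $\tilde{O}(km^2)$, a weighted graph $H_0 = (K, E_{H_0}, c_{H_0})$ on exactly the terminal vertices that is an $O(1)$-quality vertex cut sparsifier of $G$. This already gives the approximation guarantee in item (2) of the lemma; the only issue is that Moitra's sparsifier can contain up to $\binom{k}{2}$ edges, so item (1) is not yet met.

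Next, I would apply the Benczur--Karger edge cut sparsification algorithm to $H_0$ with some small constant error parameter $\varepsilon_0 > 0$. This produces, in time $\tilde{O}(|E_{H_0}|) = \tilde{O}(k^2)$, a weighted subgraph $H$ of $H_0$ with only $\tilde{O}(k/\varepsilon_0^2) = \tilde{O}(k)$ edges such that for every subset $U \subseteq K$, $\capacity_H(\delta(U)) \in (1\pm \varepsilon_0)\,\capacity_{H_0}(\delta(U))$. Since Benczur--Karger preserves \emph{every} cut in $H_0$ to within a $(1\pm \varepsilon_0)$ factor, in particular it preserves the minimum $S$-separating cut for each nonempty $S \subsetneq K$. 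Composing the two guarantees, $H$ is still a vertex cut sparsifier of $G$ on terminals $K$ with quality $(1+\varepsilon_0) \cdot O(1) = O(1)$, establishing items (1) and (2) simultaneously.

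For the total time, Moitra's step dominates at $\tilde{O}(km^2)$, while the Benczur--Karger step adds only $\tilde{O}(k^2)$, which is absorbed since $k \leq m$ (we may assume this without loss of generality, discarding isolated terminals). I do not expect a serious obstacle here: the only point that requires care is the composition of approximation factors when passing from $H_0$ to $H$, namely that an edge cut sparsifier of a vertex cut sparsifier remains a vertex cut sparsifier with only a constant factor loss in quality. This follows directly because an $S$-separating cut in $H$ corresponds to an $S$-separating cut in $H_0$ on the same vertex set $K$, so $\mincut_H(S, K \setminus S)$ and $\mincut_{H_0}(S, K \setminus S)$ agree up to a $(1\pm\varepsilon_0)$ factor, which then composes with the $O(1)$ factor from Moitra's guarantee.
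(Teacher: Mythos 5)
Your proposal is correct and follows essentially the same route as the paper: apply Moitra's $O(1)$-quality vertex cut sparsifier for minor-free graphs in $\tilde{O}(km^2)$ time, then run a Benczur--Karger edge cut sparsifier on the resulting $k$-vertex graph to bring the edge count down to $\tilde{O}(k)$ at the cost of a further $(1+\varepsilon)$ factor, which is absorbed into the $O(1)$ quality bound, with the $\tilde{O}(k^2)$ edge-sparsification time dominated by Moitra's step.
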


Using the above vertex sparsification, we can obtain a dynamic algorithm for $O(1)$-approximating all-pairs max flow with a trade-off between the update time and query time as guaranteed in the first part of Theorem~\ref{thm:maxflow_1}.

If we are willing to lose a polylogarithmic approximation factor, then we can obtain a dynamic algorithm for all-pairs max flow with a better update/query trade-off as guaranteed in the second part of Theorem~\ref{thm:maxflow_1}. This is achieved by invoking another vertex cut sparsifier,  due to Peng~\cite{Peng16} and R\"acke et al.~\cite{RackeST14}, who showed how to bring down the construction time to $\tilde{O}(m)$ at the cost of obtaining only poly-logarithmic quality. %This is in contrast with the cut sparsification in Lemma~\ref{lemm:cutspar1} with constant quality while has much worse running time. %It is also interesting to note that the produced sparsifier shares the nice property of being a single tree with very few additional non-terminal nodes, and more importantly, such a tree-based sparsifier can be obtained for any input graph. 

\begin{lemma}~\label{lemm:cutspar2}
	Let $G=(V,E,c)$ be a capacitated graph, where $K \subset V$ is subset of terminals with $|K| = k$. Then there is an algorithm \textsc{CutSparsifyII}$(G,K)$ that returns a weighted tree $H$ with $K \subset V_H$  such that the following statements hold:
	\begin{enumerate}
		\item The graph $H$ has $O(k)$ vertices and $O(k)$ edges.
		\item The graph $H$ is an $O(\log^{4} n)$-quality vertex cut sparsifier of $G$.
	\end{enumerate}
	The total time for producing $H$ is $\tilde{O}(m)$. 
\end{lemma}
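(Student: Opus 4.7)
\textbf{Proof proposal for Lemma \ref{lemm:cutspar2}.} The plan is to combine two existing ingredients from the literature: a nearly linear time construction of a cut-preserving tree (hierarchical decomposition) with polylogarithmic quality, and a contraction/reduction step that compresses the tree down to size $O(k)$ without further degrading cut values.

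First, I would invoke the construction of R\"acke, Shah and T\"aubig, which, given a capacitated graph $G$ on $n$ vertices with $m$ edges, produces in $\tilde O(m)$ time a weighted tree $T$ on $V$ such that for every $U\subseteq V$ one has $\capacity_G(\delta(U))\le\capacity_T(\delta(U))\le O(\log^4 n)\cdot\capacity_G(\delta(U))$. (Peng's subsequent work gives essentially the same guarantee and is the one cited here.) Restricting the cut guarantee to $S$-separating cuts for $S\subseteq K$ immediately shows that $T$ is already an $O(\log^4 n)$-quality vertex cut sparsifier of $G$ with respect to $K$; the only remaining issue is that $T$ has $n$ vertices and $n-1$ edges rather than $O(k)$.

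Second, I would reduce $T$ to a tree on $O(k)$ vertices by a Steiner point elimination tailored to trees. On a tree, minimum $S$-separating cuts have a clean structure: they are determined by the collection of tree edges whose removal separates $S$ from $K\setminus S$, and the minimum cut value is a min-cut computed on the multiway-cut instance induced by $K$ on $T$. A standard reduction (repeatedly smooth away degree-$2$ non-terminals by replacing a path through a non-terminal by a single edge whose weight is the minimum along the path, and contract degree-$1$ non-terminals) yields a tree $H$ with $K\subseteq V_H$, $|V_H|=O(k)$ and $|E_H|=O(k)$ in which the minimum $S$-separating cut equals that of $T$ for every $\emptyset\neq S\subsetneq K$. (Replacing a path by its bottleneck edge preserves exactly the min-cut between any two endpoints of the path that remain as terminals, and the operation can be implemented in linear time via a traversal of $T$.) Since this step preserves terminal min-cuts exactly, composing the two bounds gives $\mincut_G(S,K\setminus S)\le\mincut_H(S,K\setminus S)\le O(\log^4 n)\cdot\mincut_G(S,K\setminus S)$, as required.

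The running time is dominated by the first step: the R\"acke--Peng tree is built in $\tilde O(m)$, and the tree contraction to a Steiner-reduced tree takes $O(n)$ time with a single bottom-up pass. The part I expect to need the most care is justifying that the degree-$2$ bottleneck-contraction on $T$ preserves \emph{every} terminal min-cut value (not just pairwise $s$-$t$ cuts), which follows from the fact that any $S$-separating cut in a tree decomposes into a disjoint union of bottleneck edges along $K$-to-$K$ paths; aside from this structural observation, the rest is bookkeeping and an invocation of the cited black-box.
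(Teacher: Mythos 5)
The paper does not actually prove Lemma~\ref{lemm:cutspar2}; it states it and cites Peng~\cite{Peng16} and R\"acke, Shah and T\"aubig~\cite{RackeST14} as a black box. Your proposal fills in a real gap: those works produce a cut-approximating tree on (essentially) all of $V$, not one with $O(k)$ vertices, so the Steiner-elimination step you describe is genuinely needed to obtain the lemma as stated. Your argument for it is correct. For the bottleneck contraction of a degree-$2$ non-terminal $v$ with neighbors $x,y$ and incident weights $w_1,w_2$: given a cut of the contracted tree in which $x$ and $y$ are separated (cost $\min(w_1,w_2)$ on the new edge), placing $v$ on the side of the heavier of its two original edges yields a cut of $T$ of the same cost; conversely, dropping $v$ from any cut of $T$ never increases the cost, since either at most one of $w_1,w_2$ is paid and then the new edge costs $\min(w_1,w_2)\le$ that, or both are paid (if $v$ was alone on its side) and the new edge costs nothing. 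Hence $\mincut(S,K\setminus S)$ is preserved for \emph{every} $\emptyset\neq S\subsetneq K$, which is exactly what you flagged as the delicate point. Deleting a degree-$1$ non-terminal is harmless because an optimal cut can always place it with its unique neighbor, and after these reductions the standard degree count gives at most $k-2$ remaining Steiner nodes, so $|V_H|\le 2k-2=O(k)$. Two small caveats worth recording: (i) you should state precisely the form of the black-box guarantee you use, namely that the R\"acke--Shah--T\"aubig / Peng tree $T$ satisfies $\capacity_G(\delta(U))\le \mincut_T(U,V\setminus U)\le O(\log^4 n)\,\capacity_G(\delta(U))$ for all $U\subseteq V$, from which the terminal-cut version follows by restricting to $U$ with $U\cap K=S$ on one side and projecting the optimal tree cut back to $V$ on the other; and (ii) \cite{RackeST14} as published gives $O(m^{1+o(1)})$ time, so the $\tilde O(m)$ bound the lemma (and your proof) claims is inherited from the way the paper, following~\cite{Peng16}, quotes the black box rather than something your reduction improves.
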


%Given the above lemma, we have the following theorem.
%\begin{theorem}~\label{thm:maxflow_2}
%Given a minor-free graph $G=(V,E,\vect{w})$ with positive edge weights, we can maintain a $O(\log^4 n)$-approximation to the all-pairs max-flows problem. The worst-case update time is $\tilde{O}(n^\xi r+r)$ for some arbitrarily small constant $\xi>0$ and the worst-case query time is $\tilde{O}(r+n/\sqrt{r})$.
%\end{theorem}

%In the following, we will focus on the proof of Theorem~\ref{thm:maxflow_1} using Lemma~\ref{lemm:cutspar1}. The proof of Theorem~\ref{thm:maxflow_2} is almost the same, except that we use Lemma~\ref{lemm:cutspar2} instead in our construction and analysis. We omit the details of the proof of Theorem~\ref{thm:maxflow_2}.

\subsection{Proof of Theorem~\ref{thm:maxflow_1}}
In the following, we prove Theorem~\ref{thm:maxflow_1}. We will focus on the first part of the theorem by using the cut sparsifiers as guaranteed in Lemma~\ref{lemm:cutspar1}. The proof of the second part is almost the same, except that we use Lemma~\ref{lemm:cutspar2} instead in our construction and analysis. We omit the details of the proof of the second part.
\paragraph{Data Structure.} As before, we will maintain an $r$-division $\mathcal{P} = \{P_1,\ldots, P_{\ell}\}$ of $G$ with $\ell =O(\frac{n}{r})$ such that $|V(P_i)|\leq r$ and $|\partial_G(P_i)|\leq O(\sqrt{r})$, which can be initially computed in time $\tilde{O}(n^{1+\xi})$ for any $\xi>0$ by applying Kawarabayashi and Reed's separator construction recursively as in Frederickson's algorithm~\cite{kawar2010,Fed87:shortest}. Note that $\sum_{i}|\partial_G(P_i)|\leq O(n/\sqrt{r})$. Next for each region $P_i$, we compute a graph $\widetilde{H}_i$ by invoking the algorithm \textsc{CutSparsifyI} in Lemma~\ref{lemm:cutspar1} with parameters $P_i, K = \partial_G(P_i)$. We let $\Ds(G)$ denote the resulting data structure of $G$. By Lemma~\ref{lemm:cutspar1}, the time to construct such a sparsifier is $\tilde{O}(|\partial_G(P_i)|\cdot |P_i|^{2})=\tilde{O}(r^{2}|\partial_G(P_i)|)$. Thus, the expected time to compute $\Ds(G)$ is \[\tilde{O}(n^{1+\xi} +  \sum_ir^{2}\cdot|\partial_G(P_i)|)=\tilde{O}(n^{1+\xi} + n\cdot r^{1.5})\]
This data structure $\Ds(G)$ will be recomputed every $T_{\divi}:=\Theta(n/r)$ operations to guarantee that the number of regions remains $O(n/r)$.

\paragraph{Handling Edge Insertions/Deletions.} We handle edge insertions and deletions almost the same as before, except that we replace the algorithm \textsc{ApproxSchur} by the algorithm \textsc{CutSparisfyI} when we resparsify the affected regions. Note that at any time, each region has at most $r$ vertices and each update requires only to resparsify a constant number of regions and for each such region $P$, it can be resparsified in $\tilde{O}(b|P|^2)=\tilde{O}(br^2)$ time, where $b:=O(\min\{r,\sqrt{r}+\frac{n}{r} \})$ is an upper bound on the number of boundary vertices of any region. Since we recompute the data structure every $\Theta(n/r)$ operations, the amortized update time is 
\[
\tilde{O}\left(\frac{n^{1+\xi} + n\cdot r^{1.5}}{n/r}+br^2\right)=\tilde{O}(n^\xi r+r^3).
\]

\paragraph{Handling Queries.} To answer the query \textsc{MaxFlow}$(s,t)$, similarly as before, we first form an auxiliary graph $H$ that is the union of the regions $P_s,P_t$ containing $s,t$, respectively, and the vertex cut sparsifiers of the remaining regions with respect to their boundary nodes. More formally, let $\mathcal{J}$ denote the index set of all the remaining regions, i.e., $\mathcal{J}=\{i:P_i\in\region\setminus \{P_s,P_t\}\}$. For each region $P_i$ such that $i\in\mathcal{J}$, as before, let $\widetilde{H}_i$ be the vertex cut sparsifier of $P_i$ that we have maintained. For any subgraph $P$ of $G$, we let $G_{P}$ denote the graph with the same vertex set $V(G)$ as $G$ and edge set $E(P)$. We let $K$ denote the set of all boundary vertices in any $P_i$ as well as vertices $\{s,t\}$. We will treat $K$ as the \emph{terminal set} of the graphs $G$ and $H$. We have the following useful lemma.

\begin{lemma}~\label{lem:union_maxflow}
Fix $\varepsilon \in (0,1)$. Let $G=(V,E,\vect{w})$ be some current graph and $s,t \in V$. Further, let the graph $H = P_s \cup P_t \cup \bigcup_{i\in \mathcal{J}} \widetilde{H}_i$ and the set $K$ be defined as above. Suppose that each $\widetilde{H}_i$ is $q$-quality vertex cut sparsifier of $P_i$. Then for any $S\subset K$, it holds that 
\[\mincut_G(S,K\setminus S)\leq \mincut_{H}(S,K\setminus S)\leq q\cdot \mincut_G(S,K\setminus S) \]
\end{lemma}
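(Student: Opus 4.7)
The plan is to establish both inequalities through a decomposability argument in the spirit of Lemma~\ref{lemm: decomposability}, using crucially that the $r$-division is an \emph{edge} partition so that local cut costs add up globally without double counting. First I would observe that $H$ is obtained from $G$ by keeping $P_s,P_t$ intact (they serve as trivial $1$-quality cut sparsifiers of themselves) and replacing each remaining region $P_i$ by its sparsifier $\widetilde{H}_i$ whose own terminals are $\partial_G(P_i)$; since $K = \{s,t\} \cup \bigcup_i \partial_G(P_i)$, for every region $P$ we have $K \cap V(P) \supseteq \partial_G(P)$, augmented by $s$ on $P_s$ and by $t$ on $P_t$.

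For the upper bound $\mincut_H(S,K\setminus S) \leq q \cdot \mincut_G(S,K\setminus S)$, I would take an optimum $S$-separating cut $(U^*, V\setminus U^*)$ in $G$ and restrict it to each region. For each $i \in \mathcal{J}$ this prescribes a partition of $\partial_G(P_i)$ into $S_i := U^* \cap \partial_G(P_i)$ and its complement, and the edges of $\delta(U^*)$ lying in $P_i$ form an $(S_i,\partial_G(P_i)\setminus S_i)$-separating cut of $P_i$, so their capacity is at least $\mincut_{P_i}(S_i,\partial_G(P_i)\setminus S_i)$. The $q$-quality cut sparsifier property then supplies a cut $(W_i, V(\widetilde{H}_i)\setminus W_i)$ with $W_i \cap \partial_G(P_i) = S_i$ whose $\widetilde{H}_i$-capacity is at most $q \cdot \mincut_{P_i}(S_i,\partial_G(P_i)\setminus S_i)$. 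Setting
\[
U_H := \bigl(U^* \cap (V(P_s)\cup V(P_t))\bigr) \cup \bigcup_{i\in\mathcal{J}} W_i,
\]
the prescribed boundary traces force the assignment to be consistent on every shared boundary vertex, and a direct check gives $U_H \cap K = U^* \cap K$, so $U_H$ is $S$-separating in $H$. Because the edges of $G$ and of $H$ are partitioned across regions, summing the local capacity bounds yields $\capacity_H(\delta(U_H)) \leq q \cdot \capacity_G(\delta(U^*))$, which is the desired inequality.

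The lower bound $\mincut_G(S,K\setminus S) \leq \mincut_H(S,K\setminus S)$ goes symmetrically: start from an optimum cut $U$ in $H$, record $S_i := U \cap \partial_G(P_i)$ for each $i \in \mathcal{J}$, and apply the left-hand inequality of the sparsifier definition to obtain a cut in $P_i$ with boundary trace $S_i$ and capacity at most that of $U$ restricted to $\widetilde{H}_i$; glue these cuts with $U$ restricted to $V(P_s)\cup V(P_t)$ and again sum across the edge partition. The main point to verify carefully is that the local cuts, chosen independently on different regions, fit together as a single global cut — this hinges on every boundary vertex that lies in several regions being assigned consistently on all of them, and it is guaranteed because each local choice is forced to match the global cut on $\partial_G(P_i)$. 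Once this consistency is observed, the additive breakdown of $\capacity_G(\delta(\cdot))$ and $\capacity_H(\delta(\cdot))$ across the edge partition finishes the argument with no further obstacle.
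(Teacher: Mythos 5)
Your proposal is correct, and it actually cleans up one step relative to the paper's own argument. The paper also decomposes over the edge-disjoint regions and uses the per-region sparsifier quality, but it does so by taking minimum $S$-separating cuts in \emph{both} $G$ and $H$ and then invoking a separate exchange argument to establish that the restriction of a global minimum cut to any region $P_i$ is itself a minimum $(S_i,\partial_G(P_i)\setminus S_i)$-separating cut of that region; only then can it equate the local restricted capacities with local $\mincut$ values and apply the sparsifier sandwich. Your route avoids this lemma entirely: you start from an optimal cut on one side, read off the boundary traces $S_i$, and \emph{construct} a competing cut on the other side by gluing local near-optimal cuts supplied by the sparsifier definition together with the untouched pieces on $P_s\cup P_t$. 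Because every shared boundary vertex's membership is dictated by $U^*\cap K$ (respectively $U\cap K$), the local choices are automatically consistent and the glued set is a genuine $S$-separating cut, so you only ever need the trivial direction ``$\mincut\le$ capacity of any $S$-separating cut'' rather than the harder ``restriction of a min cut is a local min cut''. Both arguments rely on the same two structural facts (the $r$-division is an edge partition, hence cut capacities add across regions; and all boundary traces of an $S$-separating cut are determined by $S$), and both reach the same conclusion, but yours is slightly more economical and easier to verify. One cosmetic point: as stated, $U_H$ contains $U^*\cap(V(P_s)\cup V(P_t))$, which already includes all boundary vertices of $P_s$ and $P_t$ that also lie in some $\partial_G(P_i)$; you are right that the $W_i$'s agree with this on the overlap, but it is worth saying explicitly that $W_i = S_i$ (up to complementation) whenever $V(\widetilde H_i)=\partial_G(P_i)$, so there is in fact no freedom to violate consistency.
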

\begin{proof}
We write $G=P_s\cup P_t\cup\bigcup_{i\in \mathcal{J}}P_i$. Let $S\subset K$ be any subset of terminal vertices. Consider a minimum $S$-separating cut $(U,V\setminus U)$ in $G$. Note that when restricting to the subgraph $G_{P_i}$, the cut $(U,V\setminus U)$ is also a minimum $S$-separating cut, for any $i\in \mathcal{J}\cup\{s,t\}$. This is true since otherwise for at least one $G_{P_j}$, there exists some other $S$-separating cut $(U',V\setminus U')$ such that the cost of $(U',V\setminus U')$ is smaller than the cut $(U,V\setminus U)$ and only interior vertices in $P_j$ can be contained in $(U\setminus U')\cup (U'\setminus U)$. The latter is true since there is no edge incident to interior vertices in any other region $P_i$ with $i\neq j$. As all edges in different regions are disjoint, this further implies that $(U',V\setminus U')$ is a valid $S$-separating cut in $G$ with smaller cost than $(U,V\setminus U)$, which is a contradiction. %Let $E(S,T)$ denote the $s-t$ min cut in $G$ such that $S\cup T= V(G)$, $S\cap T=\emptyset$ and $s\in S, t\in T$. 

Now let us consider the minimum $S$-separating cut $(\widetilde{U},\widetilde{V}\setminus \widetilde{U})$ in $H$, where $\widetilde{V}$ denote the vertex set of $H$. Again, for each $i\in \mathcal{J}\cup\{s,t\}$, the cut $(\widetilde{U},\widetilde{V}\setminus \widetilde{U})$ is also a minimum $S$-separating cut in $H_{P_i}$. Since for each $i\in \mathcal{J}$, $\widetilde{H}_i$ is a quality $q$ cut sparsifier of $P_i$, then we have that for each $i\in \mathcal{J}$,
\[\capacity_{G_{P_i}}(\delta(U))\leq \capacity_{H_{\widetilde{H}_i}}(\delta(\widetilde{U}))\leq q\cdot \capacity_{G_{P_i}}(\delta(U))\]
We also have that for $i\in \{s,t\}$, the minimum $S$-separating cut is preserved exactly from $G_{P_i}$ to $H_{P_i}$, that is,
\[\capacity_{G_{P_i}}(\delta(U))= \capacity_{H_{P_i}}(\delta(\widetilde{U})).\]

By the fact that the edges in different regions are disjoint, we have that 
\[
\capacity_G(\delta(U))=\sum_{i\in \mathcal{J}\cup\{s,t\}}\capacity_{G_{P_i}}(\delta(U)), \qquad
\capacity_H(\delta(\widetilde{U}))=\sum_{i\in \mathcal{J}\cup\{s,t\}}\capacity_{H_{\widetilde{H}_i}}(\delta(\widetilde{U})).
\]

Therefore,
\[
\capacity_G(\delta(U))\leq \capacity_H(\delta(\widetilde{U})) \leq q \capacity_G(\delta(U)),
\]
or equivalently,
\[\mincut_G(S,K\setminus S)\leq \mincut_{H}(S,K\setminus S)\leq q\cdot \mincut_G(S,K\setminus S). \]
\end{proof}

Note that the above lemma implies that the value of the $s-t$ min cut in $H$ is at least the value of $s-t$ min cut in $G$ and at most a $q$ factor of the value the $s-t$ min cut in $G$. 

Now we observe that at any time, the total number of regions is at most $O(\frac{n}{r})$, each region has size at most $r$, and the sum of the number of boundary vertices over all regions is at most $O(n/\sqrt{r})$. Therefore, the number of vertices and the number of edges in $H$ are both at most $O(r)+\tilde{O}(\sum_i |\partial_G(P_i)|)=\tilde{O}(r+n/\sqrt{r})$.

Finally, we invoke the $(1+\varepsilon)$-approximation algorithm for $s-t$ max flow~\cite{Peng16} on top of $H$ and output $f_H(s,t)$, by setting $\varepsilon$ to be some arbitrarily small constant. This algorithm runs in time $\tilde{O}(|V(H)|+|E(H)|)=\tilde{O}(r+n/\sqrt{r})$, and the resulting output is a $(1+\varepsilon)$-approximation of the value of the $s-t$ max flow of $H$. By Lemma~\ref{lem:union_maxflow} and the fact that $q=O(1)$, $f_H(s,t)$ is an $O(1)$-approximation of the value of the $s,t$-maximum flow of $G$. 

Therefore, the above algorithm has amortized update time $\tilde{O}(n^\xi r+r^3)$ and worst-case query time $\tilde{O}(r+n/\sqrt{r})$. Using the global rebuilding technique, described in Section~\ref{app: worstCase}, we can guarantee asymptotically the same worst-case update time as before. This completes the proof of the first part of Theorem~\ref{thm:maxflow_1}.

%Now if we apply the faster vertex sparsifier construction (with worse quality guarantee) given in Lemma~\ref{lemm:cutspar2}, we obtain another dynamic algorithm for maintaining all-pairs max flows. The amortized update time and worst-case query time will be ....., respectively. The returned value will be a $(log^4 n)$-approximation of the corresponding $s-t$ max flow in $G$.
\section{Dynamic All-Pairs Shortest Path in Minor-Free graphs}\label{sec:distance_minor}
In this section we show that by employing our algorithmic framework for dynamically maintaining the electrical flows in minor-free graphs, we can obtain a fully-dynamic algorithm for maintaining an approximation to the all-pairs shortest path problem in minor-free graphs while achieving sublinear update and query time. We will prove Theorem~\ref{thm:shortest_path}. We will maintain an $r$-division of the minor-free graph, and use as a building block \emph{vertex distance sparsifiers} of good quality that can be constructed efficiently.

%Our approach closely follows the ideas we used for the electrical flow problem in Section~\ref{sec: fullyPlanarEdge}. More specifically, the goal will be

%In the following, we will prove Theorem~\ref{thm:shortest_path}. We will maintain an $r$-division of the minor-free graph, and efficiently compute a sparse substitute graph that approximately preserves the distances among boundary vertices for each region. We call such a substitute a \emph{vertex distance sparsifier}. %Since the first task carries over (almost) unchanged, we first consider the second one.

\paragraph{Vertex Distance Sparsifier.} In the following we show an analogue of Lemma~\ref{lemm: VertexSpars} for distances, which applies to minor-free graphs.
\begin{lemma} \label{lemm: VertexDistanceSpars}
Fix $q \geq 1$. Let $G=(V,E,\vect{w})$ be a weighted minor-free graph with $n$ vertices. Let $K \subset V$ be a set of terminal vertices with $|K|=k$. Then there is an algorithm \textsc{DistanceSparsify}$(G,K,q)$ that returns a weighted graph $\widetilde{H}$ on the terminals $K$ such that the following statements hold:
\begin{enumerate}
\item The graph $\widetilde{H}$ has $O(qk^{1+1/q})$ edges.
\item Shortest path lengths among terminal pairs in $G$ and $\widetilde{H}$ are close, that is
\[
	\forall s,t \in K \quad d_G(s,t) \leq d_{\widetilde{H}}(s,t) \leq (2q-1) \cdot d_G(s,t).
\]
\end{enumerate}
The total \emph{expected} time for producing $\widetilde{H}$ is ${O}(n^{3/2} + k^2)$.
\end{lemma}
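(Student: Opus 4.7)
The plan is to reduce the problem to a classical graph spanner computation on an auxiliary graph defined on the terminals. First I would construct the \emph{terminal distance graph} $G_K$: a complete weighted graph on vertex set $K$ in which the weight of edge $(s,t)$ equals the shortest-path distance $d_G(s,t)$ in the original graph $G$. By construction, $d_{G_K}(s,t)=d_G(s,t)$ for every $s,t\in K$, so any subgraph $\widetilde H$ of $G_K$ will automatically satisfy the lower inequality $d_G(s,t)\le d_{\widetilde H}(s,t)$.

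Next, I would apply a standard $(2q-1)$-spanner construction (for instance Baswana--Sen) to $G_K$. Such a construction takes as input a weighted graph on $k$ vertices and $M$ edges and returns, in expected time $O(qM)$, a subgraph with $O(qk^{1+1/q})$ edges whose distances stretch those of the input by at most a factor of $2q-1$. On $G_K$, which has $M=\binom{k}{2}$ edges, this step runs in $O(qk^2)$ expected time, and composing the guarantees gives $d_{\widetilde H}(s,t)\le(2q-1)\,d_{G_K}(s,t)=(2q-1)\,d_G(s,t)$, as required. The claimed edge bound for $\widetilde H$ then follows directly from the spanner guarantee.

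It remains to construct $G_K$, which is the main bottleneck and the only step where the minor-free hypothesis is exploited. The plan here is to compute the terminal-to-terminal distance matrix by leveraging the recursive $O(\sqrt n)$-separator structure of $K_t$-minor-free graphs (Kawarabayashi--Reed combined with a Frederickson-style recursion, as used earlier in the paper for $r$-divisions). Concretely, I would build a recursive separator decomposition of $G$ and, along the decomposition, compute for each terminal its distances to the boundary vertices of each enclosing piece via Dijkstra on the induced reduced graphs; stitching together the resulting boundary-to-boundary and terminal-to-boundary distances yields the full matrix of pairwise terminal distances. A careful accounting that uses the $O(\sqrt n)$ boundary bound at each level of the recursion is what should give the $O(n^{3/2})$ running time, while the additive $O(k^2)$ accounts for writing down $G_K$ and for the spanner invocation above.

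The hard part will be this distance-computation phase: obtaining the clean $O(n^{3/2}+k^2)$ bound requires a separator-based multiple-source shortest-path routine tailored to minor-free graphs, rather than naive per-terminal Dijkstra (which would cost $\Omega(kn)$ and is worse whenever $k\gg\sqrt n$). The spanner step itself is essentially a black-box invocation once $G_K$ has been assembled, and the two approximation inequalities required by the lemma fall out immediately from combining the exact-distance property of $G_K$ with the multiplicative stretch of the spanner.
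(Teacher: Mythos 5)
Your two-phase plan — build the exact terminal distance graph $G_K$, then run Baswana--Sen to get a $(2q-1)$-spanner with $O(qk^{1+1/q})$ edges in $O(k^2)$ expected time — is exactly the structure of the paper's proof, and your handling of the spanner phase and the two stretch inequalities is correct. The divergence is entirely in how $G_K$ is obtained, and that is precisely the step you yourself flag as unresolved: ``a careful accounting \ldots is what should give the $O(n^{3/2})$ running time.'' The paper does not compute $G_K$ via separator-based multi-source Dijkstra and stitching. Instead it eliminates the $n-k$ non-terminals one at a time by a star-mesh (clique-replacement) transform: delete a non-terminal $v$ and connect every pair of its current neighbors $u_1,u_2$ by an edge of weight $w(u_1,v)+w(v,u_2)$, keeping the minimum over parallel edges. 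This operation preserves all terminal distances exactly, so after eliminating every non-terminal one obtains the complete graph $G_K$. The cost of the whole elimination sequence is controlled by choosing the elimination order from a nested-dissection (Lipton--Rose--Tarjan) decomposition built from the $O(\sqrt{n})$ separators available for $K_t$-minor-free graphs (via Kawarabayashi--Reed), and the classic nested-dissection fill-in bound gives $O(n^{3/2})$ total work. This is both cleaner and more concrete than the MSSP-and-stitching route: you do not need to argue how boundary-to-boundary and terminal-to-boundary tables compose across recursion levels, and you avoid the $\log$-factor overhead that per-level Dijkstra calls would introduce (your sketch would naturally give $\tilde O(n^{3/2})$, slightly weaker than the stated $O(n^{3/2})$). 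So the approach is right, but to make the lemma go through as stated you should replace the hand-wavy distance-computation phase with the elimination-plus-nested-dissection argument.
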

\begin{proof}[Proof Sketch]
Our algorithm consists of two subroutines. First, for the input graph $G$, we compute a complete graph $H$ on the terminals $K$ that preserves exactly all-pair shortest path lengths among terminals. Second, we compute a \emph{$(2q-1)$-spanner} $\widetilde{H}$ of $H$, which is a subgraph of $H$ with $O(q\cdot k^{1+1/q})$ edges and preserves all the pairwise distances of $H$ within a factor of $2q-1$, for any $q\geq 1$ (see~\cite{awerbuch1985}). Note that for any $s,t \in K$, $d_G(s,t) = d_H(s,t)$, and that $d_H(s,t) \leq d_{\widetilde{H}}(s,t) \leq (2q-1) \cdot d_H(s,t)$. This proves the first and second statements of the lemma. 

Now we bound the running time of the algorithm. To construct $H$ from $G$, we repeatedly pick a non-terminal $v \in V \setminus K$ and replace $v$ by a clique with appropriate edge weights among its neighbours. If this operation produces parallel edges, only the edge with the smallest weight is kept. It is not hard to see that all-pairs terminal distances are not affected by this operation and the final graph obtained after $n-k$ operations is a complete graph only on the terminals. For a more detailed description of this procedure, we refer the reader to~\cite{HenzingerKRS97}. Now, since minor-free graphs admit balanced separators of size $O(\sqrt{n})$ (and those can be found in $\tilde{O}(n^{1+\xi})$ time, for any $\xi > 0$~\cite{kawar2010}), we can use the nested-dissection algorithm due to Lipton, Rose and Tarjan~\cite{LiptonRT79} to finish all these $n-k$ operations in $O(n^{3/2})$ time. To construct $\tilde{H}$ from $H$, we apply the (expected) linear time algorithm for constructing a $(2q-1)$-spanner by Baswana and Sen~\cite{BaswanaS07}, on the graph $H$, which takes time $O(k^2)$ in expectation, since $H$ has $O(k^2)$ edges. Therefore, the total expected time for constructing $\widetilde{H}$ is $O(n^{3/2} + k^2)$.
\end{proof}

\subsection{Proof of Theorem~\ref{thm:shortest_path}}
Now we are ready to prove Theorem~\ref{thm:shortest_path}. We first describe the data structure and the corresponding dynamic algorithm.
\paragraph{Data Structure.} We will maintain an $r$-division $\mathcal{P} = \{P_1,\ldots, P_{\ell}\}$ of $G$ with $\ell = O(n/r)$. %such that $\sum_{i}|\partial_G(P_i)|^2 \leq O(n\sqrt{r})$. Note that here we have a slightly different requirement on the size of boundary. 
As before, we initialize the data structure by combining the $\tilde{O}(n^{1+\xi})$ time algorithm for computing a separator of size $O(\sqrt{n})$ for minor-free graphs~\cite{kawar2010} and Frederickson's algorithm~\cite{Fed87:shortest}, for any $\xi>0$, which outputs an $r$-division $\{P_1,\cdots, P_\ell\}$ with at most $O(n/r)$ regions such that each region has at most $r$ vertices and at most $O(\sqrt{r})$ boundary vertices. This also implies that $\sum_{i}|\partial_G(P_i)|\leq O(n/\sqrt{r})$, and that $\sum_{i}|\partial_G(P_i)|^2 \leq r \cdot \sum_{i}|\partial_G(P_i)|= O(n\sqrt{r})$ after initialization. The running time for finding such an $r$-division is still bounded by $\tilde{O}(n^{1+\xi})$.

Next for each region $P_i$, we compute a graph $\widetilde{H}_i$ by invoking the algorithm \textsc{DistanceSparsify} in Lemma~\ref{lemm: VertexDistanceSpars} with parameters $P_i, K = \partial_G(P_i)$ and $q$, where $q \geq 1$. We let $\Ds(G)$ denote the resulting data structure of $G$. By Lemma~\ref{lemm: VertexDistanceSpars}, the expected time to construct such a sparsifier is $\tilde{O}(\sum_{i=1}^{\ell} (r^{3/2} + k_i^{2}))$, where $k_i = |\partial_G(P_i)|$. Thus, the expected time to compute $\Ds(G)$ is 
\[
\tilde{O}(n^{1+\xi} + \sum_{i=1}^{\ell} (r^{3/2} + k_i^{2}))
=
\tilde{O}(n^{1+\xi}+\frac{n}{r}\cdot r^{3/2}+n\sqrt{r}) 
=
\tilde{O}(n^{1+\xi}+n\sqrt{r}).
\]

%Since minor-free graphs admit balanced separators of size $\sqrt{n}$ and those can be found in $\tilde{O(n^{1+\xi})}$ time, for any $\xi >0$, we get that the running time for finding an $r$-division is also bounded by $\tilde{O}(n^{1+\xi})$. In addition, by Lemma~\ref{lemm: VertexDistanceSpars}, it follows that the expected time to find the sparsifiers is $\tilde{O}(\sum_{i=1}^{\ell} (r^{3/2} + k_i^{2}))$, where $k_i = |\partial_G(P_i)|$. Thus, the expected time to compute the data structures is \[\tilde{O}(n^{1+\xi} + \sum_{i=1}^{\ell} (r^{3/2} + k_i^{2})).\]

This data structure $\Ds(G)$ will be recomputed every $\Theta(n/r)$ operations. We next describe how to to handle edge insertions/deletions and then discuss the query operation.

\paragraph{Handling Edge Insertions/Deletions.} Edge insertions and deletions are handled in a similar way as in Section~\ref{sec: fullyPlanarEdge}, expect that here we will need to recompute a constant number of distance sparsifiers (rather than approximate Schur Complements) for each update. By Lemma~\ref{lemm: VertexDistanceSpars}, each distance sparsification can be done in time $\tilde{O}(r^{3/2}+|K|^2)$, where $K$ is the boundary vertex set of the corresponding region. Note that each update only increase by a constant the total number of regions and the total number of boundary vertices, and the size of each region after every update is at most $r$. Since the data structure will be recomputed from scratch every order $n/r$ operations, this will ensure that throughout the sequence of operations, the total number of regions is at most $O(n/r)$, and $\sum_i|\partial(P_i)|\leq O(n/\sqrt{r})$ and $\sum_i|\partial(P_i)|^2\leq r\cdot \sum_i|\partial(P_i)| =O(n\sqrt{r})$. Furthermore, at any time, a region can have at most $b:=O(\min\{r,\sqrt{r}+\frac{n}{r}\})$ boundary vertices. Thus, a call to \textsc{DistanceSparsify} will run in time $\tilde{O}(r^{3/2}+b^2)$.%Furthermore, each region  

%boundary vertices is $O(n/\sqrt{r})$ as long as $r^{3/4} \leq c \cdot n/\sqrt{r}$, for some $c>0$, or equivalently $r = O(n^{4/7})$. 

Amortizing the initialization over $\Theta(n/r)$ operations gives that the expected amortized update time per operation is
\[
	%\tilde{O}\left(\frac{n^{1+\xi} + \sum_{i=1}^{\ell} (r^{3/2} + k_i^{2})}{r^{3/4}} + (r^{3/2} + k_i^2)\right) 
\tilde{O}\left(\frac{n^{1+\xi}+n\sqrt{r}}{n/r} + r^{3/2}+b^2\right)=\tilde{O}(n^{\xi}r+r^{3/2}+b^2).%=\tilde{O}(n^{\xi}r+r^{2}).
\]
%Now observe that choosing $r=n^{4/7}$, we satisfy the condition that $r = O(n^{4/7})$, and the above expression gives an expected update time of $\tilde{O}(n^{6/7})$. 
\paragraph{Handling Queries.} To answer the query \textsc{ShortestPath}$(s,t)$, similarly to Section~\ref{sec: fullyPlanarEdge}, we first form an auxiliary graph $H$ that is the union of the regions $P_s,P_t$ containing $s,t$, respectively, and the vertex distance sparsifiers of the remaining regions with respect to their boundary nodes. Then, we run a single-source shortest path algorithm from $s$ on top of $H$ and output $d_H(s,t)$. Since each $s-t$ shortest path that goes through other regions (different from $P_s,P_t$) must use the corresponding boundary vertices, and the boundary vertex pairwise distances are preserved within a factor $(2q-1)$ in $H$, we can guarantee that the output $d_H(s,t)$ is a $(2q-1)$-approximation to the $s-t$ shortest path in the current graph $G$, for any $q \geq 1$.

%Using decomposability of spanners, it is easy to see that the output value is a 

To bound the query time, we need to bound the size (or the number of edges) of $H$. Note that the size of the regions that contain $s$ and $t$ are bounded by $O(r)$. Note that throughout the algorithm we always ensure that each region has at most $b$ boundary nodes and its distance sparsifier has at most $O(qb^{1+1/q})$ edges by construction. Since there are at most $O(n/r)$ regions, it follows that the size of the union over the vertex distance sparsifiers is bounded by $\tilde{O}(\frac{n}{r} \cdot q b^{1+1/q})$. Thus, 
\[
	|E(H)| \leq \tilde{O}\left(\frac{n}{r} \cdot q b^{1+1/q} + r\right).
\]
Since the single-source shortest path algorithm runs in $\tilde{O}(|E(H)|)$ time, by choosing $r=n^{4/7}$, we get that $b=n^{3/7}$, the amortized update time is $\tilde{O}(n^{6/7})$, and the worst-case query time is $\tilde{O}(qn^{\frac{1}{7}(6 + 3/q)})$, for any $q \geq 1$. By using the global rebuilding technique, we can again guarantee the worst-case expected update time. This finishes the proof of Theorem~\ref{thm:shortest_path}.

\bibliography{literature}

%% .. or use the thebibliography environment explicitely

\end{document}